\newenvironment{keywords}
{
\begin{center}
\textbf{Keywords}\\
\vspace{0.17cm}
\begin{minipage}{14.5cm}}
{\footnotesize
\end{minipage}
\end{center}}
\newenvironment{Abstract}
{
\begin{center}
\textbf{Abstract}\\
\vspace{0.25cm}
\begin{minipage}{14.5cm}}
{\footnotesize
\end{minipage}
\end{center}}
\newcommand{\Ima}{\operatorname{Im}}
\newcommand{\mail}[1]{\small\href{mailto:#1}{#1}}
\newtheorem{theorem}{Theorem}[subsection]
\newtheorem{proposition}{Proposition}[subsection]
\newtheorem{lemma}{Lemma}[subsection]
\theoremstyle{definition}
\newtheorem{example}{Example}[subsection]
\newtheorem{Def}{Definition}[subsection]
\newtheorem{remark}{Remark}[subsection]
\begin{document}
\title{\huge A description of classical field equations using extensions of graded Poisson brackets}

\author{Manuel de León \orcidlink{0000-0002-8028-2348} \\ \mail{mdeleon@icmat.es}}
\affil{Instituto de Ciencias Matemáticas, Campus Cantoblanco, Consejo Superior de Investigaciones Científicas, C/Nicolás Cabrera, 13–15, Madrid 28049, Spain}
\affil{Real Academia de Ciencias Exactas, Físicas y Naturales de España, C/Valverde, 22, Madrid 28004, Spain}

\author{ Rubén Izquierdo-López \orcidlink{0009-0007-8747-344X} \\ \mail{ruben.izquierdo@unir.net}}
\affil{Escuela Superior de Ingeniería y Tecnología, Universidad Internacional de La Rioja, Logroño, Spain}
\affil{Instituto de Ciencias Matemáticas, Campus Cantoblanco, Consejo Superior de Investigaciones Científicas, C/Nicolás Cabrera, 13–15, Madrid 28049, Spain}
\date{\today}

\maketitle

\begin{Abstract}
As it is well-known, Poisson brackets play a fundamental role both in mechanics and in classical field theories. In this paper we develop a theory of extensions of graded Poisson brackets in graded Dirac manifolds. We then show how these extensions can be used to obtain the field equations of a particular theory as well as the evolution of forms of arbitrary order, in a similar way that ordinary Poisson brackets provide in mechanics.

\end{Abstract}

\begin{keywords}
Graded Poisson brackets, Classical field theories, Graded Dirac structures, Multisymplectic geometry
\end{keywords}

\tableofcontents

\section{Introduction}
\label{section:introduction}

The finite dimensional geometric setting for studying classical field theories is multisymplectic geometry, as has been established since the 1970s (\cite{kijowski_symplectic_1979, garcia_poincare-cartan_1974, goldschmidt_hamilton-cartan_1973}), just as symplectic geometry allowed the important developments of Hamiltonian dynamics. However, unlike in symplectic mechanics, in classical field theories we deal with differential forms of degree greater than 2, which heavily difficults its study (for instance, forms of higher degree lack a Darboux theorem, see \cite{gdrr2024rev.realacad.cienc.exactasfis.nat.ser.a-mat., ryvkin2025darbouxtypetheoremsmultisymplectic} for recent surveys on the matter). In particular, this allows us to introduce a \textit{graded Poisson bracket}.

As far as we know, these brackets were first defined in \cite{cantrijn_hamiltonian_1996}, just as a continuation of the study of abstract multisymplectic manifolds carried in \cite{cantrijn_geometry_1999}. An important property of these brackets, already discussed in the work by Cantrijn, Ibort and de León, is that it induces a graded Lie algebra structure on $\Omega_H(M)/ \dd \Omega(M)$, the space of Hamiltonian forms modulo exact forms. The study of these brackets is far from being finished; indeed, a lot of recent research has been devoted to understand its interplay with field equations
\cite{castrillon_lopez_remarks_2003, gay-balmaz_new_2024, Marco_PoissonPoincare, kanatchikov_field_1997, Kanatchikov_1998}. Up to now, this study is usually developed for particular cases, without identifying the essential properties of these brackets (as is usually done in Poisson geometry, for instance).

One possible approach for this generalization is that of $L_\infty$-algebras. In \cite{baez_categorified_2010}, carrying a careful study of the exact terms present in the Jacobiator, Baez, Hoffnung and Rogers identify that for the case of a multisymplectic manifold of order $2$ (the case where the form has order $3$), the graded Poisson bracket induces what is known as a Lie-$2$-algebra. This result was later  generalized to arbitrary multisymplectic manifolds in \cite{rogers_l_infty-algebras_2012}, proving that the Poisson bracket defines an $L_\infty$-algebra. However, in order to obtain this structure, one must particularize the bracket at the highest order of forms, only studying the Jacobiator when the forms have degree $n -1$, where $n+1$ is the degree of the multisymplectic form. A more complete study is developed in \cite{delgado2018multisymplecticstructureshighermomentum}, where the defect of the Jacobi identity is studied for forms of arbitrary degree. However, $L_\infty$ algebras lack a Leibniz identity, a property which is fundamental in classical mechanics and classical field theory.

In this vein, we recently study graded Poisson brackets on their own \cite{de_leon_graded_2025}, defining a graded Poisson bracket as a graded-skew-symmetric bracket on a suitable subfamily of differential forms, called Hamiltonian, that is required to satisfy a graded Jacobi identity up to an exact term, and two extra properties relating the bracket to exterior products and contractions, representing a graded version for the Leibniz identity, and invariance 
by symmetries, respectively. The first two  properties recover the usual $L_\infty$ algebra, and taking into consideration the last two has an important implication: the bracket is completely determined by a tensorial object. A manifold equipped with this ``tensor'' is called a \textit{graded Dirac manifold} (or \textit{graded Poisson manifold}, depending on the existence of geometric gauge as defined in \cite{gaset2022geometricgaugefreedommultisymplectic, AIHPA_1979__30_2_129_0}). In the present  paper, we continue the investigation of these structures and, in particular, a definition of dynamics in terms of these brackets is given.

Some preliminary results in this direction can be found, for instance, in \cite{kanatchikov_field_1997, Marco_PoissonPoincare, Kanatchikov_1998}, where the Poisson brackets are used to define dynamics, study evolution of observables, and reduce by symmetries. Nevertheless, the brackets presented and \textit{the Hamiltonian} are dependent on a choice of volume form or coordinates (more generally, a connection).

More recently, in \cite{gay-balmaz_new_2024}, Gay-Balmaz, Marrero and Martínez-Alba developed a bracket description of the dynamics which is independent of these choices. In the multisymplectic description of field theories, fields are interpreted as sections of a bundle $Y \longrightarrow X$, called the configuration bundle, where $X$ is the base manifold, usually representing spacetime. In this setting, the Hamiltonian is interpreted as another section of a particular line bundle (see \cite{cci1991diff.geom.appl.}):
\[h: \bigwedge^n_2 Y\big / \bigwedge^{n}_1 Y \longrightarrow \bigwedge^n_2 Y\,.\] The bracket that the authors introduce is of the form $\{\alpha, h\}$, where $\alpha$ is a Hamiltonian $(n-1)$-form and it yields a semi-basic $n$-form over the base manifold $X$. This bracket is defined so that it computes the evolution of $(n-1)$-forms, in fact, a section $\psi: X \longrightarrow \bigwedge^{n}_2Y\big / \bigwedge^n_1 Y$ will be solution of the Hamilton--De Donder--Weyl equations determined by $h$ if and only it satisfies 
\[\psi^\ast (\dd \alpha) = \{\alpha, h\} \circ \psi\,, \quad \forall \alpha \in \Omega^{n-1}_H\left (\bigwedge^n_2 Y \big / \bigwedge^n_1 Y\right)\,.\] Linearity of this bracket is replaced by it being affine (as the space of Hamiltonians is affine), and the Jacobi identity is substituted by it defining an affine representation. The introduction of the previous operation raises some important questions:
\begin{enumerate}[\rm (i)]
    \item The fact that the Hamiltonians are defined as sections of certain bundle presents some technical difficulties. The first one is how can one approach reduction by symmetries using such Hamiltonians. Would the reduced Hamiltonian be a different section of a different line bundle?
    \item It is the case that the usual Poisson bracket of Hamiltonian $(n-1)$-forms recovers the classical Poisson bracket of the (infinite) dimensional instantanous formalism via integration against a space-like slicing. How does this correspondence translate with the previous extension of the bracket?
    \item In order to identify the correct structures that allow for reduction, quantization and so on, could one define such a bracket in general?
\end{enumerate}
Although these are questions that one may in principle be able to answer in terms of the Hamiltonian section, we find that these are more naturally approached if the Hamiltonian is thought of as an $n$-form. Indeed, there is a one to one correspondence between sections of the previous line bundle and the family of $n$-forms $\Theta_h = h^\ast \Theta$, where $\Theta$ is the multisymplectic potential (also called the Liouville form) on $\bigwedge^n_2 Y$. Thus, if one identifies the Hamiltonian with the previous $n$-form $\mathcal{H} := \Theta_h$ the bracket reads as $\{\alpha, \mathcal{H}\}$, where now every object is a suitable differential form on $\bigwedge^n_2 Y\big / \bigwedge^n_1 Y$.

The first and main objective of the present paper is to carry this interpretation to general graded Dirac manifolds, so that both the Hamiltonian and the observables are in equal footing, that is, interpreted as the same kind of object (differential forms of possibly different degree). The usual graded Poisson bracket is defined for Hamiltonian forms of degree between $0$ and $n - 1$, so we focus on extending these bracket to arbitrary dregree, maintaining as much properties as possible. As a result, we obtain a family of possible extensions (dependent on a choice of extensions of the $\sharp$-morphisms) and we shall obtain a bracket
\[\Omega^{a}_H(M)  \otimes \Omega^{b}_H(M) \longrightarrow \Omega^{a+b- (n-1)}(M)\,,\]
where $a$ and $b$ are no longer confined to $0 \leq a, b \leq n -1$. The restriction of this bracket to the usual Hamiltonian forms and a particular subclass of Hamiltonian $n$-forms will yield the usual equations of motion as above. 


Regarding the mathematical construction, extensions of Poisson brackets have been studied in symplectic geometry in \cite{Michor_extensions} by Michor and in general Poisson geometry in \cite{grabowski_z-graded_1997} by Grabowski. It should be noted that the bracket defined in the second cited paper is a modification of the brackets treated in the first, this modification is performed in order to obtain a Jacobi identity (not up to an exact term). We find more suitable for our purposes to generalize the one introduced by Michor, since it is of first order. Nevertheless, both constructions are built upon a generalization of the $\sharp$ mapping, and we follow this strategy closely.

When the extensions of the brackets are finished, we use them to study the field equations and evolution of forms of arbitrary order, an aspect which was missing in previous versions of the literature. In particular, we would like to highlight the following results:
\begin{enumerate}[\rm (i)]
    \item In Theorem \ref{thm:Properties_of_first_extension}, we build a first extension of the brackets and prove that it still satisfies the main properties of the original one. Afterwards, we give sufficient conditions on the chosen extension of the $\sharp$ mapping for the final extension of the brackets to satisfy the same properties in Theorem \ref{thm_properties_final_extension}.
    \item In Theorem \ref{thm:evolution_of_arbitrary_forms} we prove that for a given extension of the brackets, the Poisson bracket of a Hamiltonian form $\alpha \in \Omega^a_H(M)$ against the Hamiltonian $\mathcal{H} \in \Omega^n_H(M)[n]$ yields a possible evolution of $\alpha$, in the sense that there is an Ehresmann connection $h$ that solves the equations determined by $\mathcal{H}$ (equations that do not depend on the extension) and such that
    $
    h^\ast(\dd \alpha) = \dd \alpha + \{\alpha, \mathcal{H}\}\,.
    $
    \item In Theorem \ref{thm:Characterization_of_fibered_Extensions}, we prove that there is a one to one correspondence between all possible extensions of the brackets to Hamiltonians compatible with the fibered structure and affine maps $\gamma:\{\text{Hamiltonians}\} \longrightarrow \{\text{connections}\}$ such that $\gamma(\mathcal{H})$ solves the equations determined by $\mathcal{H}$.
    \item Finally, in Theorem \ref{thm:determined_evolution} we identify a subfamily of forms, that does not depend on the choice of Hamiltonian, which we call \textit{special Hamiltonian forms}, whose evolution is defined for any solution of the Hamilton--De Donder--Weyl equations so that, in particular, the bracket against $\mathcal{H}$ does not depend on the choice of extension, making them candidates for being conserved quantities, namely closed forms on solutions (such forms have gained interest in theoretical physics recently, see \cite{gomes_introduction_2023}, for instance). Under certain hypotheses, which certainly hold in the examples of this paper, this subfamily is proved to be a subalgebra under the Poisson bracket in Theorem \ref{thm_sufficient_condition_subalgebra}.
\end{enumerate}

The paper is structured as follows. Section \ref{section:fibered_graded_Dirac} is devoted to study graded Dirac structures on fibered manifolds, and how these structures restrict to submanifolds and quotient manifolds. In Section \ref{section:extensions}, we develop the main ingredients of our study, the possibility of extending the natural brackets on forms to any arbitrary order. So, in Section \ref{section:dynamics} we can apply the results in the previous section to discuss dynamics on graded Dirac structures on fibered manifolds. In Section \ref{section:Applications} we show how our theory can be used to study classical field theories by providing two examples: regular Lagrangians and Yang--Mills theories. Finally, Section \ref{section:conclusions} is devoted to summarize the main conclusions and indicate some further work.

\begin{center}{\bf Notation and conventions}
\end{center}
We will use these notations an conventions throughout the paper; to facilitate the reading, they are collected here.
\begin{enumerate}
    \item All manifolds are assumed $C^\infty$-smooth and finite dimensional.
    \item Einstein's summation convention is assumed throughout the text, unless stated otherwise.
    
    \item $\bigvee_p M$ denotes the vector bundle of $p$-vectors on $M$, $\bigwedge^p TM.$
    \item $\mathfrak{X}^p(M) = \Gamma\left(\bigvee_p M\right)$ denotes the space of all multivector fields of order $p$.
    \item $[\cdot, \cdot]$ denotes the Schouten-Nijenhuis bracket on multivector fields. We use the sign conventions of \cite{Marle1997}.
    \item $\bigwedge^aM$ denotes the vector bundle of $a$-forms on $M$, $\bigwedge^a T^\ast M.$
    \item $\Omega^a(M) = \Gamma\left (\bigwedge^a M\right )$ denotes the space of all $a$-forms on $M$.
    \item When dealing with multivector fields and differential forms using coordinates, we employ the alternate Kronecker delta notation:
    \[
    \delta^{\mu_1, \dots, \mu_p}_{\nu_1, \dots, \nu_p} = \begin{cases}
        (-1)^{\sigma}  & \text{if } \sigma(\mu_1) \dots\sigma(\mu_p) = \nu_1 \dots \nu_p, \text{ for some permutation } \sigma\\
        0 & \text{otherwise}
    \end{cases}\,.
    \]
    Furthermore, we employ the multi-index notation, which we will denote as $\widetilde \mu = \mu_1, \dots, \mu_p$. When multi-indices repeat as super-index and sub-index, summation is assumed for \textit{ordered} multi-index. For instance, $\dd^{a} x^{\widetilde \mu} \otimes \pdv{^a}{^ax^{\widetilde \mu}}$ denotes the identity on $\bigwedge^a M \otimes \bigvee_a M$, rather than $\frac{1}{a!}\dd^{a} x^{\widetilde \mu} \otimes \pdv{^a}{^ax^{\widetilde \mu}}$, which would be if summation was assumed for all multi-indices.
    \item $\pounds_U \alpha = \dd{\iota_U\alpha} - (-1)^p \iota \dd{\alpha}$, for $U \in \mathfrak{X}^p(M)$,
    $\alpha \in \Omega^a(M)$ denotes the Lie derivative along multivector fields. 
    For a proof of its main properties, we refer to \cite{Forger2003a}.
    \item For a subbundle $K \subseteq \bigvee_p M,$ and for $a \geq p,$ we denote by $$K^{\circ, a} = \{\alpha \in \bigwedge^a M: \iota_K \alpha = 0\}$$ the annihilator of order $a$ of $K$.
    \item Similarly, for a subbundle $S \subseteq \bigwedge^a M,$ and $p \leq a$ we denote by 
    $$S^{\circ, p} = \{U \in \bigvee_p M: \iota_U S = 0\}$$ the annihilator of order $p$ of $S$.
\end{enumerate}

\label{section:fibered_graded_Dirac}

\section{Graded Dirac structures on fibered manifolds}
In this section we give the basic definitions and constructions on graded Dirac manifolds and develop the notion of fibered graded Dirac manifolds, the setting that we will use to study classical field theories.

\subsection{Graded Dirac structures}

Let $M$ be a smooth manifold and suppose that we have a vector subbundle $S^n$ of $\bigwedge^n M$, the bundle of $n$-forms over $M$,
\[\begin{tikzcd}
	{S^n} & {\bigwedge^n M} \\
	M
	\arrow[hook, from=1-1, to=1-2]
	\arrow[from=1-1, to=2-1]
	\arrow[from=1-2, to=2-1]
\end{tikzcd}.\]
such that the subsequent contractions of $S^n$ by $\T M$,
\[
S^{n-1}:= \iota_{\T M} S^n \,, \quad S^{n - 2}:= \iota_{\T M} S^{n-1}\,,\quad \dots \quad S^1 := \iota_{\T M} S^2\,.
\] define vector subbundles of $\bigwedge^{n-1} M, \bigwedge^{n - 2}M, \dots, \T^\ast M,$ respectively. Here we are making a slight abuse of notation denoting 
\[\iota_{\T M} S^n = \langle \iota_u \alpha  : u \in \T M, \alpha \in S^n \rangle\,,\quad  \iota_{\T M} S^{n-1} = \langle \iota_u \alpha  : u \in \T M, \alpha \in S^{n-1} \rangle\,, \dots\]

\begin{Def}[Hamiltonian form] Let $U$ be an open subset of $M$. A \textbf{Hamiltonian form} of degree $a$ on $U$, $0 \leq a \leq n - 1$ with respect to the sequence $S^1, \dots, S^n$ determined by $S^n$ is a differential form $\alpha \in \Omega^a(M)$ such that $\dd \alpha$ takes values in $ S^{a+1}$. The space of Hamiltonian forms is denoted as $\Omega^a_H(U, S^n)$. If there is no danger confusion, we will just write $\Omega^a_H(U)$ instead of $\Omega^a(U, S^n)$.
\end{Def}

Let $\alpha$ be a differential form of arbitrary degree. Define 
\[ \deg_H \alpha := (n - 1) - \deg \alpha.
\] 
We now define our main object of study (see \cite{de_leon_graded_2025} for further details):

\begin{Def}[Graded Poisson bracket] 
\label{def:graded_Poisson_bracket}
A \textbf{graded Poisson bracket of order $n$} on $M$ is a bilinear bracket defined for every open subset $U$ of $M$
\[
\Omega^a_H(U) \otimes \Omega^b_H(U) \xrightarrow{\{\cdot, \cdot\}_U} \Omega^{a+ b - (n - 1)}_H(U)\,,
\]
for $0 \leq a, b \leq n - 1$ that satisfies the following properties. For every $\alpha \in \Omega^a_H(M), \beta \in \Omega^b_H(M)$ and $\gamma \in \Omega^c_H(M)$ we have

\begin{enumerate}[i)]
        \item \textit{It is graded:} $$\deg_H \{\alpha, \beta\}_U = \deg_H \alpha + \deg_H \beta;$$
        \item \textit{It is graded-skew-symmetric:} $$\{\alpha, \beta\}_U = -(-1)^{\deg_H \alpha \deg_H \beta} \{\beta,\alpha\}_U;$$
        \item \textit{It is local:} If $\dd{\alpha} |_x = 0,$ then $\{\alpha, \beta\}|_x = 0$;
        \item \textit{It satisfies a graded Leibniz identity:} Let $\beta^j \in \Omega^{b-1}_H(U), \gamma_j \in \Omega^{c-1}_H(U).$ If $\beta^j \wedge \dd{\gamma}_j \in \Omega^{b+c - 1}_H(U),$ then we have
        $$\{ \beta^j \wedge \dd{\gamma_j}, \alpha\}_U = \{\beta^j, \alpha\}_U \wedge \dd{\gamma}_j + (-1)^{n -\deg_H \beta^j} \dd{\beta}^j \wedge \{\gamma_j, \alpha\}_U;$$ for $\alpha \in \Omega^{n-1}_H(M)$; 
        \item \textit{It is invariant by symmetries:} If $X \in \mathfrak{X}(U)$ and $\pounds_X \alpha = 0,$ then 
        $\iota_X \alpha \in \Omega^{a-2}_H(U)$ and
        $$\{\iota_X \alpha, \beta\}_U=  (-1)^{\deg_H \beta} \iota_X \{\alpha, \beta\}_U;$$
        \item \textit{It satisfies the graded Jacobi identity (up to an exact term):} 
        $$(-1)^{\deg_H \alpha \deg_H \gamma}\{ \{\alpha, \beta \}_U , \gamma\}_U + \text{cyclic terms} = \text{exact form}.$$
    \end{enumerate}
    Finally, it satisfies the \textit{compatibility condition}, namely for $V \subseteq U$ open, the bracket on V, $\{\cdot, \cdot\}_V$,
    is the restriction of the bracket on $U$, $\{\cdot, \cdot\}_U.$
\end{Def}

\begin{remark}
The last condition implies that we have a well-defined global bracket \[
\Omega^a_H(M) \otimes \Omega^b_H(M) \xrightarrow{\{\cdot, \cdot\}} \Omega^{a+ b - (n - 1)}_H(M)\] and, making some abuse of notation, we will denote $\{\cdot, \cdot\}_U = \{\cdot, \cdot\}$. Let us explain why it is necessary to consider the bracket locally. This is due to the fact that $\Omega^a_H(M)$ could be the space of closed forms, even when $S^n$ is non zero. For instance, for $n = 1$, if we take the foliation of the torus where every leaf is dense and $S^1$ is the bundle of forms which are zero when evaluated on tangent vectors to this foliation, then $\dd f \in S^1$ only when $f$ is constant.
\end{remark}

Actually, the situation described above could still happen for arbitrarily small open subsets $U$, not due to topology, but due to integrability. Notice that in order to have a non-trivial space of Hamiltonian forms (where the trivial one would be defined as the space of closed forms), there must be a non-trivial solution to the following partial differential equation:
\[
\dd \alpha \in S^{a+1}\,, \quad \alpha \in \Omega_H^a(U)\,,
\]
where $\dd \alpha \in S^{a+1}$ means that $\dd \alpha$ takes values in $S^{a+1}$. Hence, if we are not careful, there could be an unique graded Poisson bracket for a given subbundle $S^n$, the trivial one. The same discussion applies to the graded Leibniz identity and invariance by symmetries. These properties could be meaningless if there are no Hamiltonian forms and vector fields satisfying the requirements. In order to avoid this, we can assume that $S^n$ satisfies the following conditions (the theories to be considered will certainly satisfy these hypotheses):

\begin{enumerate}[\rm (i)] 
        \item \label{enumerate:Flatness_1} Locally, for every $1 \leq a \leq n$, there exist Hamiltonian forms $\gamma_{ij} \in \Omega^{a-1}_H(U),$ and functions $f^j_i$ such that
        $S^{a+1} = \langle \dd{f}^j_i \wedge \dd{\gamma}_{ij}, i\rangle$.
        \item \label{enumerate:Flatness_2} Locally, 
        for each $1 \leq a \leq n $,  there exists a family of Hamiltonian forms forms ${\gamma}^j,$ and a family of vector fields $X^i$ such that 
        $S^a = \langle \dd{\gamma}^j\rangle, \,\,\pounds_{X^i} \gamma^j = 0$, and $S^{a-1} = \langle \dd{\iota_{X^i} \gamma^j}\rangle$.
\end{enumerate}

These hypotheses, besides guaranteeing that $S^a$ is locally generated by Hamiltonian forms, also guarantee that we can apply freely the Leibniz rule and the invariance by symmetries. In particular, we can prove the following:

\begin{theorem}[\cite{de_leon_graded_2025}]
\label{thm:bracket_tensor_equivalence}
Let $S^n$ be a vector subbundle of $\bigwedge^n M$ such that $S^{n-1} = \iota_{\T M} S^n, \dots, S^1 = \iota_{\T M} S^2$ are vector subbundles of $\bigwedge^{n-1}M, \dots, T^\ast M,$ respectively. Suppose that these subbundles satisfy the hypotheses above. Let $\{\cdot, \cdot\}$ be a graded Poisson bracket on the space of Hamiltonian forms. Then, there exists an unique sequence of vector bundle morphisms, $\sharp_a$ for $1 \leq a \leq n$ 
\[\begin{tikzcd}
	{S^a} & {\bigvee_{n+1-a} M/K_{n+1-a}} \\
	M
	\arrow["{\sharp_a}", from=1-1, to=1-2]
	\arrow[from=1-1, to=2-1]
	\arrow[from=1-2, to=2-1]
\end{tikzcd},\] where $K_{n+1-a} = \{U \in \bigvee_{n+1-a} M: \iota_U S^{n+1-a} = 0\},$
satisfying the following properties:
\begin{enumerate}[\rm i)]
        \item The maps $\sharp_a$ are \textit{skew-symmetric}, namely
        $$\iota_{\sharp_a(\alpha)} \beta = (-1)^{(n+1-a)(n+1-b)}\iota_{\sharp_b(\beta)} \alpha,$$ for all 
        $\alpha \in S^a,$ $\beta \in S^b.$
        \item It is \textit{integrable:} For $\alpha: M \longrightarrow S^a$, $\beta: M \longrightarrow S^b$ sections such that $a + b \leq 2n+1$, and $U, V$ multivectors
        of degree $p = n + 1 - a$, $q = n+1-b$, respectively such that $$\sharp_a(\alpha) = U + K_p, \,\, \sharp_b(\beta)= V + K_q,$$ we have
        that the $(a+ b - n)$-form 
        $$\theta := (-1)^{(p-1)q}\pounds_U \beta + (-1)^{q} \pounds_V \alpha - \frac{(-1)^q}{2} d \left( \iota_V \alpha + (-1)^{pq} \iota_U \beta\right)$$
        takes values in $S_{a+b-n},$ and $$\sharp_{a+b - n}(\theta) = [U, V] + K_{ p +q - 1}.$$
        \item For any Hamiltonian forms $\alpha \in \Omega^a_H(M)$ and $\beta \in \Omega^b_H(M)$, we have 
        \begin{equation}
            \label{Poisson_bracket_definition}
            \{\alpha, \beta\} = (-1)^{\deg_H \beta} \iota_{\sharp_{b+1}(\dd \beta)} \dd \alpha \,.
        \end{equation}
    \end{enumerate}
    Conversely, given a sequence of vector bundle maps satisfying the first two previous properties, the bracket defined by Eq. \eqref{Poisson_bracket_definition} defines a graded Poisson bracket of order $n$.
\end{theorem}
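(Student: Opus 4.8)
The plan is to treat the two implications separately while observing that formula \eqref{Poisson_bracket_definition} is forced to be the dictionary between the two pictures, so the whole statement reduces to matching the six bracket axioms against the three tensorial conditions on the maps $\sharp_a$. Before anything else I would record a well-definedness lemma: for $a+b \geq n-1$ and $U \in K_{n-b}$ one has $\iota_U \dd\alpha = 0$ whenever $\dd\alpha \in S^{a+1}$. This is exactly what makes the right-hand side of \eqref{Poisson_bracket_definition} depend only on the class $\sharp_{b+1}(\dd\beta) \in \bigvee_{n-b}M / K_{n-b}$. Its proof is an iterated-contraction argument: since $S^{n-b}$ is obtained from $S^{a+1}$ by contracting $a+b+1-n$ times with $\T M$, for every $\eta \in S^{a+1}$ and vectors $v_i$ the form $\iota_{v_1 \wedge \cdots} \eta$ lies in $S^{n-b}$ and is therefore annihilated by $U$; commuting the contractions then shows that every contraction of $\iota_U\eta$ by vectors vanishes, whence $\iota_U\eta = 0$.

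For the forward direction I would first build $\sharp_n$. Fixing $\alpha \in \Omega^{n-1}_H(M)$, the map $\beta \mapsto \{\beta, \alpha\}$ preserves degree and, by locality (axiom iii), depends on $\beta$ only through $\dd\beta$; the graded Leibniz identity (axiom iv) — whose second slot is exactly of degree $n-1$ — says precisely that this map is a degree $-1$ derivation of the exterior algebra on the generators $\beta^j \wedge \dd\gamma_j$, that is, a contraction $\iota_{X_\alpha}$ by a vector field. Using the flatness hypotheses (i)–(ii) to guarantee enough such generators, $X_\alpha$ is determined modulo $K_1$, and I set $\sharp_n(\dd\alpha) := X_\alpha + K_1$. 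I would then descend to the lower maps inductively: hypothesis (ii) writes $S^a = \langle \dd\gamma^j \rangle$ with $\pounds_{X^i}\gamma^j = 0$ and $S^{a-1} = \langle \dd\iota_{X^i}\gamma^j\rangle$, and invariance by symmetries (axiom v) relates the bracket of $\iota_{X^i}\gamma^j$ to that of $\gamma^j$, forcing the value of $\sharp_a$ in terms of $\sharp_{a+1}$; the same hypotheses give uniqueness, since the forms $\dd\beta$ span $S^{b+1}$ pointwise.

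The core of both directions is the equivalence between the graded Jacobi identity (axiom vi) and the integrability property of the $\sharp_a$. The computation I would run is the standard Cartan-style one: writing $V = \sharp_{b+1}(\dd\beta)$, the magic formula for multivector fields gives $\dd\{\alpha,\beta\} = (-1)^{\deg_H\beta}\pounds_V \dd\alpha$ (using $\dd^2 = 0$), so iterating \eqref{Poisson_bracket_definition} turns the Jacobiator into an expression in $\pounds_U$, $\pounds_V$ and interior products contracted into $\dd\alpha$. The integrability identity for $\theta$ is precisely the translation of this expression into the Schouten--Nijenhuis bracket $[U,V]$, and after taking the cyclic sum the multivector Jacobi identity $[[U,V],W] + \text{cyclic} = 0$ makes the principal part cancel, leaving exactly the $\dd(\cdots)$ term appearing in $\theta$; this residual exact form is what realizes the ``up to an exact term'' in axiom (vi). Reading the same calculation backwards establishes integrability from Jacobi. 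Skew-symmetry of $\sharp$ is immediate from graded-skew-symmetry (axiom ii) together with the contraction-commutation sign, while in the converse direction the graded Leibniz identity (axiom iv) and invariance (axiom v) both fall out directly from the Leibniz rule and Cartan's formula for $\iota_{X_\alpha}$, together with $S^{a-1} = \iota_{\T M} S^a$.

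I expect the main obstacle to be this integrability/Jacobi computation, and within it the bookkeeping of the many $(-1)^{pq}$-type signs: the degrees $p = n+1-a$, $q = n+1-b$, the Koszul signs from graded skew-symmetry, and the sign conventions for $\pounds_U$ and for the Schouten bracket must all be kept consistent for the residual term to match the $\theta$ of the integrability property exactly. Verifying that this remainder is genuinely exact — rather than merely closed — and that $\theta$ lands in the correct bundle $S^{a+b-n}$ is the delicate point; everything else is, modulo signs, a direct application of the hypotheses and the well-definedness lemma.
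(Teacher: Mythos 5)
The paper does not actually prove this statement: Theorem \ref{thm:bracket_tensor_equivalence} is imported verbatim from \cite{de_leon_graded_2025}, so there is no in-paper argument to compare yours against. Judged on its own, your plan is the natural one and matches what the construction in the cited reference must look like: the well-definedness lemma (that $\iota_U\dd\alpha=0$ for $U\in K_{n-b}$ via iterated contractions and commutation of interior products) is correct and is indeed the reason the formula \eqref{Poisson_bracket_definition} only sees the class of $\sharp_{b+1}(\dd\beta)$ modulo $K_{n-b}$; building $\sharp_n$ by dualizing $\beta\mapsto\{\beta,\alpha\}$ against $S^n$ using locality and the Leibniz axiom, and descending to $\sharp_a$ via hypothesis \ref{enumerate:Flatness_2} and invariance by symmetries, is consistent with Remark \ref{remark:structure_at_n_determines_structure}, which records that the whole sequence is determined by $\sharp_n$ through $\sharp_a(\iota_U\alpha)=\sharp_n(\alpha)\wedge U+K_{n+1-a}$.

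Two points in your sketch are asserted rather than established and are exactly where the real work lies. First, the integrability condition in the theorem is a statement about \emph{arbitrary} sections $\alpha\colon M\to S^a$, $\beta\colon M\to S^b$, whereas the Jacobi identity only gives you information about exact forms $\dd\alpha$ with $\alpha$ Hamiltonian; you need the flatness hypotheses \ref{enumerate:Flatness_1}--\ref{enumerate:Flatness_2} not just to ``guarantee enough generators'' but to reduce a general section to $C^\infty$-combinations of such generators and to check that both sides of the integrability identity transform the same way under multiplication by functions (the Lie derivatives $\pounds_U\beta$, $\pounds_V\alpha$ are not $C^\infty$-linear, so this is not automatic and is where the exact correction term in $\theta$ earns its keep). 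Second, $\pounds_U\beta$ depends on the chosen representative $U$ of $\sharp_a(\alpha)\in\bigvee_pM/K_p$, so you must verify that the conclusion ($\theta\in S^{a+b-n}$ and $\sharp_{a+b-n}(\theta)=[U,V]+K_{p+q-1}$) is independent of the representatives; this again uses the well-definedness lemma together with integrability of $K_1$. Neither issue invalidates the plan, but as written the central equivalence between axiom (vi) and property (ii) is a programme, not a proof.
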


The previous result induces the following definition:
\begin{Def}[Regular graded Dirac structure of order $n$]
\label{def:graded_Dirac} Let $M$ be a smooth manifold, and  $S^1, \dots, S^n$ be a sequence of vector subbundles 
\[\begin{tikzcd}
	{S^a} & { \bigwedge^aM} \\
	M
	\arrow[hook, from=1-1, to=1-2]
	\arrow[from=1-1, to=2-1]
	\arrow[from=1-2, to=2-1]
\end{tikzcd}\]
which are related through subsequent contractions, $S^{a} = \iota_{\T M} S^{a+1}$ (we do not necessarily require them to satisfy the integrability conditions mentioned above). Then, a \textbf{regular graded Dirac structure of order $n$} on $M$ is a sequence of vector bundle maps
\[\sharp_a: S^a \longrightarrow \bigvee_{n+1-a} M/ K_{k+1-a},\]
where $K_{n+1-a}$ is the annihilator of $S^{n+1-a}$ that satisfy the following conditions:
\begin{enumerate}[i)]
        \item The maps $\sharp_a$ are \textit{skew-symmetric}, that is, 
        $$\iota_{\sharp_a(\alpha)} \beta = (-1)^{(n+1-a)(n+1-b)}\iota_{\sharp_b(\beta)} \alpha,$$ for all 
        $\alpha \in S^a,$ $\beta \in S^b.$
        \item It is \textit{integrable:} For $\alpha: M \longrightarrow S^a$, $\beta: M \longrightarrow S^b$ sections such that $a + b \leq 2n+1$, and $U, V$ multivector fields
        of order $p = n + 1 - a$, $q = n+1-b$, respectively such that $$\sharp_a(\alpha) = U + K_p, \,\, \sharp_b(\beta)= V + K_q,$$ we have
        that the $(a+ b - n)$-form 
        $$\theta := (-1)^{(p-1)q}\pounds_U \beta + (-1)^{q} \pounds_V \alpha - \frac{(-1)^q}{2} d \left( \iota_V \alpha + (-1)^{pq} \iota_U \beta\right)$$
        takes values in $S_{a+b-n},$ and $$\sharp_{a+b - k}(\theta) = [U, V] + K_{ p +q - 1}.$$
    \end{enumerate}
\end{Def}

\begin{remark} Notice that a regular graded Dirac structure $(M, S^a, \sharp_a)$ induces a graded Poisson bracket on the space of Hamiltonian forms, by defining $\{\alpha, \beta\} = (-1)^{\deg_H \beta} \iota_{\sharp_{b+1}(\dd \beta)} \dd \alpha$.
\end{remark}

\begin{Def}[Flat regular graded Dirac structure] A regular graded Dirac structure of order $n$ will be called \textbf{flat} if the sequence of subbundles $S^1, \dots, S^n$ satisfy the integrability hypotheses \ref{enumerate:Flatness_1} and \ref{enumerate:Flatness_2}.
\end{Def}
\begin{remark} Equivalently, using Theorem \ref{thm:bracket_tensor_equivalence}, a flat regular graded Dirac structure of order $n$ can be defined by a graded Poisson bracket of order $n$ on the space of Hamiltonian forms $\Omega^a_H(M),$ for $0 \leq a \leq n-1.$
\end{remark}

\begin{remark}
    One could weaken the hypotheses of the objects in Definition \ref{def:graded_Dirac}, and work instead with vector-subbundles of \[
\bigvee_p M \oplus_M \bigwedge^{n+1-p} M,
\]
as in \cite{vankerschaver_geometry_2011, vankerschaver_hamilton-pontryagin_2012, bursztyn_higher_2019, de_leon_graded_2025}, corresponding to the image of the vector bundle morphism \[D^p = \{(u, \alpha) : \alpha \in S^{n+1-p}\,, \sharp_{n+1-p}(\alpha) = u + K_p\}\,,\]see \cite{de_leon_graded_2025}, Proposition III A.2. We prefer to work with regular graded Dirac structures because of its immediate connection to brackets, which is our main focus in the current paper. Nevertheless, the results presented may be extended to the general case with no difficulties.
\end{remark}

\begin{remark} The case $n = 1$ recovers the usual definition of Dirac structure. Indeed, a regular ``graded'' Dirac structure of order one on $M$ would correspond to a map $$\sharp: S \longrightarrow \T M/K,$$ where $S \subseteq \T^\ast M$ is a vector subbundle. Integrability implies that $K$ is an integrable distribution in the sense of Frobenius. Then, both skew-symmetry and integrability of $\sharp$ translates into\[
D :=\{(v, \alpha) \in \T M \oplus_M T^\ast M: \sharp(\alpha) = v + K\} \subset \T M \oplus_M T^\ast M
\]
defining a Dirac structure in the usual case present in the literature (see \cite{bursztyn_brief_2011, courant_dirac_1990, dorfman_dirac_1987, dorfman_dirac_1993}). Of course, this does not recover all possible Dirac structures, but only those $D \subset \T M \oplus_M T^\ast M$ whose projection onto $T^\ast M$ defines a vector subbundle. In any case, we can recover all possible Dirac structures if we do not ask for regularity.
\end{remark}

\begin{remark}
\label{remark:structure_at_n_determines_structure}
An important feature of these structures is that they are completely determined by the structure on degree $n$, since the vector bundle map $\sharp_n \colon S^n \longrightarrow \T M / K_1$ determines the rest defining $\sharp_a(\iota_U \alpha) := \sharp_n(\alpha) \wedge U + K_{n+1-a},$ where $\alpha \in S^n$ and $U \in \bigvee_{n - a} M$ (see \cite{de_leon_graded_2025}).
\end{remark}

\subsection{Restrictions and quotients of graded Dirac structures}

Now we will study induced graded Dirac structures by restrictions to sub-manifolds and projections to a quotient. More generally, we will define the \textit{pullback} and \textit{pushforward} of a graded Dirac structure by a smooth map (see \cite{bursztyn2008diracgeometryquasipoissonactions} for the definitions in Dirac geometry).

Let $N, M$ and $P$ be smooth manifolds, and let $M$ be endowed with a regular graded Dirac structure $(S^a, \sharp_a)$. Suppose we had an immersion
$i: N \longrightarrow M$ and a submersion $\pi: M \longrightarrow P$. One should define the induced structures by $i$ and $\pi$ on $N$ and $P$, respectively, as maximal regular Dirac structures such that both $i$ and $\pi$ preserve them. The preservation of the structure is given by the notion of a \textit{graded Dirac map}. In Dirac geometry (in some contexts called generalized geometry, see \cite{hitchin2010lecturesgeneralizedgeometry}), there are two notions, forward Dirac maps (generalizaing pushforward of multivectors) and backward Dirac maps (generalizing pullbacks of forms), see \cite{bursztyn2008diracgeometryquasipoissonactions}. Let us first define a general notion of graded Dirac map in order to define the generalizations of the previous concepts.

\begin{Def}
\label{def:graded_Dirac_map}
Let $(N, \widetilde S^a, \widetilde \sharp_a)$ and $(M, S^a, \sharp_a)$ be two regular graded Dirac manifolds and $f\colon N \longrightarrow M$ be a smooth map. Then, $f$ is said to be a \textbf{graded Dirac map} if
the following diagram is commutative for every $x  \in N$ and every $a = 1, \dots, n$:
\[\begin{tikzcd}
	{(f^\ast )^{-1} \left(\widetilde S^a \big |_x\right) \bigcap S^a\big |_{f(x)}} && {\widetilde S^a \big |_x} \\
	{\bigwedge^{n+1-a }\left(\T_{f(x)} M\right) \big / \left(K_{n+1-a} + f_\ast( \widetilde K_{n+1-a}) \right)} && {\bigwedge^{n+1-a}\left(\T_x N \right) \big / \widetilde K_{n+1-a}}
	\arrow["{f^\ast}"{description}, from=1-1, to=1-3]
	\arrow["{\sharp_a}", from=1-1, to=2-1]
	\arrow["{\widetilde \sharp_a}"{description}, from=1-3, to=2-3]
	\arrow["{f_\ast}"{description}, from=2-3, to=2-1]
\end{tikzcd}.\]
Here we are making some abuse of notation, the map on the left $\sharp_a$ should be interpreted as the composition \[
S^a \xrightarrow{\sharp_a} \bigvee_{n+1-a} M \big / K_{n+1-a} \longrightarrow \bigvee_{n+1-a} M \big / \left(K_{n+1-a} + f_\ast( \widetilde K_{n+1-a}) \right),
\]
where the second map is the projection onto the quotient. Hence, a smooth map $f\colon N \longrightarrow M$ is said to be a graded Dirac map if it satisfies $\sharp_a(\alpha |_{f(x)}) = f_\ast \left( \widetilde \sharp_{a}(f^\ast \alpha)\right)$ modulo $K_{n+1-a} + f_\ast( \widetilde K_{n+1-a})$ for every $\alpha \in S^a |_{f(x)}$ such that $f^\ast \alpha \in \widetilde S^a$.
\end{Def}

That Definition \ref{def:graded_Dirac_map} is the correct notion for a structure preserving map between graded Dirac manifolds is motivated by the following property:

\begin{proposition}
\label{prop:graded_dirac_maps_preserve_brackets}
Let $(N, \widetilde S^a, \widetilde \sharp_a)$ and $(M, S^a, \sharp_a)$ be two regular graded Dirac manifolds of order $n$ and $f: N \longrightarrow M$ be a graded Dirac map. Let $\alpha$ and $\beta$ be Hamiltonian forms on $M$ such that both $f^\ast \alpha$ and $f^\ast \beta$ are Hamiltonian on $N$. Then, 
\[\{f^\ast \alpha, f^\ast \beta\} = f^\ast \{\alpha, \beta\}\,.\]
\end{proposition}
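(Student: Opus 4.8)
The plan is to expand both sides of the claimed identity through the defining formula \eqref{Poisson_bracket_definition} and reduce the statement to an equality of interior products, which I then transport across $f$ using naturality of the interior product together with the graded Dirac map condition.

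First I would record the elementary compatibilities: since pullback commutes with the exterior differential, $\dd f^\ast\beta = f^\ast\dd\beta$ and $\dd f^\ast\alpha = f^\ast\dd\alpha$, and since $\deg f^\ast\beta = \deg\beta$ we have $\deg_H f^\ast\beta = \deg_H\beta$. Writing $a = \deg\alpha$, $b = \deg\beta$ and $p = n+1-(b+1) = n-b$, the right-hand side becomes $f^\ast\{\alpha,\beta\} = (-1)^{\deg_H\beta}\, f^\ast\bigl(\iota_{\sharp_{b+1}(\dd\beta)}\dd\alpha\bigr)$, while the left-hand side is $\{f^\ast\alpha, f^\ast\beta\} = (-1)^{\deg_H\beta}\iota_{\widetilde\sharp_{b+1}(\dd f^\ast\beta)}\dd f^\ast\alpha$. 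It therefore suffices to prove $f^\ast\bigl(\iota_{\sharp_{b+1}(\dd\beta)}\dd\alpha\bigr) = \iota_{\widetilde\sharp_{b+1}(\dd f^\ast\beta)}\, f^\ast\dd\alpha$. Choosing representatives $U \in \bigvee_p M$ of $\sharp_{b+1}(\dd\beta)$ and $V \in \bigvee_p N$ of $\widetilde\sharp_{b+1}(\dd f^\ast\beta)$, and noting that $\dd\beta \in S^{b+1}$ while by hypothesis $f^\ast\dd\beta = \dd f^\ast\beta \in \widetilde S^{b+1}$, the form $\dd\beta$ satisfies exactly the compatibility demanded in Definition \ref{def:graded_Dirac_map}, so the graded Dirac map condition yields $U \equiv f_\ast V \pmod{K_p + f_\ast\widetilde K_p}$.

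Next I would invoke the naturality of the interior product, $\iota_V f^\ast\omega = f^\ast(\iota_{f_\ast V}\omega)$ (checked pointwise on decomposable multivectors and extended by linearity), so that $\iota_V f^\ast\dd\alpha = f^\ast(\iota_{f_\ast V}\dd\alpha)$. Writing $U - f_\ast V = W_1 + f_\ast W_2$ with $W_1 \in K_p$ and $W_2 \in \widetilde K_p$, I must show that the two correction terms $f^\ast(\iota_{W_1}\dd\alpha)$ and $f^\ast(\iota_{f_\ast W_2}\dd\alpha) = \iota_{W_2} f^\ast\dd\alpha$ both vanish; granting this, $f^\ast(\iota_U\dd\alpha) = f^\ast(\iota_{f_\ast V}\dd\alpha) = \iota_V f^\ast\dd\alpha$, which is what is needed. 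The vanishing rests on the auxiliary lemma: if $W \in K_p = (S^p)^{\circ,p}$ and $\omega \in S^{a+1}$, then $\iota_W\omega = 0$. To see this, observe that for any $(a+b+1-n)$-vector $V'$ one has $\iota_{V'}\omega \in S^p$ — iterating $S^{c} = \iota_{\T M}S^{c+1}$ gives $\iota_{\bigvee_{a+1-p}\T M}S^{a+1} = S^p$ — whence $\iota_{V'}(\iota_W\omega) = \pm\,\iota_W(\iota_{V'}\omega) = 0$; since this holds for all $V'$ of the complementary order and $\deg(\iota_W\omega) = a+b+1-n$, the form $\iota_W\omega$ vanishes.

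Applying the lemma on $M$ with $\omega = \dd\alpha \in S^{a+1}$ kills the $W_1$ term, and applying it on $N$ with $\omega = \dd f^\ast\alpha \in \widetilde S^{a+1}$ — which is precisely where the hypothesis that $f^\ast\alpha$, and not merely $f^\ast\beta$, is Hamiltonian enters — kills the $W_2$ term; combining everything gives $f^\ast\{\alpha,\beta\} = \{f^\ast\alpha,f^\ast\beta\}$. I expect the main obstacle to be exactly that the Dirac map condition only pins down $\sharp$ up to the combined indeterminacy $K_p + f_\ast\widetilde K_p$: one must verify separately that each summand contracts trivially against the relevant Hamiltonian form, and the $f_\ast\widetilde K_p$ part can only be controlled after pulling back and invoking the auxiliary lemma on $N$. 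The sign bookkeeping in the lemma and in the naturality identity is routine and should not cause difficulty.
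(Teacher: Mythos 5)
Your proof is correct and follows essentially the same route as the paper's: unfold both brackets via Eq.~\eqref{Poisson_bracket_definition}, invoke the graded Dirac map condition on $\dd\beta$, and transport the contraction across $f$ by naturality of the interior product. The paper's own proof is a three-line computation that silently treats $\sharp_{b+1}(\dd\beta)=f_\ast\widetilde\sharp_{b+1}(\dd f^\ast\beta)$ as an equality of multivectors, whereas it only holds modulo $K_p+f_\ast\widetilde K_p$; your auxiliary lemma (that $K_p$ contracts trivially against $S^{a+1}$, and likewise $\widetilde K_p$ against $\widetilde S^{a+1}$, which is exactly where the hypothesis that $f^\ast\alpha$ is Hamiltonian is used) supplies precisely the justification the paper omits.
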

\begin{proof} Indeed, we have
\begin{align*}
    f^\ast \{\alpha, \beta\} & = (-1) ^{\deg_H \beta}f^\ast \left( \iota_{\sharp_{b}(\dd \beta)} \dd \alpha \right)\\
    &= (-1)^{\deg_H \beta} f^\ast \left( \iota_{f_\ast \sharp_b(\dd f^\ast \beta)} \dd\alpha\right) = (-1)^{\deg_H \beta}\iota_{\sharp_b(\dd f^\ast \beta)} f^\ast \dd \alpha \\
    &= \{f^\ast \alpha, f^\ast \beta\}\,,
\end{align*}
which finishes the proof.
\end{proof}

Furthermore, in the case that $f$ is behaved \textit{well enough} with respect to both graded Dirac structures, Definition \ref{def:graded_Dirac_map} is not only sufficient but also necessary for a map $f: N \longrightarrow M$ to preserve brackets in the sense of Proposition \ref{prop:graded_dirac_maps_preserve_brackets}: 
\begin{proposition} Let $(N, \widetilde S^a, \widetilde \sharp_a)$ and $(M, S^a, \sharp_a)$ be two regular graded Dirac manifolds of order $n$ and $f: N \longrightarrow M$ be a smooth map such that for every $\alpha \in \Omega_H^a(N)$ there is $\gamma \in \Omega^a_H(M)$ with $f^\ast \gamma = \alpha$. Then, $f$ is a graded Dirac map if and only if it satisfies \[\{f^\ast \alpha, f^\ast \beta\} = f^\ast \{\alpha, \beta\}\,,\]
for every pair of Hamiltonian forms $\alpha$ and $\beta$ on $M$ such that both $f^\ast \alpha$ and $f^\ast \beta$ are Hamiltonian forms on $N$.
\end{proposition}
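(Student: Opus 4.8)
The forward implication is exactly Proposition \ref{prop:graded_dirac_maps_preserve_brackets}, so the plan is to establish only the converse: assuming $f^\ast\{\alpha,\beta\}=\{f^\ast\alpha,f^\ast\beta\}$ for all Hamiltonian $\alpha,\beta$ on $M$ with Hamiltonian pullbacks, I would verify the commutativity of the diagram in Definition \ref{def:graded_Dirac_map} pointwise, at each $x\in N$ and each degree $a$. The first step is to rephrase the identity $\sharp_a(\alpha)=f_\ast\!\left(\widetilde\sharp_a(f^\ast\alpha)\right)$, which lives in the quotient $\bigvee_{n+1-a}(\T_{f(x)}M)\big/\left(K_{n+1-a}+f_\ast(\widetilde K_{n+1-a})\right)$, as an equality of scalar contractions. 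Since the pairing between $\bigvee_{n+1-a}\T_{f(x)}M$ and $\bigwedge^{n+1-a}\T^\ast_{f(x)}M$ is non-degenerate, and using the double-annihilator identity $K_{n+1-a}^{\circ,\,n+1-a}=S^{n+1-a}$ together with $\iota_{f_\ast V}\omega=\iota_V(f^\ast\omega)$, the annihilator of $K_{n+1-a}+f_\ast(\widetilde K_{n+1-a})$ turns out to be precisely
\[
\{\omega\in S^{n+1-a}\big|_{f(x)}:\ f^\ast\omega\in\widetilde S^{n+1-a}\}.
\]
Thus it suffices to prove, for every $\alpha\in S^a|_{f(x)}$ with $f^\ast\alpha\in\widetilde S^a$ and every such test form $\omega$, the scalar identity $\iota_{\sharp_a(\alpha)}\omega=\iota_{f_\ast\widetilde\sharp_a(f^\ast\alpha)}\omega$.

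Next I would use the skew-symmetry of the $\sharp$-maps on $M$ and on $N$, together with the adjunction $\iota_{f_\ast V}\omega=\iota_V f^\ast\omega$, to move the $\sharp$-operators onto the test form. Both sides acquire the same sign $(-1)^{a(n+1-a)}$, so the desired identity reduces to $\iota_{\sharp_{n+1-a}(\omega)}\alpha=\iota_{\widetilde\sharp_{n+1-a}(f^\ast\omega)}(f^\ast\alpha)$. To recognise these contractions as brackets I would invoke the flatness hypotheses \ref{enumerate:Flatness_1} and \ref{enumerate:Flatness_2}: locally $S^a$ and $S^{n+1-a}$ are generated by differentials $\dd\gamma,\dd\delta$ of Hamiltonian forms $\gamma\in\Omega^{a-1}_H(M)$, $\delta\in\Omega^{n-a}_H(M)$. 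Exploiting the surjectivity hypothesis, I would take these potentials with Hamiltonian pullbacks: starting from generators $\dd\tilde\gamma,\dd\tilde\delta$ of $\widetilde S^a,\widetilde S^{n+1-a}$ and lifting $\tilde\gamma=f^\ast\gamma$, $\tilde\delta=f^\ast\delta$, bilinearity reduces the problem to single generators, and each such generator decomposes as $\alpha=\dd\gamma|_{f(x)}+\nu$, $\omega=\dd\delta|_{f(x)}+\mu$ with $f^\ast\gamma,f^\ast\delta$ Hamiltonian and $\nu\in S^a$, $\mu\in S^{n+1-a}$ in $\ker f^\ast$. On the leading terms, the defining formula \eqref{Poisson_bracket_definition} identifies the contractions with the scalar brackets $\{\gamma,\delta\}(f(x))$ and $\{f^\ast\gamma,f^\ast\delta\}(x)$, which agree by the bracket-preservation hypothesis; this disposes of the identity up to the kernel contributions.

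The main obstacle is therefore the kernel terms, that is, proving $\sharp_a\!\left(S^a|_{f(x)}\cap\ker f^\ast\right)\subseteq K_{n+1-a}+f_\ast(\widetilde K_{n+1-a})$, equivalently $\iota_{\sharp_{n+1-a}(\omega)}\nu=0$ for $\nu\in S^a\cap\ker f^\ast$ and the admissible test forms $\omega$. Here bracket-preservation cannot be quoted directly, since a potential $\gamma$ with $\dd\gamma|_{f(x)}=\nu$ supplied by flatness on $M$ need not have Hamiltonian pullback. The plan is to use the surjectivity hypothesis in its sharpest form: because $f^\ast\nu=0$ we have $\dd(f^\ast\gamma)|_x=f^\ast\nu=0$, and I would arrange $\gamma$ so that $f^\ast\gamma$ is genuinely Hamiltonian (the model case being the trivial pullback $f^\ast\gamma=0$, which is Hamiltonian and brackets trivially against everything). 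Granting this, bracket-preservation forces $\{\gamma,\delta\}(f(x))=\{f^\ast\gamma,f^\ast\delta\}(x)$, and the right-hand side vanishes since it is a contraction against $\dd(f^\ast\gamma)|_x=0$; the purely kernel--kernel pairing $\iota_{\sharp_{n+1-a}(\mu)}\nu$ is then treated by the same device after writing $\mu$ as a controlled differential. This is the delicate point of the argument and is exactly where the assumption that $f$ be \emph{behaved well enough}, namely surjectivity of $f^\ast$ on Hamiltonian forms, is indispensable; it holds in particular for the immersions and submersions appearing in the applications, where $\ker f^\ast$ is governed by an adapted local normal form for $f$.
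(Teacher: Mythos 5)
Your reduction of the sufficiency direction is exactly the paper's: necessity is quoted from Proposition \ref{prop:graded_dirac_maps_preserve_brackets}, and the converse is attacked by identifying the annihilator of $(f^\ast)^{-1}\left(\widetilde S^{n+1-a}\right)\cap S^{n+1-a}$ with $K_{n+1-a}+f_\ast(\widetilde K_{n+1-a})$, which reduces commutativity of the diagram to the scalar identity $\iota_{\sharp_a(\alpha)}\beta=\iota_{f_\ast\widetilde\sharp_a(f^\ast\alpha)}\beta$ against test forms $\beta$ in that intersection. Up to this point you reproduce the printed argument, sign conventions included. The difference is that the paper then declares this scalar identity to follow ``clearly'' from bracket preservation, whereas you attempt to actually derive it; that is where the one substantive issue sits.

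The derivation is complete for the leading terms: the decomposition $\alpha=\dd\gamma|_{f(x)}+\nu$ with $\gamma\in\Omega^{a-1}_H(M)$, $f^\ast\gamma$ Hamiltonian on $N$ and $\nu\in S^a|_{f(x)}\cap\ker f^\ast_x$ is available from the surjectivity hypothesis together with flatness of $N$, and bracket preservation handles the $\dd\gamma$-against-$\dd\delta$ pairing. The gap is the kernel contribution, and you name it but do not close it. What is needed is: for a \emph{prescribed} $\nu\in S^a|_{f(x)}\cap\ker f^\ast_x$, a Hamiltonian form $\gamma'$ on $M$ with $\dd\gamma'|_{f(x)}=\nu$ \emph{and} $f^\ast\gamma'$ Hamiltonian on $N$. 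Flatness of $M$ supplies the first condition, but nothing in the hypotheses forces $f^\ast(\dd\gamma')$ to remain in $\widetilde S^a$ at points near $x$ (knowing it vanishes at $x$ is not enough), and the surjectivity hypothesis only produces lifts of Hamiltonian forms of $N$, whose differentials at $f(x)$ need not sweep out $S^a\cap\ker f^\ast$. Your phrases ``I would arrange $\gamma$ so that $f^\ast\gamma$ is genuinely Hamiltonian'' and ``Granting this'' mark an unproved lemma, not a proof; the same applies to the kernel--kernel pairing. In fairness, this is precisely the step the paper compresses into ``clearly follows from the bracket preservation hypothesis,'' so your proposal is no less rigorous than the published proof and is more honest about where the difficulty lies --- but as submitted it is incomplete at exactly that point.
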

\begin{proof} Necessity follows from Proposition \ref{prop:graded_dirac_maps_preserve_brackets}. Now, for sufficiency, let $\alpha \in (f^\ast)^{-1} \left( \widetilde S^a\right) \cap S^a$. We need to show that $\sharp_a(\alpha) = f_\ast \sharp_a(f^\ast \alpha)$ $\operatorname{mod} K_{n+1-a} + f_\ast (\widetilde K_{n+1-a})$. Now, since
\begin{align*}
    \left((f^\ast)^{-1} \left( \widetilde S^{n+1-a}\right) \cap S^{n+1-a}\right)^{\circ, n+1-a} &= \left((f^\ast)^{-1} ( \widetilde S^{n+1-a})\right) ^{\circ, n+1-a} + (S^{n+1-a})^{\circ, n+1-a}\\
    &=  K_{n+1-a} + f_\ast (\widetilde K_{n+1-a})\,,
\end{align*}
it is enough to show that $\iota_{\sharp_a(\alpha)} \beta = \iota_{f_\ast \sharp_a(f^\ast \alpha)} \beta$, for every $\beta \in (f^\ast)^{-1} \left( \widetilde S^{n+1-a} \right) \cap S^{n+1-a}$, which clearly follows from the bracket preservation hypothesis.
\end{proof}

Now we can define the desired induced graded Dirac structures as maximal structures (with respect to inclusion and extension of the corresponding $\sharp$ morphisms) on $N$ and $P$ such that both $i$ and $\pi$ are graded Dirac maps. These are characterized by two particular kind of maps, those that we call \textit{backward} and \textit{forward}, generalizing the well known concepts in Dirac geometry \cite{bdn2022math.z.}:

\begin{Def}[Backward and forward maps] A graded Dirac map $f: N \longrightarrow M$ is said to be \textbf{backward} if
$\widetilde S^a \big |_{x} = \{f^\ast\alpha : \alpha \in S^a \big |_{f(x)} \text{ and } \sharp_a(\alpha) \text{ is tangent to $f$, modulo $K_{n+1-a}$}\},$ and it is said to be \textbf{forward} if $S^a \big |_{f(x)} = (f^{\ast})^{-1}\left( \widetilde S^a \big |_x\right)$, for every $a = 1, \dots, n$ and $x \in N$.
\end{Def}

\begin{Def}[Pullback and pushforward of graded Dirac structures]
\label{def:pullback_and_pushforward}
Let $(M, S^a, \sharp_a)$ be a regular graded Dirac manifold of order $n$ and $f: N \longrightarrow M$ be a smooth map (respectively, $\pi: M \longrightarrow P$, be a submersion). A graded Dirac structure on $N$, $(\widetilde S^a, \widetilde \sharp_a)$ (respectively, on $P$) is called the \textbf{pullback} of $(S^a, \sharp_a)$ (respectively, the \textbf{pushforward}) if $f$ is a backward (repectively, forward) graded Dirac map between these two structures.
\end{Def}

Some elementary properties are in order:

\begin{theorem}
\label{thm:uniqueness_of_pullback_and_pushforward}
When they exist, both pullback and pushforward of graded Dirac structures are unique.
\end{theorem}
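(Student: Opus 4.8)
The plan is to treat the pullback and the pushforward in parallel, since both reduce to the same mechanism: in each case the defining (backward or forward) condition already pins down the subbundles, and the only substantial point is the uniqueness of the $\sharp$-morphisms, which I would extract from the graded Dirac map condition together with an annihilator containment. For the pullback along $f\colon N \to M$, the backward condition \emph{is} an explicit formula
\[
\widetilde S^a\big|_x = \{f^\ast \alpha : \alpha \in S^a\big|_{f(x)},\ \sharp_a(\alpha) \text{ tangent to } f \text{ mod } K_{n+1-a}\},
\]
written solely in terms of the given data $(S^a,\sharp_a)$ and the map $f$. Hence any two pullback structures share the same $\widetilde S^a$, and therefore the same annihilators $\widetilde K_{n+1-a} = (\widetilde S^{n+1-a})^{\circ,\,n+1-a}$. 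The only object left to determine is $\widetilde\sharp_a$, and here I would use that for a backward map \emph{every} element of $\widetilde S^a$ has the form $f^\ast\alpha$, so Definition~\ref{def:graded_Dirac_map} constrains $\widetilde\sharp_a$ on all of $\widetilde S^a$: writing $\widetilde\beta = f^\ast\alpha$, one must have $f_\ast\widetilde\sharp_a(\widetilde\beta) = \sharp_a(\alpha)$ modulo $K_{n+1-a} + f_\ast(\widetilde K_{n+1-a})$.

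The crux is then to show that the induced map
\[
\bar f_\ast\colon \bigvee_{n+1-a} N\big/\widetilde K_{n+1-a} \longrightarrow \bigvee_{n+1-a} M\big/\big(K_{n+1-a} + f_\ast(\widetilde K_{n+1-a})\big)
\]
is injective; granting this, two candidates $\widetilde\sharp_a,\widetilde\sharp'_a$ satisfy $\bar f_\ast(\widetilde\sharp_a(\widetilde\beta) - \widetilde\sharp'_a(\widetilde\beta)) = 0$ and hence agree. Injectivity reduces to the pointwise containment $f_\ast^{-1}(K_{n+1-a}) \subseteq \widetilde K_{n+1-a}$, which I would prove from naturality of contraction, $\iota_{\widetilde V}(f^\ast\alpha) = f^\ast(\iota_{f_\ast\widetilde V}\alpha)$: if $f_\ast\widetilde V \in K_{n+1-a}$ then for every $\widetilde\beta = f^\ast\alpha \in \widetilde S^{n+1-a}$ we get $\iota_{\widetilde V}\widetilde\beta = f^\ast(\iota_{f_\ast\widetilde V}\alpha) = 0$, since $\alpha \in S^{n+1-a}$ and $f_\ast\widetilde V$ annihilates $S^{n+1-a}$; thus $\widetilde V \in \widetilde K_{n+1-a}$.

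For the pushforward along the submersion $\pi\colon M \to P$ the argument is dual. The forward condition directly prescribes $\widehat S^a\big|_{\pi(x)} = (\pi^\ast)^{-1}(S^a\big|_x)$, so the subbundles $\widehat S^a$ and their annihilators $\widehat K_{n+1-a}$ are again uniquely determined, and every $\beta \in \widehat S^a$ satisfies $\pi^\ast\beta \in S^a$, so the graded Dirac condition forces $\widehat\sharp_a(\beta) = \pi_\ast\sharp_a(\pi^\ast\beta)$ modulo $\widehat K_{n+1-a} + \pi_\ast(K_{n+1-a})$. To convert this congruence into an identity in $\bigvee_{n+1-a} P\big/\widehat K_{n+1-a}$ I would establish $\pi_\ast(K_{n+1-a}) \subseteq \widehat K_{n+1-a}$: for $U \in K_{n+1-a}$ and $\beta \in \widehat S^{n+1-a}$ we have $\pi^\ast(\iota_{\pi_\ast U}\beta) = \iota_U(\pi^\ast\beta) = 0$ because $\pi^\ast\beta \in S^{n+1-a}$, and since $\pi$ is a submersion $\pi^\ast$ is injective on forms, whence $\iota_{\pi_\ast U}\beta = 0$ and $\pi_\ast U \in \widehat K_{n+1-a}$.

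The main obstacle, and the step warranting the most care, is precisely this annihilator containment in each case: for the pullback it yields injectivity of the induced $\bar f_\ast$, while for the pushforward it collapses the ambiguity $\widehat K_{n+1-a} + \pi_\ast(K_{n+1-a})$ back to $\widehat K_{n+1-a}$, so that in both situations the graded Dirac congruence determines the $\sharp$-morphism exactly. Everything else is bookkeeping once naturality of contraction and, for the pushforward, injectivity of $\pi^\ast$ for submersions are in place; the integrability clause of Definition~\ref{def:graded_Dirac} plays no role in the uniqueness statement, as it is automatically inherited once the $\sharp$-maps coincide.
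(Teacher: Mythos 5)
Your proposal is correct and follows essentially the same route as the paper: the subbundles are fixed by the backward/forward conditions, and the $\sharp$-maps are pinned down by the graded Dirac congruence once one establishes the annihilator containments $f_\ast^{-1}(K_{n+1-a})\subseteq \widetilde K_{n+1-a}$ (equivalently, injectivity of the induced map on quotients) for pullbacks and $\pi_\ast(K_{n+1-a})\subseteq \widehat K_{n+1-a}$ for pushforwards. You in fact spell out these containments via naturality of contraction more explicitly than the paper, which only asserts them.
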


\begin{proof} Let $(M, S^a, \sharp_a)$ be a graded Dirac manifold, $f: N \longrightarrow M$ be a smooth map and $\pi: M \longrightarrow P$ be a submersion. Clearly, if there are graded Dirac structures on $N$ and $P$ such that $f$ is a backward graded Dirac map and $\pi$ is a forward graded Dirac map, the correspoding subbundles $\widetilde S^a$ are unique, by definition. Therefore, it is sufficient to check that the choice of the $\sharp$ maps is unique as well. Let us study both cases separatedly:

\begin{enumerate}[\rm (i)]
    \item \underline{Pullbacks}: Notice that
    $(f^\ast)^{-1}\left(\widetilde S^a \right) \cap S^a \xrightarrow{f^\ast} \widetilde S^a$
    is surjective. Suppose there were two maps
    \[\widetilde \sharp_a \,, \widetilde{ \sharp}'_a : \widetilde S^a \longrightarrow \bigvee_{n+1-a} N \big / \widetilde K_{n+1-a}\]
    such that $f_\ast (\widetilde \sharp_a(f^\ast \alpha)) = f_\ast (\widetilde \sharp_a'(f^\ast \alpha)) = \sharp_a(\alpha)$ modulo $K_{n+1-a} + f_\ast \widetilde K_{n+1-a},$ for every $\alpha \in (f^\ast)^{-1}\left(\widetilde S^a \right) \cap S^a$. Then, for every $\alpha \in S^a$, we would have \[f_\ast (\widetilde \sharp_a(f^\ast \alpha)) - f_\ast (\widetilde \sharp_a'(f^\ast \alpha)) \in (K_{n+1-a} + f_\ast \widetilde K_{n+1-a}) \cap f_\ast\left(\, \bigvee_{n+1-a} N\right) \subseteq f_\ast \widetilde K_{n+1-a}\,,\] so that $f_\ast (\widetilde \sharp_a(f^\ast \alpha)-\sharp_a'(f^\ast \alpha)) \in f_\ast (\widetilde K_{n+1-a})$. Hence, $ \sharp_a(f^\ast \alpha)-\sharp_a'(f^\ast \alpha) \in (f_\ast)^{-1} \left(f_\ast (\widetilde K_{n+1-a}) \right)$. Since this last subspace is easily seen to be $\widetilde K_{n+1-a}$, we have $\sharp_a(f^\ast \alpha)-\sharp_a'(f^\ast \alpha) \in \widetilde K_{n+1-a}$, finishing the proof.
    \item \underline{Pushforwards}: Uniqueness in the case of pushforwards follows from the observation that $\pi_\ast (K_{n+1-a}) \subseteq \widetilde K_{n+1-a}$, given that $\pi^\ast$ is injective (since $\pi$ is a submersion). Therefore, if $\pi$ is a forward graded Dirac map, we necessarily have a commutative diagram
\[\begin{tikzcd}
	{(\pi^\ast)^{-1} \left( S^a\right)} && {S^a} \\
	{\bigvee_{n+1-a} P \big / \widetilde K_{n+1-a}} && {\bigvee_{n+1-a} M \big / K_{n+1-a}}
	\arrow["{\pi^\ast}"{description}, from=1-1, to=1-3]
	\arrow["{\widetilde \sharp_a}"', from=1-1, to=2-1]
	\arrow["{\sharp_a}", from=1-3, to=2-3]
	\arrow["{\pi_\ast}"{description}, from=2-3, to=2-1]
\end{tikzcd},\]
from which uniqueness follows.
\end{enumerate}
\end{proof}

\begin{remark} Actually, from the analysis done in Theorem \ref{thm:uniqueness_of_pullback_and_pushforward} we can explicitly compute the pullback and pushforward of regular graded Dirac structures. This will result useful in the sequel.
\end{remark}

We now turn onto the problem of determining the conditions for existence of both pullback and pushforward of graded Dirac structures. The matter of existence of pullbacks is purely a matter of regularity:

\begin{proposition} 
\label{prop:existence_of_pullback}
Let $f: N \longrightarrow M$ be a smooth map and let $M$ be endowed with a regular graded Dirac structure $(S^a,\sharp_a)$. Suppose 
\[ \widetilde S^{a} := \{f^\ast \alpha : \sharp_a(\alpha) \text{ is tangent to } f\}\]
defines a vector subbundle of $\bigwedge^{a} N$, for every $1 \leq a \leq n$. Then, there exists the pullback of the graded Dirac structure by $f$.
\end{proposition}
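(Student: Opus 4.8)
The subbundles $\widetilde S^a$ are already pinned down by the backward condition of Definition \ref{def:pullback_and_pushforward}, and the standing hypothesis guarantees they are honest vector subbundles; so the entire task is to manufacture morphisms $\widetilde\sharp_a$ assembling $(\widetilde S^a,\widetilde\sharp_a)$ into a regular graded Dirac structure for which $f$ is backward. The plan is to reuse the explicit description of $\widetilde\sharp_a$ that was already extracted in the uniqueness argument of Theorem \ref{thm:uniqueness_of_pullback_and_pushforward}. Two pointwise facts drive everything: the contraction identity $\iota_U f^\ast\alpha = f^\ast(\iota_{f_\ast U}\alpha)$ for $U \in \bigwedge^p \T_x N$ and $\alpha \in \bigwedge^a \T^\ast_{f(x)} M$, and the kernel computation $(f_\ast)^{-1}(f_\ast \widetilde K_{n+1-a}) = \widetilde K_{n+1-a}$ carried out there.

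Before constructing the morphisms I would check that the $\widetilde S^a$ form an admissible sequence, i.e. $\widetilde S^a = \iota_{\T N}\widetilde S^{a+1}$, as required by Definition \ref{def:graded_Dirac}. The inclusion $\supseteq$ is a direct consequence of the tensorial, recursive nature of $\sharp$ recorded in Remark \ref{remark:structure_at_n_determines_structure}: if $\beta \in S^{a+1}$ has $\sharp_{a+1}(\beta)=f_\ast V + K_{n-a}$ tangent to $f$, then for $v \in \T_x N$ one has $\iota_v f^\ast\beta = f^\ast(\iota_{f_\ast v}\beta)$ together with $\sharp_a(\iota_{f_\ast v}\beta) = \sharp_{a+1}(\beta)\wedge f_\ast v = f_\ast(V\wedge v) + K_{n+1-a}$, again tangent to $f$, so $\iota_v f^\ast\beta \in \widetilde S^a$. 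The reverse inclusion is exactly where the regularity hypothesis enters: I would combine the tangency bookkeeping above with the local constancy of ranks guaranteed by the subbundle assumption to deduce equality.

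Next I would define $\widetilde\sharp_a$. For $\tilde\alpha \in \widetilde S^a\big|_x$ write $\tilde\alpha = f^\ast\alpha$ with $\sharp_a(\alpha)$ tangent to $f$; by the very meaning of tangency there is $U \in \bigwedge^{n+1-a}\T_x N$ with $f_\ast U \equiv \sharp_a(\alpha)\pmod{K_{n+1-a}}$, and I set $\widetilde\sharp_a(\tilde\alpha) := U + \widetilde K_{n+1-a}$. Independence of this value from the representative $\alpha$ and from the lift $U$ is precisely the content of the kernel computation in Theorem \ref{thm:uniqueness_of_pullback_and_pushforward}, which shows every ambiguity lands in $\widetilde K_{n+1-a}$; the same computation shows that the resulting $\widetilde\sharp_a$ makes the diagram of Definition \ref{def:graded_Dirac_map} commute, so $f$ is automatically a graded Dirac map, backward by construction. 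Skew-symmetry of $\widetilde\sharp_a$ then reduces pointwise to that of $\sharp_a$ through the contraction identity $\iota_U f^\ast\alpha = \iota_{f_\ast U}\alpha$, the $K$-ambiguities of the representatives being killed by the annihilator defining $\widetilde K$.

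The genuine obstacle is integrability. Given sections $\tilde\alpha = f^\ast\alpha$ and $\tilde\beta = f^\ast\beta$ of $\widetilde S^a,\widetilde S^b$, I would choose local multivector-field representatives $U,V$ on $N$ of $\widetilde\sharp_a(\tilde\alpha),\widetilde\sharp_b(\tilde\beta)$ and representatives $W,W'$ on $M$ of $\sharp_a(\alpha),\sharp_b(\beta)$ arranged so that $U$ is $f$-related to $W$ and $V$ to $W'$ modulo the kernels. Naturality of the exterior derivative and of the Lie derivative along multivector fields then identifies the candidate integrability form $\widetilde\theta$ on $N$ with $f^\ast\theta$, where $\theta$ is the integrability form of Definition \ref{def:graded_Dirac} on $M$, via $\pounds_U f^\ast\beta = f^\ast\pounds_W\beta$ and $\dd f^\ast = f^\ast\dd$; while $f$-relatedness yields $f_\ast[U,V]\equiv [W,W']\circ f$, so that the identity $\sharp_{a+b-n}(\theta) = [W,W'] + K$ on $M$ descends to $\widetilde\sharp_{a+b-n}(\widetilde\theta) = [U,V] + \widetilde K$ on $N$. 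The delicate point, and where I expect the real work to lie, is that $f$-relatedness, $\pounds$ and the Schouten bracket are only controlled modulo $K$ and $\widetilde K$; I expect to invoke the integrability of the ambient structure itself to guarantee that these kernels are carried into one another by $\pounds_U$ and by $[\cdot,\cdot]$, so that the modular ambiguities do not obstruct the identity. With skew-symmetry and integrability in hand and the contraction relations established, $(\widetilde S^a,\widetilde\sharp_a)$ is a regular graded Dirac structure for which $f$ is backward, that is, the pullback.
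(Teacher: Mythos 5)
Your proposal is correct and follows essentially the same route as the paper, whose entire proof is the two-sentence remark that skew-symmetry follows from the definition and integrability from the naturality of the Schouten--Nijenhuis bracket; you simply fill in the details (construction of $\widetilde\sharp_a$ from the tangency condition, well-definedness modulo $\widetilde K_{n+1-a}$, the contraction identity $\iota_U f^\ast\alpha = f^\ast(\iota_{f_\ast U}\alpha)$, and the $f$-relatedness bookkeeping for the integrability form). The extra care you take with the modulo-$K$ ambiguities is exactly where the paper's ``follows easily'' is hiding work, and your outline resolves it correctly.
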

\begin{proof} Skew symmetry of the maps follows easily from the definition. Integrability follows from the naturality of the Schouten--Nijenhuis bracket. 
\end{proof}

In particular, if an embedding $i: N \longrightarrow M$ is well behaved with respect to the graded Dirac structure on $M$, then we can restrict said structure to $N$. However, the existence of pushforwards is only guaranteed under the following hypotheses:

\begin{proposition}
\label{prop:existence_of_pushforward}
Let $\pi : M \longrightarrow P$ be a submersion and let $M$ be endowed with a graded Dirac structure $(S^a, \sharp_a)$. Then, the pushforward of $(S^a, \sharp_a)$ exists if and only if for each pair of points $x$ and $y \in M$ such that $\pi(x) = \pi(y)$ we have
\begin{enumerate}[\rm (i)]
    \item $\{\alpha \in \bigwedge^{a} \T\,^\ast_{\pi(x)} P : (\dd_x\pi)^\ast \alpha \in S^a |_x\} = \{\alpha \in \bigwedge^{a} \T\,^\ast_{\pi(x)} P : (\dd_y\pi)^\ast \alpha \in S^a |_y\},$
    \item Given $\alpha \in \bigwedge^a \T\,^\ast_{\pi(x)} P$ such that $(\dd_x\pi)^\ast \alpha \in S^a |_x$ and $(\dd_\pi)^\ast \alpha \in S^a |_y$, then
    \[(\dd_x \pi)_\ast \sharp_a ((\dd_x \pi)^\ast \alpha) = (\dd_y \pi)_\ast \sharp_a ((\dd_y \pi)^\ast \alpha)\,.\]
\end{enumerate}
\end{proposition}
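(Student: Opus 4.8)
The plan is to prove both implications by reading off, from the defining conditions of a forward graded Dirac map, exactly the two stated compatibility requirements, and conversely to use them to construct the pushforward explicitly. Throughout I will invoke the observation already made in the proof of Theorem \ref{thm:uniqueness_of_pullback_and_pushforward}, namely that $\pi_\ast K_{n+1-a} \subseteq \widetilde K_{n+1-a}$ because $\pi$ is a submersion (hence $\pi^\ast$ is injective). This collapses the quotient $\bigvee_{n+1-a}P \big/\left(\widetilde K_{n+1-a} + \pi_\ast K_{n+1-a}\right)$ appearing in Definition \ref{def:graded_Dirac_map} down to $\bigvee_{n+1-a}P \big/ \widetilde K_{n+1-a}$, so that the two $\sharp$-data can be compared directly.

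For \emph{necessity}, assume a pushforward $(\widetilde S^a, \widetilde\sharp_a)$ exists, so that $\pi$ is a forward graded Dirac map. Translating the forward condition $S^a|_{f(x)} = (f^\ast)^{-1}(\widetilde S^a|_x)$ to the present setting (with $M$ the source carrying the original structure and $P$ the target carrying the new one) gives $\widetilde S^a|_{\pi(x)} = \{\alpha \in \bigwedge^a \T^\ast_{\pi(x)}P : (\dd_x\pi)^\ast\alpha \in S^a|_x\}$ for every $x$; since the left-hand side depends only on $\pi(x)$, comparing the values at two points $x,y$ of the same fibre yields condition (i) immediately. Likewise, the graded Dirac map diagram, together with $\pi_\ast K \subseteq \widetilde K$, forces $\widetilde\sharp_a(\alpha) = (\dd_x\pi)_\ast \sharp_a\left((\dd_x\pi)^\ast\alpha\right) \bmod \widetilde K_{n+1-a}$, and comparing the expressions obtained from $x$ and from $y$ gives condition (ii).

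For \emph{sufficiency}, I would construct the candidate structure and verify it. Condition (i) makes $\widetilde S^a|_p := \{\alpha : (\dd_x\pi)^\ast\alpha \in S^a|_x\}$ independent of the choice of $x \in \pi^{-1}(p)$; using a local trivialization of $\pi$ and the subbundle property of $S^a$ on $M$, one checks that $\widetilde S^a$ is a vector subbundle and that the contraction relations $\widetilde S^{a-1} = \iota_{\T P}\widetilde S^a$ descend from the corresponding ones on $M$. Condition (ii), combined with $\pi_\ast K \subseteq \widetilde K$, makes $\widetilde\sharp_a(\alpha) := (\dd_x\pi)_\ast \sharp_a\left((\dd_x\pi)^\ast\alpha\right) \bmod \widetilde K_{n+1-a}$ a well-defined vector bundle morphism. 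Skew-symmetry of $\widetilde\sharp_a$ then follows from that of $\sharp_a$ on $M$ via the adjunction $\pi^\ast(\iota_{\pi_\ast U}\beta) = \iota_U \pi^\ast\beta$ and injectivity of $\pi^\ast$: pulling $\iota_{\widetilde\sharp_a(\alpha)}\beta$ back to $M$ turns it into $\iota_{\sharp_a(\pi^\ast\alpha)}\pi^\ast\beta$, where the skew-symmetry relation of Definition \ref{def:graded_Dirac} applies, after which one pulls back down.

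The remaining and, I expect, \emph{hardest} step is integrability. Here I would pull sections $\alpha$ of $\widetilde S^a$ back to sections $\pi^\ast\alpha$ of $S^a$, choose $\pi$-projectable multivector representatives $U$ of $\sharp_a(\pi^\ast\alpha)$ (possible after modifying by an element of $K$, since $\pi$ is a submersion and $\pi_\ast K \subseteq \widetilde K$), and transport the integrability identity of Definition \ref{def:graded_Dirac} from $M$ to $P$ using the naturality relations $\pi_\ast[U,V] = [\pi_\ast U, \pi_\ast V]$ and $\pi^\ast \pounds_{\pi_\ast U}\beta = \pounds_U \pi^\ast\beta$ for projectable $U,V$. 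The delicate points are that the representatives can indeed be chosen projectable and that the Lie-derivative and exact terms defining $\theta$ are compatible with $\pi^\ast$; once this is arranged, the integrability of $\sharp_a$ on $M$ projects to that of $\widetilde\sharp_a$ on $P$. Finally, by construction $\pi$ is a forward graded Dirac map, so $(\widetilde S^a,\widetilde\sharp_a)$ is the desired pushforward, whose uniqueness was already established in Theorem \ref{thm:uniqueness_of_pullback_and_pushforward}.
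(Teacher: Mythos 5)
Your proposal is correct and follows essentially the same route as the paper: conditions (i) and (ii) are used to define $\widetilde S^a$ and $\widetilde \sharp_a$ fibrewise, after which skew-symmetry and integrability are transported from $M$ to $P$ by naturality of contraction, the Lie derivative and the Schouten--Nijenhuis bracket. The paper's own proof is in fact terser---it omits the necessity direction entirely and leaves unaddressed the same two delicate points you flag (the subbundle regularity of $\widetilde S^a$ and the choice of projectable representatives in the integrability check)---so your more explicit treatment, including the reduction of the quotient via $\pi_\ast K_{n+1-a} \subseteq \widetilde K_{n+1-a}$, is if anything more complete.
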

\begin{proof} Clearly, the previous hypotheses allow us to define induced maps $\widetilde \sharp_a : \widetilde S^a \longrightarrow \bigvee_{n+1-a} P / \widetilde K_{n+1-a}$ that are skew-symmetric. Integrability follows from the naturality of the Lie derivative and Schouten--Nijenhuis bracket.
\end{proof}

Such hypotheses are guaranteed, for instance, when dealing with a quotient by a Lie group action that preserves the graded Dirac structure, so that the space of orbits inherits a natural graded Dirac structure, namely the pushforward of Proposition \ref{prop:existence_of_pushforward}. However, an in-depth study of reduction is not the objective of this paper.


\subsection{Fibered graded Dirac manifolds}

Now, we shall focus on describing graded Dirac structures compatible with a fibered manifold $\tau: M \longrightarrow X$, which is the structure that we use to describe classical field theories.

Let $\tau: M \rightarrow X$ be a fibered manifold, namely, a surjective submersion. Let us first introduce some bit of terminology that we will use constantly throughout our study:

\begin{Def}[Horizontal, basic, semi-basic forms] A differential $a$-form $\alpha \in \Omega^a(M)$ is called:
\begin{enumerate}[\rm (i)]
    \item \textbf{Semi-basic} over $X$ with respect to the projection $\tau: M \longrightarrow X$ if $\iota_v \alpha = 0$ for every vector $v \in  \T M$ such that $\dd \tau \cdot v = 0$.
    \item \textbf{Basic} over $X$ with respect to the projection $\tau: M \longrightarrow X$ if $\alpha = \tau^\ast \beta$, for certain $a$-form $\beta$ (which will be unique) on $X$.
    \item  \textbf{$r$-horizontal} over $X$ with respect to the projection $\tau: M \longrightarrow X$, for $0 \leq r \leq a$ if $\iota_{v_1, \dots, v_{a+1-r}} \alpha = 0$ for every sequence of vectors $v_1, \dots, v_{a+1-r} \in  \T M$ such that $\dd \tau \cdot v_i = 0$.
\end{enumerate}
\end{Def}

\begin{remark} Notice that the notion of semi-basic $a$-form and $a$-horizontal $a$-form coincide.
\end{remark}

\begin{Def}[Fibered graded Dirac manifolds]
\label{def:fibered_graded_Dirac}
Let $\tau \colon M \longrightarrow X$ be a fibered manifold over an $n$-dimensional manifold $X$. A graded Dirac structure on $M$ of order $n$, $(S^a, \sharp_a)$ is called \textbf{fibered over $X$} if the following two conditions hold:
\begin{enumerate}[(i)]
    \item  $S^a$ consists of {$(a-1)$-horizontal} forms over $X$.
    \item Every semi-basic $a$-form over $X$ is in $S^a$.
\end{enumerate}
\end{Def}

\begin{remark}
Let us motivate this definition. First, as we will see in Theorem \ref{thm:defined_evolution_implies_Hamiltonian}, Hamiltonian forms on a graded Dirac manifold will be the forms that are candidates to have defined evolution. Since we want this evolution to be linear on the future Hamiltonian, we ask for the $a$-forms to be $(a-1)$-horizontal (or, equivalently, at most $1$-vertical). Second, it is only natural to ask for basic forms to have defined evolution. This issue is represented by the second condition. 
\end{remark}

We now proceed to study the relationship of fibered structures to the previous constructions of graded Dirac maps, pullbacks and pushforwards:

\begin{proposition} Let $\tau_1 \colon N \longrightarrow X$, $\tau_2\colon P \longrightarrow X$ and $\tau\colon M \longrightarrow X$ be fibered manifolds and let $\tau \colon M \longrightarrow X$ be endowed with a fibered graded Dirac structure. Then, both pullback and pushforwards (if they are well defined) by smooth maps $f: N \longrightarrow M$ and submersions $\pi\colon M \longrightarrow P$, respectively, are fibered when $f$ and $\pi$ are fibered maps.
\end{proposition}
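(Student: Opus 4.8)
The plan is to verify, for every degree $a = 1, \dots, n$, the two conditions of Definition~\ref{def:fibered_graded_Dirac} for the pushforward structure on $P$ and for the pullback structure on $N$. The common engine is that a fibered map intertwines the vertical subbundles: since $\tau \circ f = \tau_1$ we have $f_\ast(\ker \dd\tau_1) \subseteq \ker \dd\tau$, and since $\tau_2 \circ \pi = \tau$ we have $\pi_\ast(\ker \dd\tau) \subseteq \ker \dd\tau_2$, where moreover every $\tau_2$-vertical vector lifts to a $\tau$-vertical vector because $\pi$ is a submersion. Two consequences will be used repeatedly: first, $f^\ast$ and $\pi^\ast$ send $(a-1)$-horizontal (resp. semi-basic) forms to $(a-1)$-horizontal (resp. semi-basic) forms; second, because $f^\ast \circ (\dd\tau)^\ast = (\dd\tau_1)^\ast$ and the semi-basic $1$-forms are exactly the images of $(\dd\tau)^\ast$ and $(\dd\tau_1)^\ast$, the map $f^\ast$ restricts to a fibrewise isomorphism between semi-basic forms on $M$ and semi-basic forms on $N$, while $\pi^\ast$ is injective.

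I would dispose of the pushforward first, as it is the transparent case. Forwardness means $\widetilde S^a|_{\pi(x)} = \{\alpha : \pi^\ast \alpha \in S^a|_x\}$. For condition (i), given $\alpha \in \widetilde S^a$ and $\tau_2$-vertical $w_1, w_2$, lift them to $\tau$-vertical $v_1, v_2$; then $\pi^\ast(\iota_{w_1, w_2}\alpha) = \iota_{v_1, v_2}(\pi^\ast\alpha) = 0$ because $\pi^\ast\alpha \in S^a$ is $(a-1)$-horizontal, and injectivity of $\pi^\ast$ forces $\iota_{w_1, w_2}\alpha = 0$. For condition (ii), a semi-basic form $\beta$ on $P$ has $\pi^\ast\beta$ semi-basic on $M$, hence $\pi^\ast\beta \in S^a$ by the fibered hypothesis on $M$, so $\beta \in \widetilde S^a$. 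Thus the pushforward is fibered.

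For the pullback, condition (i) is immediate: each element of $\widetilde S^a$ is $f^\ast\alpha$ with $\alpha \in S^a$ being $(a-1)$-horizontal, and $f^\ast$ preserves $(a-1)$-horizontality. The content of the statement, and the step I expect to be the main obstacle, is condition (ii). Let $\beta$ be semi-basic on $N$. By the fibrewise isomorphism above there is a unique semi-basic $\alpha$ on $M$ with $f^\ast\alpha = \beta$, and $\alpha \in S^a$ by the fibered hypothesis on $M$; so it only remains to produce a preimage of $\beta$ under $f^\ast$ whose $\sharp_a$-image is tangent to $f$ modulo $K_{n+1-a}$, which is what membership in $\widetilde S^a$ requires. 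The key structural fact I would isolate is that $\sharp_a$ sends semi-basic forms to vertical multivectors. Using Remark~\ref{remark:structure_at_n_determines_structure}, write the semi-basic form as $\iota_U \gamma$ with $\gamma$ a semi-basic $n$-form and $U \in \bigvee_{n-a} M$, so that $\sharp_a(\iota_U\gamma) = \sharp_n(\gamma) \wedge U + K_{n+1-a}$; everything therefore reduces to the single assertion that $\sharp_n(\gamma)$ is $\tau$-vertical modulo $K_1$ for the generating semi-basic $n$-form $\gamma$. Granting this, $\sharp_a(\iota_U\gamma)$ carries a vertical leg and, since in the relevant setting $f$ is a fibered immersion identifying the vertical multivectors of $N$ with those of $M$ tangent to $f$, the class $\sharp_a(\alpha)$ is tangent to $f$ (correcting $\alpha$ if necessary by an element of $\ker f^\ast \cap S^a$, which leaves $f^\ast\alpha = \beta$ unchanged and is where the well-definedness of the pullback is invoked); hence $\beta \in \widetilde S^a$ and the pullback is fibered.

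The verticality of $\sharp_n(\gamma)$ is the delicate point: skew-symmetry alone only shows that this assertion is self-consistent (it reduces the claim for degree $a$ to the same claim for degree $n+1-a$), so pinning it down should require the integrability condition of Definition~\ref{def:graded_Dirac}, together with the explicit description of the pullback $\sharp$-maps extracted from the proof of Theorem~\ref{thm:uniqueness_of_pullback_and_pushforward}. I would therefore treat this last reduction with care, and confirm it transparently in the two running examples of regular Lagrangians and Yang--Mills, where the $\sharp$-image of the base volume form is manifestly vertical and the abstract obstacle dissolves.
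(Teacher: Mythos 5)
The paper itself offers no argument here (its proof is a one-line appeal to the definitions), so there is nothing to compare routes with; what matters is whether your reconstruction closes. Your pushforward half does: the lifting of $\tau_2$-vertical vectors to $\tau$-vertical vectors, the identity $\iota_{v_1\wedge v_2}\pi^\ast\alpha=\pi^\ast\iota_{w_1\wedge w_2}\alpha$, and the injectivity of $\pi^\ast$ give both conditions of Definition \ref{def:fibered_graded_Dirac} cleanly, and condition (i) for the pullback is likewise fine. I would also reassure you on the point you flag as delicate: the verticality of $\sharp_n(\gamma)$ for semi-basic $\gamma$ does \emph{not} need integrability. By Remark \ref{remark:structure_at_n_determines_structure} one has $\sharp_1(\tau^\ast\theta)=\sharp_n(\gamma')\wedge U+K_n$ with $\gamma'$ semi-basic of degree $n$, and skew-symmetry forces $\iota_{\sharp_n(\gamma')}\gamma'=-\iota_{\sharp_n(\gamma')}\gamma'=0$; since any two semi-basic $n$-forms are proportional, $\iota_{\sharp_1(\tau^\ast\theta)}\gamma=0$ and hence $\iota_{\sharp_n(\gamma)}\tau^\ast\theta=0$, which is the verticality you want.

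The genuine gap is the step after that: you pass from ``$\sharp_a(\alpha)$ carries a vertical leg'' to ``$\sharp_a(\alpha)$ is tangent to $f$ modulo $K_{n+1-a}$'' by asserting that a fibered immersion ``identifies the vertical multivectors of $N$ with those of $M$ tangent to $f$.'' A fibered map only gives $f_\ast(\ker\dd\tau_1)\subseteq\ker\dd\tau$; it does not make $\tau$-vertical vectors of $M$ along $f(N)$ tangent to $f$, and the correction by $\ker f^\ast\cap S^a$ need not repair this. Concretely, take $n=1$, $X=\mathbb{R}_t$, $M=\mathbb{R}^3_{(t,q,p)}$ with $S^1=\T^\ast M$ and $\sharp_1$ given by the constant Poisson bivector $\partial_t\wedge\partial_q+\partial_q\wedge\partial_p$ (so $\sharp_1(\dd t)=\partial_q$ is vertical, as it must be, but nonzero), and let $N=\{q=0\}$ with the inclusion $f$. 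This structure is fibered over $X$, the pullback exists ($\widetilde S^1=\langle \dd t+\dd p\rangle$ is a line bundle), yet every $\alpha$ with $f^\ast\alpha=\dd t$ has $\sharp_1(\alpha)$ with nonzero $\partial_q$-component, so $\dd t\notin\widetilde S^1$ and condition (ii) of Definition \ref{def:fibered_graded_Dirac} fails. So the assertion you would need is false for a general fibered map, and the pullback half of the statement requires an additional hypothesis (e.g.\ that $\sharp_a$ of semi-basic forms lies in $K_{n+1-a}+f_\ast\bigvee T N$, as happens in the paper's examples where $\sharp_n(\dd^n x)=0$); your argument cannot be completed as written.
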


\begin{proof} Both are easily check from the definition of backward and forward graded Dirac maps.
\end{proof}

Let us study an example which will be of relevance for our purposes:

\begin{example}
\label{example:multisymplectic_field_theory}
Let $\pi \colon Y \longrightarrow X$ be a fibered manifold over an $n$-dimensional base, $X$, and let $\widetilde M := \bigwedge^n_2 Y \subseteq \bigwedge^n Y$ denote the vector subbundle of $(n-1)$-horizontal forms, that is, the space consisting of those $\alpha \in \bigwedge^n_2 Y$ such that $\iota_{e_1 \wedge e_2} \alpha = 0$, for every $e_i \in \T Y$ with $\dd \pi (e_i) = 0$. If $(x^\mu, y^i)$ denote fibered coordinates on $Y$ so that $\pi(x^\mu, y^i) = (x^\mu)$, the previous vector bundle is generated by the forms 
\[
\dd^n x\,, \quad \dd y^i \wedge \dd ^{n-1}x_\mu\,,
\]
where $\dd^n x$ and $\dd^{n-1}x_\mu$ denote $\dd x^1 \wedge \cdots \wedge \dd x^n$ and $\iota_{\pdv{x^\mu}} \dd^{n}x$, respectively. Therefore, the subbundle $\bigwedge^n_2 Y$ admits natural coordinates $(x^\mu, y^i, p, p^\mu_i)$ representing the form
$\alpha = p \dd^n x + p^\mu_i \dd y^i \wedge \dd^{n-1}x_\mu$.
There is a canonical form, called the Liouville form (or the canonical multisymplectic potential) 
\[
\Theta = p \dd^n x + p^\mu_i \dd y^i \wedge \dd^{n-1}x_\mu\,,
\]
which induces a canonical multisymplectic form on this manifold, which is given by the closed non-degenerate form
\[
\Omega = -\dd \Theta = -\dd p \wedge \dd ^n x - \dd p^\mu_i \wedge \dd y^i \wedge \dd^{n-1}x_\mu\,.
\]
This closed form induces a map $\flat: \T \widetilde M \longrightarrow \bigwedge^n \widetilde M$ given by $\flat(v) := \iota_v \Omega$. This map defines a monomorphism, so that its inverse $\sharp_n: = \flat^{-1}$ is a well defined map between $S^n := \Ima \flat$ and $\T M$: 
\[
\sharp_n \colon S^n \longrightarrow \T M\,, \quad \sharp_n(\alpha) = \flat ^{-1}(\alpha)\,.
\]
In coordinates, this map reads as follows
\begin{align*}
    &\sharp_n(\dd^n x) = -\pdv{p}\,, \quad &\sharp_n(\dd y^i \wedge \dd^{n-1}x_\mu) = -\pdv{p^\mu_i}\,,\\
    &\sharp_n(\dd p^\mu_i \wedge \dd^{n-1}x_\mu) =  \pdv{y^i}\,, \quad & \sharp_n (\dd p \wedge \dd^{n-1}x_\nu - \dd p^\mu_i \wedge \dd y^i \wedge \dd ^{n-1}x_{\mu\nu}) =  \pdv{x^\nu}\,. 
\end{align*}
The reader may check that this is a graded Dirac structure but, as we can see, it is not fibered over $X$. This setting (usually done via the form $\Omega$) is called the \textit{extended} covariant formalism of classical field theory. Then, there is the \textit{reduced} formalism, which takes place in the manifold obtained by quotienting $\widetilde M$ by semi-basic forms, $\bigwedge^n_1 Y$ so that the manifold is $M := \bigwedge^n_2 Y \big / \bigwedge^n_1 Y$ with induced coordinates $(x^\mu, y^i, p^\mu_i)$. When working in the reduced formalism, the reader may find one of two options in the literature:
\begin{enumerate}[\rm (i)]
    \item The first option, given a section (representing a Hamiltonian) $h: M \longrightarrow \widetilde M$, to work with the multisymplectic structure induced by $\Omega_h= h^\ast \Omega$, see \cite{Rom_n_Roy_2009, cci1991diff.geom.appl.}, for instance.
    \item The second option, to identify those basic Hamiltonian forms (with respect to the projection $\widetilde M \longrightarrow M$) whose bracket is still basic, and work with the reduced bracket on $M$, see \cite{Marco_PoissonPoincare, castrillon_lopez_remarks_2003}, among others.
\end{enumerate}
The first method is uncomfortable if one aims to develop a bracket theory independent of the Hamiltonian, so we will work with the second \cite{castrillon_lopez_remarks_2003}. This actually corresponds to our notion of pushforward of a graded Dirac structure. The induced structure is easily seen to be $S^n 
 = \langle\dd^n x, \dd y^i \wedge \dd^{n-1}x_\mu, \dd p^\mu_i \wedge \dd^{n-1}x_\mu \rangle$ with $\sharp_n$ given by
\[
 \sharp_n(\dd^n x) = 0\,, \quad \sharp_n(\dd y^i \wedge \dd^{n-1}x_\mu) = -\pdv{p^\mu_i}\,, \quad
    \sharp_n(\dd p^\mu_i \wedge \dd^{n-1}x_\mu) = \pdv{y^i}\,.
\]
We immediately see that this is a fibered graded Dirac manifold over $X$.
\end{example}

\begin{remark} In the extended formalism there are sveral and equivalent ways of determinining the dynamics, see \cite{wagner2024notesequivalentformulationshamiltonian} for a recent treatment of the subject.
\end{remark}

\section{Extensions of graded Poisson brackets}
\label{section:extensions}

We now turn to the main objective of our study: defining extensions of graded Poisson brackets on a regular graded Dirac manifold. More particularly, we are interested in finding a bracket 
\[
\Omega_H^a(U) \otimes \Omega_H^b(U) \longrightarrow \Omega^{a+b - (n-1)}_H(U),
\]
that satisfies as much properties of Definition \ref{def:graded_Poisson_bracket} as possible, where $a, b \geq 0$ are now bounded from above only by the dimension of $M$. As we will immediately see, this bracket will depend upon certain choices. Nevertheless, there will be pairs of forms such that the bracket is independent of all choices made. These pairs will then be used to define dynamics.

We will use the techniques employed in \cite{grabowski_z-graded_1997, Michor_extensions}, that is, we start by defining a suitable extension of the $\sharp$ mappings to a wider family of forms, and then the brackets will be defined by Equation \eqref{Poisson_bracket_definition} up to a constant, namely
\[\{\alpha, \beta\} := (-1)^{\deg_H \beta} C \iota_{\sharp(\dd \beta)} \dd \alpha\,,\] where $\alpha$ and $\beta$ are suitable forms of arbitrary order and $\sharp(\dd \beta)$ is the aforementioned extension. Our study is divided in three steps:
\begin{enumerate}[\rm (i)]
    \item \underline{First step}: In order to motivate the construction in the two subsequent steps, we show that the maps \[\sharp_a: S^a \longrightarrow \bigvee_{n+1-a} M / K_{n+1-a}\] for $1 \leq a \leq n$ may be extended to maps $\sharp_a : S^b \longrightarrow S^{b-a} \otimes \bigvee_{n+1-a} M / K_{n+1-a}$. Furthermore, one may define the brackets using any possible extension, since we will show that there is a constant $C_{a,b}$ such that $\iota_{\sharp_b(\beta)} \gamma = C_{a, b}  \cdot \iota_{\sharp_a(\beta)}  \gamma$, for every $\beta \in S^b$, $b \geq a$ and $\gamma \in S^c$ with $c \geq n + 1 - a$.
    \item \underline{Second step}: We show how to extend the first $\sharp$-mapping
    \[ \sharp_1 \colon S^1 \longrightarrow \bigvee_n M / K_n\] to a map $\widetilde \sharp_1:  (S^1)^{\wedge a} \longrightarrow \bigwedge^{a-1} M \otimes \bigvee_{n} M / K_n$ in a canonical way, for arbitrary $a$. This extension will allow us to introduce a bracket of Hamiltonian $(n-1)$-forms and arbitrary forms $\Theta$ with $\dd \Theta \in (S^1)^{\wedge a}$ using Equation \eqref{Poisson_bracket_definition}. This is the bracket that we will use to define dynamics.
    \item \underline{Last step}: We deal with the issue of extending the rest of the $\sharp$-mappings to forms of arbitrary degree. Compatible extensions of $\widetilde \sharp_j$ with the map $\widetilde \sharp_1$ presented in the second step will only exist for a particular subfamily of forms, those whose exterior derivative takes values in a particular subbundle which we denote by $S^a[j]$. Thus, we will look for extensions $\widetilde \sharp_j : S^a[j] \longrightarrow \bigwedge^{a-j} M \otimes \bigvee_{n+1-j} M / K_j\,$, and study their main properties.
\end{enumerate}

\subsection{First step: Extension of \texorpdfstring{$\sharp_a$}{} to \texorpdfstring{$S^b$}{}, for \texorpdfstring{$n \geq b \geq a$}{}}
Just as a motivation, let us first notice that we may extend the mapping
\[
\sharp_a: S^a \longrightarrow \bigvee_{n+1-a}M/ K_{n+1-a}
\]
to every subbundle $S^b,$ with $n \geq b \geq a$ as follows.
\begin{proposition}
\label{prop:Extension_to_Hamiltonian}
For $1 \leq a \leq n$ and $a \leq b \leq n$ there is an unique extension \[
 \sharp_a: S^b \longrightarrow S^{b - a} \otimes \bigvee_{n+1-a} M/ K_{n+1-a}\,,
\]
such that, if $\beta \in S^b$ and $\gamma \in S^c$ with $c \geq n+1-a$, then we have
\begin{equation}
\label{equation_condition_of_extension}
    \iota_{\sharp_a(\beta)}\gamma = \binom{b + c - (n+1)}{b - a} \iota_{\sharp_b(\beta)} \gamma = \frac{(b+c- n+1)!}{ (b-a)! (a + c - (n+1))!} \iota_{\sharp_b(\beta)} \gamma\,.
\end{equation}
Here, contraction by a multivector valued form is defined as $\iota_{\theta \otimes u} \gamma := \theta \wedge \iota_u \gamma$, generalizing the usual contraction by vector valued forms (see \cite{grabowski_z-graded_1997}).
\end{proposition}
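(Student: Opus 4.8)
The plan is to define the extension by an explicit tensorial formula built from $\sharp_b$ and the (coordinate-free) identity endomorphism of $\T M$, to read off the characterizing identity \eqref{equation_condition_of_extension} from a classical Euler/number-operator computation, and to deduce both uniqueness and the subtle claim that the form legs land in $S^{b-a}$ from the non-degeneracy of the contraction pairing between $\bigvee_{n+1-a} M/K_{n+1-a}$ and $S^{n+1-a}$ (which holds because $K_{n+1-a}$ is the annihilator of $S^{n+1-a}$). Concretely, writing $\partial_{\widetilde\mu}$ for $\pdv{^{b-a}}{^{b-a}x^{\widetilde\mu}}$, I would set
\[
\sharp_a(\beta):=\dd^{\,b-a}x^{\widetilde\mu}\otimes\left(\partial_{\widetilde\mu}\wedge\sharp_b(\beta)\right),
\]
with the ordered multi-index summation of the Notation section, so that the form leg has degree $b-a$ and the multivector leg has degree $(b-a)+(n+1-b)=n+1-a$. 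Since $\dd^{\,b-a}x^{\widetilde\mu}\otimes\partial_{\widetilde\mu}$ is the coordinate-free identity on $\bigwedge^{b-a}M\otimes\bigvee_{b-a}M$, the formula is coordinate-independent; in any case, once \eqref{equation_condition_of_extension} is checked, coordinate-independence is automatic from uniqueness.

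For the contraction identity I would invoke the number operator: on a $p$-form $\omega$ the ordered sum $\dd^{\,k}x^{\widetilde\mu}\wedge\iota_{\partial_{\widetilde\mu}}\omega$ acts as multiplication by $\binom{p}{k}$, since it selects $k$ of the $p$ legs of $\omega$. Using the convention $\iota_{\theta\otimes u}\gamma=\theta\wedge\iota_u\gamma$ and the rule $\iota_{\partial_{\widetilde\mu}\wedge\sharp_b(\beta)}=\pm\iota_{\partial_{\widetilde\mu}}\iota_{\sharp_b(\beta)}$ (signs fixed by the conventions of \cite{Marle1997}), for $\gamma\in S^c$ with $c\geq n+1-a$ I obtain $\iota_{\sharp_a(\beta)}\gamma=\dd^{\,b-a}x^{\widetilde\mu}\wedge\iota_{\partial_{\widetilde\mu}}\big(\iota_{\sharp_b(\beta)}\gamma\big)$; as $\omega:=\iota_{\sharp_b(\beta)}\gamma$ has degree $p=b+c-(n+1)$, this equals $\binom{b+c-(n+1)}{b-a}\iota_{\sharp_b(\beta)}\gamma$, which is precisely \eqref{equation_condition_of_extension}.

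Two things remain for the formula to define a section of $S^{b-a}\otimes\bigvee_{n+1-a}M/K_{n+1-a}$. First, the multivector leg must be well defined modulo $K_{n+1-a}$: if $U\in K_{n+1-b}$ and $V\in\bigvee_{b-a}M$, then for $\gamma\in S^{n+1-a}$ one has $\iota_V\gamma\in S^{n+1-b}$ (iterating $S^{j-1}=\iota_{\T M}S^{j}$), whence $\iota_{V\wedge U}\gamma=\pm\iota_U\iota_V\gamma=0$, so $V\wedge U\in K_{n+1-a}$ and wedging $\partial_{\widetilde\mu}$ with the class $\sharp_b(\beta)$ descends to the quotient. The harder point — the one genuinely using the graded Dirac structure, and which I expect to be the main obstacle — is that the form legs lie in $S^{b-a}$ and not merely in $\bigwedge^{b-a}M$. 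For this I would use the identity already established: specializing to $c=n+1-a$ (so $p=b-a$ and the binomial is $1$) gives $\iota_{\sharp_a(\beta)}\gamma=\iota_{\sharp_b(\beta)}\gamma\in S^{b-a}$ for every $\gamma\in S^{n+1-a}$, the membership following again from $\iota_{\bigvee_{n+1-b}}S^{n+1-a}\subseteq S^{b-a}$. Expanding $\sharp_a(\beta)=\theta_k\otimes u_k$ with $\{u_k\}$ a local frame of $\bigvee_{n+1-a}M/K_{n+1-a}$ and choosing, by non-degeneracy of the pairing, sections $\gamma^{(l)}\in S^{n+1-a}$ dual to the $u_k$, I recover $\theta_l=\iota_{\sharp_a(\beta)}\gamma^{(l)}\in S^{b-a}$, as required.

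Finally, for uniqueness, if $\sharp_a$ and $\sharp'_a$ both satisfy \eqref{equation_condition_of_extension}, then $T:=\sharp_a(\beta)-\sharp'_a(\beta)$ obeys $\iota_T\gamma=0$ for all $\gamma\in S^{n+1-a}$. Writing $T=\theta_k\otimes u_k$ with the $\theta_k$ linearly independent in $\bigwedge^{b-a}M$, the vanishing of $\sum_k(\iota_{u_k}\gamma)\,\theta_k$ forces $\iota_{u_k}\gamma=0$ for all $k$ and all $\gamma\in S^{n+1-a}$, i.e. $u_k\in K_{n+1-a}$, so $T=0$. The remaining verifications are routine bookkeeping with the multi-index and sign conventions; the essential structural input is confined to the extraction of the form legs into $S^{b-a}$.
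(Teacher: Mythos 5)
Your proposal is correct and follows essentially the same route as the paper: the paper also realizes the extension as $\sharp_a(\beta)=\sharp_b(\beta)\wedge\mathds{1}_{b-a}$ (your coordinate formula is exactly this identity-valued form, up to the order of the wedge, which introduces a sign $(-1)^{(b-a)(n+1-b)}$ that you leave to convention and which is eliminated by writing $\sharp_b(\beta)\wedge\partial_{\widetilde\mu}$ rather than $\partial_{\widetilde\mu}\wedge\sharp_b(\beta)$), and it likewise obtains the binomial coefficient from the fact that contraction with $\mathds{1}_{k}$ acts as $\binom{p}{k}$ on $p$-forms. The only organizational difference is that the paper constructs the map by dualizing the pairing $S^b\otimes S^{n+1-a}\to S^{b-a}$ at $c=n+1-a$ and then verifies the general case, which renders your separate checks (well-definedness modulo $K_{n+1-a}$ and membership of the form legs in $S^{b-a}$) automatic.
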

\begin{proof}
Particularizing Equation \eqref{equation_condition_of_extension} for $\gamma \in S^{n+1-a},$ we must have
$
\iota_{\sharp_a(\beta)} \gamma = \iota_{\sharp_b(\beta)} \gamma.
$
The second term defines a linear mapping
\[
S^{b} \otimes S^{n +1 - a} \longrightarrow S^{b - a}.
\]
Dualizing the previous mapping, we see that there exists an unique map
\[
\sharp_a\colon S^b \longrightarrow S^{b-a} \otimes \bigvee_{n+1-a}M/K_{n+1-a}
\]
that satisfies Equation \eqref{equation_condition_of_extension} for $c = n+1-a$. It only remains to check that such a map will also satisfy it for arbitrary $n \geq c \geq n +1 - a$. This will be a consequence of the following lemmas:
\begin{lemma}
\label{lemma1}
Let $V$ be a finite dimensional vector space and $u \in \bigwedge^p V.$ For $a \geq p$ we have an induced map \[
\bigwedge^a V^\ast \longrightarrow \bigwedge^{a-p} V^\ast, \, \alpha \mapsto \iota_u \alpha.
\]
Then, interpreting this linear map as an element $
C_u \in \bigwedge^{a - p} V^\ast\otimes \bigwedge^a V$, we have \[
C_u = (-1)^{p (a- p)} \mathds{1}_{a-p} \wedge u = u \wedge \mathds{1}_{a-p},
\]
where $\mathds{1}_{a-p} \in \bigwedge^{a-p}
V^\ast \otimes \bigwedge^{a-p} V$ denotes the identity, and the exterior product is given by $(\theta \otimes v) \wedge u = \theta \otimes (v \wedge u)$ or $u \wedge (\theta \otimes v) = \theta \otimes (u \wedge v)$.
\end{lemma}
\begin{proof} Let $\{e_i\}$ denote a basis of $V$ and suppose $u = u^{i_1 \dots i_p} e_{i_1} \wedge \cdots \wedge e_{i_p}$. Then, \[
\iota_u e^{j_1} \wedge \cdots\wedge  e^{j_a} = \delta^{j_1 \dots j_a}_{i_1 \dots i_p k_{p+1} \dots k_a} u^{i_1 \dots i_p} e^{k_{p+1}} \wedge \cdots \wedge e^{k_a}\,.
\]
Hence, we have 
\[C_u = \delta^{j_1 \dots j_a}_{i_1 \dots i_p k_{p+1} \dots k_a} u^{i_1 \dots i_p} e^{k_{p+1}} \wedge \cdots \wedge e^{k_a} \otimes e_{j_1} \wedge \cdots\wedge  e_{j_a}.\]
Furthermore, $\mathds{1}_{a-p} = e^{k_{p+1}} \wedge \cdots \wedge e^{k_a} \otimes e_{k_{p+1}} \wedge \cdots \wedge e_{k_a}$ and we have
\begin{align*}
    \mathds{1}_{a-p} \wedge u &=  u^{i_1 \dots i_p} e^{k_{p+1}} \wedge \cdots \wedge e^{k_a} \otimes e_{k_{p+1}} \wedge \cdots \wedge e_{k_a} \wedge e_{i_1} \wedge \cdots \wedge e_{i_p}\\
    &= \delta^{j_1 \dots j_a}_{k_{p+1} \dots k_ai_1 \dots i_p}u^{i_1 \dots i_p}e^{k_{p+1}} \wedge \cdots \wedge e^{k_a} \otimes e_{j_1} \wedge \cdots \wedge e_{j_a} \\
    &= (-1)^{p (a-p)} \delta^{j_1 \dots j_a}_{i_1 \dots i_p k_{p+1} \dots k_a} u^{i_1 \dots i_p} e^{k_{p+1}} \wedge \cdots \wedge e^{k_a} \otimes e_{j_1} \wedge \cdots\wedge  e_{j_a}\,,
\end{align*}
where in the last equality we have switched $k_{p+1} \dots k_ai_1 \dots i_p$ to $i_1 \dots i_p k_{p+1} \dots k_a$. Comparing both expressions we conclude that $C_u = (-1)^{p (a-p)} \mathds{1}_{a-p} \wedge u$, finishing the proof.
\end{proof}

\begin{lemma}
\label{lemma2}
Let $\mathds{1}_a \in \bigwedge^a V^\ast \otimes \bigwedge^a V$ denote the identity. Then, for $b \geq a$, and $\beta \in \bigwedge^b V^\ast,$ we have
\[
\iota_{\mathds{1}_a} \beta = \binom{b}{a}\beta = \frac{b!}{a! (b-a)!} \beta\,.
\]
\end{lemma}
\begin{proof} Again, let $\{e_i\}$ denote a basis of $V$ and suppose $\beta = \beta_{i_1 \dots i_b} e^{i_1}\wedge \cdots e^{i_b}$. Now, since \[\mathds{1}_a = e^{j_1} \wedge \cdots \wedge e^{j_a} \otimes e_{j_1} \wedge \cdots \wedge e_{j_a}\,\] we have 
\begin{align*}
    \iota_{\mathds{1}_{a}} \beta &= \beta_{i_1 \dots i_b}e^{j_1} \wedge \cdots \wedge e^{j_a}  \wedge \left( \iota_{e_{j_1} \wedge \cdots \wedge e_{j_a}} e^{i_1} \wedge \cdots \wedge e^{i_b}\right)\\
    &=\delta_{j_1 \dots j_a k_{a+1} \dots k_b}^{ i_1 \dots i_b} \beta_{i_1 \dots i_b}  e^{j_1} \wedge \cdots \wedge e^{j_a} \wedge e^{k_{a+1}} \wedge \cdots \wedge e^{k_b} \\
    &= \binom{b}{a}\beta_{i_1 \dots i_b} e^{i_1} \wedge \cdots \wedge e^{i_b} = \binom{b}{a}\beta,
\end{align*}
which finishes the proof.
\end{proof}

Using Lemma \ref{lemma1} it follows that for $\beta \in S^b$ with $b \geq a$, we may write
\[
\sharp_a(\beta) =  \sharp_b(\beta) \wedge \mathds{1}_{b - a} .
\]
In general, notice that for a multivector valued form $\theta \otimes v \in \bigwedge^{a} V^\ast \otimes \bigwedge^q V$, and $u \in \bigwedge^p V$ we have that
\[
\iota_{u \wedge (\theta \otimes v)} \gamma = \iota_{\theta \otimes (u \wedge v)} \gamma = \theta \wedge \iota_{u \wedge v} \gamma = 
\theta \wedge \iota_v \iota_u \gamma = \iota_{\theta \otimes v} \iota_u \gamma,
\]
for every $\gamma \in \bigwedge^c V^\ast$.
Hence, for $\gamma \in S^c$, with $n \geq c \geq n + 1 - a$,
\[
\iota_{\sharp_a(\beta)} \gamma = \iota_{\mathds{1}_{b-a}} \iota_{\sharp_b(\beta)} \gamma\,.
\]
Now, since $c \geq n+1 - a$, we have $c - (n+1 - b) \geq b - a$ and, using Lemma \ref{lemma2}, we conclude that
\[
\iota_{\sharp_a(\beta)} \gamma = \binom{b + c - (n+1)}{b - a}\iota_{\sharp_b(\beta)}\gamma,
\]
which finishes the proof of Proposition \ref{prop:Extension_to_Hamiltonian}.
\end{proof}

\begin{remark} This implies that in the definition of graded Poisson brackets we can use any $\sharp$ mapping, as long as we are careful with the orders of the forms to be considered and the coefficients appearing in Proposition \ref{prop:Extension_to_Hamiltonian}. Indeed we know that
\[
\{\gamma, \beta\} = (-1)^{\deg_H \beta} {\binom{b + c - (n+1)}{b - a}} ^{-1}{\iota_{\sharp_a(\dd \beta)} \dd \gamma,} = (-1)^{\deg_H \beta} \frac{(b-a)! (c + a -( n+1))!}{ (b+c - (n+1))!} \iota_{\sharp_{a} (\dd \beta)} \dd \gamma\,,\]
 for every $0 \leq a \leq b,$ and $n +1 - a \geq c$, since we have
 \[
  \iota_{\sharp_a(\dd \beta)} \dd \gamma= \binom{b + c - (n+1)}{b - a}\iota_{\sharp_b(\dd \beta)} \dd \gamma\,.
 \]
\end{remark}
\begin{remark}
\label{remark:formula_for_first_extension}
Also notice, as it follows from Lemma \ref{lemma1}, that we have an explicit expression for $\sharp_a(\beta),$ namely
\[
\sharp_a(\beta) =  \sharp_b(\beta) \wedge \mathds{1}_{a-b}\,.
\]
This will result useful in the sequel, being the condition we will impose for compatibility among extensions.

\end{remark}

\subsection{Second step: Extension of \texorpdfstring{$\sharp_1$}{} to \texorpdfstring{$(S^1)^{\wedge a}$}{}
, for \texorpdfstring{$a \geq 1$}{}}

We are now ready to start defining the extensions of the $\sharp$-mappings.

\begin{theorem}
\label{thm:extension_of_sharp_1}
Let $(M, S^a, \sharp_a)$ be a regular graded Dirac manifold of order $n$. Then, there exists an unique extension of $\sharp_1$, namely,
\[
(S^1)^{\wedge a} \xrightarrow{\widetilde \sharp_1} \bigwedge^{a - 1}M \otimes_M \bigvee_n M/K_n\,,
\]
for all possible values of $a \geq 1$, such that for any $\alpha \in S^n$ and $\beta \in (S^1)^{\wedge a}$ we have
\[
\iota_{\sharp_n(\alpha)} \beta  = (-1)^{(n + 1 - a)} \iota_{\widetilde \sharp_1(\beta)} \alpha\,.
\]

\end{theorem}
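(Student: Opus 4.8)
The plan is to prove uniqueness first, since the same mechanism that forces uniqueness also guarantees that the construction is well defined, and then to exhibit $\widetilde\sharp_1$ by an explicit formula on decomposable elements. Everything rests on one elementary observation: by the very definition of the annihilator $K_n = \{U\in\bigvee_n M : \iota_U S^n = 0\}$, the fibrewise pairing $(U,\alpha)\mapsto\iota_U\alpha$ induces an injective bundle map $\bigvee_n M/K_n\hookrightarrow(S^n)^\ast$. Thus an element of $\bigwedge^{a-1}M\otimes\bigvee_n M/K_n$ is completely determined by its contractions $\iota_{(\cdot)}\alpha$ against all $\alpha\in S^n$.

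For uniqueness I would take the difference $T = \widetilde\sharp_1(\beta)-\widetilde\sharp_1'(\beta)$ of two candidate extensions, write it as $\sum_k\theta_k\otimes U_k$ with the $\theta_k\in\bigwedge^{a-1}M$ linearly independent, and use the defining identity to get $\iota_T\alpha = \sum_k(\iota_{U_k}\alpha)\,\theta_k = 0$ for all $\alpha\in S^n$ (here $\iota_{\theta\otimes U}\alpha = (\iota_U\alpha)\theta$, since $\iota_U\alpha$ is a function). Linear independence of the $\theta_k$ then forces $U_k\in K_n$, i.e.\ $T=0$.

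For existence I would set, on a decomposable $\beta = \xi_1\wedge\cdots\wedge\xi_a$ with $\xi_i\in S^1$,
\[
\widetilde\sharp_1(\beta) := \sum_{i=1}^a(-1)^{a-i}\,\xi_1\wedge\cdots\wedge\widehat{\xi_i}\wedge\cdots\wedge\xi_a\otimes\sharp_1(\xi_i)\in\bigwedge^{a-1}M\otimes\bigvee_n M/K_n.
\]
First I would check that the target expression $\iota_{\sharp_n(\alpha)}\beta$ is well defined even though $\sharp_n(\alpha)$ is only a vector modulo $K_1$: a shift by $w\in K_1$ contributes $\iota_w\beta = 0$, since $\iota_w\xi_i = 0$ for $\xi_i\in S^1$ because $K_1 = (S^1)^{\circ,1}$. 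Then I would verify the identity by expanding $\iota_{\sharp_n(\alpha)}\beta$ with the Leibniz rule for interior product by a vector, inserting the relation $\iota_{\sharp_n(\alpha)}\xi_i = (-1)^n\iota_{\sharp_1(\xi_i)}\alpha$ (the instance of the skew-symmetry axiom with forms of degrees $n$ and $1$), and collecting signs: the factors $(-1)^{i-1}$, $(-1)^n$ and the prefactor $(-1)^{n+1-a}$ combine to the $(-1)^{a-i}$ appearing in the formula, so that $(-1)^{n+1-a}\iota_{\sharp_n(\alpha)}\beta = \iota_{\widetilde\sharp_1(\beta)}\alpha$. Finally, since decomposables span $(S^1)^{\wedge a}$, I would extend $\widetilde\sharp_1$ linearly; it is well defined because its contraction against every $\alpha\in S^n$ equals $(-1)^{n+1-a}\iota_{\sharp_n(\alpha)}\beta$, which depends only on $\beta\in\bigwedge^a M$ and not on the chosen decomposition, so by the uniqueness argument any two decompositions give the same element. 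Taking $a=1$ recovers $\widetilde\sharp_1 = \sharp_1$, confirming that this is genuinely an extension.

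The main obstacle is the interplay between existence and well-definedness combined with the sign bookkeeping: because $(S^1)^{\wedge a}$ is only spanned, not freely generated, by decomposables, the formula must be shown independent of the decomposition, and the cleanest route is exactly the injectivity $\bigvee_n M/K_n\hookrightarrow(S^n)^\ast$, which delivers uniqueness and well-definedness at once.
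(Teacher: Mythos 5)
Your proof is correct, and it reaches the result by a more explicit route than the paper. The paper's proof is a two-line dualization: the pairing $\alpha\otimes\beta\mapsto(-1)^{n+1-a}\iota_{\sharp_n(\alpha)}\beta$ is a bundle map $S^n\otimes(S^1)^{\wedge a}\to\bigwedge^{a-1}M$, and transposing in the $S^n$ slot lands in $\bigwedge^{a-1}M\otimes\bigvee_nM/K_n$ because the perfect pairing of $\bigvee_nM$ with $\bigwedge^nM$ identifies $(S^n)^\ast$ with $\bigvee_nM/K_n$; existence and uniqueness are then simultaneous and no formula is needed. You instead isolate the injectivity $\bigvee_nM/K_n\hookrightarrow(S^n)^\ast$ (which is the uniqueness half of the paper's identification), and for existence you write down the anti-derivation formula $\widetilde\sharp_1(\xi_1\wedge\cdots\wedge\xi_a)=\sum_i(-1)^{a-i}\xi_1\wedge\cdots\wedge\widehat{\xi_i}\wedge\cdots\wedge\xi_a\otimes\sharp_1(\xi_i)$ and verify the defining identity by the degree-$(n,1)$ instance of skew-symmetry; your signs check out and agree with the expression the paper only records afterwards in Remark \ref{remark:anti_derivation}. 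What your approach buys is the explicit formula (and the check that $\iota_{\sharp_n(\alpha)}\beta$ is insensitive to shifting $\sharp_n(\alpha)$ by $K_1$, which the paper leaves implicit) at the cost of the decomposition-independence argument and the sign bookkeeping; what the paper's approach buys is brevity, since dualization makes both existence and well-definedness automatic. Either argument is complete.
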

\begin{proof} Notice that the map $\alpha \otimes \beta \mapsto (-1)^{(n+1-a)}\iota_{\sharp_n(\alpha)} \beta$ for $\alpha \in S^n$, $\beta \in (S^1)^{\wedge a}$ defines a map 
$S^n \otimes (S^1)^{\wedge a} \longrightarrow \bigwedge^{a - 1 } M$.
Dualizing, we obtain $\widetilde  \sharp_1 \colon(S^1)^{\wedge a} \xrightarrow{\widetilde \sharp_1} \bigwedge^{a-1} M \otimes \bigvee _n M / K_n$. By construction, it satisfies the required equality.

Furthermore, the equality $\iota_{\sharp_n(\alpha)} \beta  = (-1)^{ (n + 1 - a)} \iota_{\widetilde \sharp_1(\beta)} \alpha$ just means that $\widetilde\sharp_1$ is the dual (in the previous sense) of the first map defined.
\end{proof}

\begin{remark} Signs are chosen throughout the text in the $\sharp$ mappings so that the following equality holds: $\iota_{\sharp(\alpha)} \beta = (-1)^{(n+1-a)(n+1-b)} \iota_{\sharp(\beta)} \alpha$, maintaining skew-symmetry as in Definition \ref{def:graded_Dirac}. Notice that this is the condition imposed in Theorem \ref{thm:extension_of_sharp_1}.
\end{remark}

\begin{remark}
\label{remark:anti_derivation} 
The previous extension of $\sharp_1$ recovers the construction by Michor in \cite{Michor_extensions} and by Grabowski in \cite{grabowski_z-graded_1997}. Indeed, it is easily shown that the map from Proposition \ref{prop:Extension_to_Hamiltonian} satisfies 
\[\widetilde \sharp_1(\theta_1 \wedge \cdots \wedge \theta_a) = (-1)^{a+1}\sum_{j = 1}^{a} (-1)^{j+1} \theta_1 \wedge \cdots \wedge \hat{\theta}_j \wedge \cdots \wedge \theta_a \otimes \sharp_1(\theta_j)\,,\]
so that it defines an anti-derivation, which is the main property used in both cited papers for their respective construction. We prefer the point of view of the proof of Proposition \ref{prop:Extension_to_Hamiltonian} because of its immediate generalization to different degrees, which is not so easily accomplished from the point of view of derivations.
\end{remark}

As a corollary of Proposition \ref{prop:Extension_to_Hamiltonian}, we can define an extension of brackets for Hamiltonian $(n-1)-$forms and forms of arbitrary order in a suitable subspace: Let $$\Omega_H^a(U)[1] := \{ \alpha \in \Omega^a(U): \dd \alpha  \in (S^1)^{\wedge(a+1)}\}\,,$$ where $U$ is an open subset of $M$. The extension of $\sharp_1$ defines a bracket 
\[
\Omega_H^{n-1}(U) \otimes \Omega^{a}_H(U)[1] \xrightarrow{ \{ \cdot, \cdot \}} \Omega^{a}(M)
\]
by putting
\begin{equation}
    \label{eq:bracket_sharp_1}
    \{\alpha, \Theta\} := (-1)^{\deg_H \Theta} \iota_{\widetilde \sharp_1(\dd \Theta)} \dd \alpha\,.
\end{equation}
By Proposition \ref{prop:Extension_to_Hamiltonian}, this bracket coincides with the original one for Hamiltonian forms $\Theta \in \Omega^b_H(M)$, $0 \leq b \leq n-1.$

\begin{remark}
The diagram below shows the achieved the extension with this first step: There is a point on the coordinate plane $(\order \alpha, \order \beta)$ if we can evaluate a bracket on a suitable pair of forms of such an order. In the gray square we have represented the original bracket of Hamiltonian forms, and we have extended it to the case $(n-1, a)$ or $(a, n-1),$ for arbitrary $a \geq 0.$

\begin{center}
\tikzset{every picture/.style={line width=0.75pt}} 

\begin{tikzpicture}[x=0.5pt,y=0.5pt,yscale=-1,xscale=1]

\draw  [fill={rgb, 255:red, 217; green, 217; blue, 217 }  ,fill opacity=1 ] (50.85,200.45) -- (250.25,200.45) -- (250.25,399.85) -- (50.85,399.85) -- cycle ;
\draw  [fill={rgb, 255:red, 0; green, 0; blue, 0 }  ,fill opacity=1 ] (243,199.45) .. controls (243,195.67) and (246.07,192.6) .. (249.85,192.6) .. controls (253.63,192.6) and (256.7,195.67) .. (256.7,199.45) .. controls (256.7,203.23) and (253.63,206.3) .. (249.85,206.3) .. controls (246.07,206.3) and (243,203.23) .. (243,199.45) -- cycle ;
\draw  [fill={rgb, 255:red, 0; green, 0; blue, 0 }  ,fill opacity=1 ] (243,249.45) .. controls (243,245.67) and (246.07,242.6) .. (249.85,242.6) .. controls (253.63,242.6) and (256.7,245.67) .. (256.7,249.45) .. controls (256.7,253.23) and (253.63,256.3) .. (249.85,256.3) .. controls (246.07,256.3) and (243,253.23) .. (243,249.45) -- cycle ;
\draw  [fill={rgb, 255:red, 0; green, 0; blue, 0 }  ,fill opacity=1 ] (243,298.45) .. controls (243,294.67) and (246.07,291.6) .. (249.85,291.6) .. controls (253.63,291.6) and (256.7,294.67) .. (256.7,298.45) .. controls (256.7,302.23) and (253.63,305.3) .. (249.85,305.3) .. controls (246.07,305.3) and (243,302.23) .. (243,298.45) -- cycle ;
\draw  [fill={rgb, 255:red, 0; green, 0; blue, 0 }  ,fill opacity=1 ] (243,349.45) .. controls (243,345.67) and (246.07,342.6) .. (249.85,342.6) .. controls (253.63,342.6) and (256.7,345.67) .. (256.7,349.45) .. controls (256.7,353.23) and (253.63,356.3) .. (249.85,356.3) .. controls (246.07,356.3) and (243,353.23) .. (243,349.45) -- cycle ;
\draw  [fill={rgb, 255:red, 0; green, 0; blue, 0 }  ,fill opacity=1 ] (93,199.45) .. controls (93,195.67) and (96.07,192.6) .. (99.85,192.6) .. controls (103.63,192.6) and (106.7,195.67) .. (106.7,199.45) .. controls (106.7,203.23) and (103.63,206.3) .. (99.85,206.3) .. controls (96.07,206.3) and (93,203.23) .. (93,199.45) -- cycle ;
\draw  [fill={rgb, 255:red, 0; green, 0; blue, 0 }  ,fill opacity=1 ] (143,199.45) .. controls (143,195.67) and (146.07,192.6) .. (149.85,192.6) .. controls (153.63,192.6) and (156.7,195.67) .. (156.7,199.45) .. controls (156.7,203.23) and (153.63,206.3) .. (149.85,206.3) .. controls (146.07,206.3) and (143,203.23) .. (143,199.45) -- cycle ;
\draw  [fill={rgb, 255:red, 0; green, 0; blue, 0 }  ,fill opacity=1 ] (143,249.45) .. controls (143,245.67) and (146.07,242.6) .. (149.85,242.6) .. controls (153.63,242.6) and (156.7,245.67) .. (156.7,249.45) .. controls (156.7,253.23) and (153.63,256.3) .. (149.85,256.3) .. controls (146.07,256.3) and (143,253.23) .. (143,249.45) -- cycle ;
\draw  [fill={rgb, 255:red, 0; green, 0; blue, 0 }  ,fill opacity=1 ] (193,199.45) .. controls (193,195.67) and (196.07,192.6) .. (199.85,192.6) .. controls (203.63,192.6) and (206.7,195.67) .. (206.7,199.45) .. controls (206.7,203.23) and (203.63,206.3) .. (199.85,206.3) .. controls (196.07,206.3) and (193,203.23) .. (193,199.45) -- cycle ;
\draw  [fill={rgb, 255:red, 0; green, 0; blue, 0 }  ,fill opacity=1 ] (193,249.45) .. controls (193,245.67) and (196.07,242.6) .. (199.85,242.6) .. controls (203.63,242.6) and (206.7,245.67) .. (206.7,249.45) .. controls (206.7,253.23) and (203.63,256.3) .. (199.85,256.3) .. controls (196.07,256.3) and (193,253.23) .. (193,249.45) -- cycle ;
\draw  [fill={rgb, 255:red, 0; green, 0; blue, 0 }  ,fill opacity=1 ] (193,298.45) .. controls (193,294.67) and (196.07,291.6) .. (199.85,291.6) .. controls (203.63,291.6) and (206.7,294.67) .. (206.7,298.45) .. controls (206.7,302.23) and (203.63,305.3) .. (199.85,305.3) .. controls (196.07,305.3) and (193,302.23) .. (193,298.45) -- cycle ;
\draw    (50.85,200.45) -- (250.25,399.85) ;
\draw  [fill={rgb, 255:red, 0; green, 0; blue, 0 }  ,fill opacity=1 ] (243,50.45) .. controls (243,46.67) and (246.07,43.6) .. (249.85,43.6) .. controls (253.63,43.6) and (256.7,46.67) .. (256.7,50.45) .. controls (256.7,54.23) and (253.63,57.3) .. (249.85,57.3) .. controls (246.07,57.3) and (243,54.23) .. (243,50.45) -- cycle ;
\draw  [fill={rgb, 255:red, 0; green, 0; blue, 0 }  ,fill opacity=1 ] (243,99.45) .. controls (243,95.67) and (246.07,92.6) .. (249.85,92.6) .. controls (253.63,92.6) and (256.7,95.67) .. (256.7,99.45) .. controls (256.7,103.23) and (253.63,106.3) .. (249.85,106.3) .. controls (246.07,106.3) and (243,103.23) .. (243,99.45) -- cycle ;
\draw  [fill={rgb, 255:red, 0; green, 0; blue, 0 }  ,fill opacity=1 ] (243,150.45) .. controls (243,146.67) and (246.07,143.6) .. (249.85,143.6) .. controls (253.63,143.6) and (256.7,146.67) .. (256.7,150.45) .. controls (256.7,154.23) and (253.63,157.3) .. (249.85,157.3) .. controls (246.07,157.3) and (243,154.23) .. (243,150.45) -- cycle ;
\draw  [fill={rgb, 255:red, 0; green, 0; blue, 0 }  ,fill opacity=1 ] (243,199.85) .. controls (243,196.07) and (246.07,193) .. (249.85,193) .. controls (253.63,193) and (256.7,196.07) .. (256.7,199.85) .. controls (256.7,203.63) and (253.63,206.7) .. (249.85,206.7) .. controls (246.07,206.7) and (243,203.63) .. (243,199.85) -- cycle ;
\draw  [fill={rgb, 255:red, 0; green, 0; blue, 0 }  ,fill opacity=1 ] (294,199.45) .. controls (294,195.67) and (297.07,192.6) .. (300.85,192.6) .. controls (304.63,192.6) and (307.7,195.67) .. (307.7,199.45) .. controls (307.7,203.23) and (304.63,206.3) .. (300.85,206.3) .. controls (297.07,206.3) and (294,203.23) .. (294,199.45) -- cycle ;
\draw  [fill={rgb, 255:red, 0; green, 0; blue, 0 }  ,fill opacity=1 ] (343,198.45) .. controls (343,194.67) and (346.07,191.6) .. (349.85,191.6) .. controls (353.63,191.6) and (356.7,194.67) .. (356.7,198.45) .. controls (356.7,202.23) and (353.63,205.3) .. (349.85,205.3) .. controls (346.07,205.3) and (343,202.23) .. (343,198.45) -- cycle ;
\draw  [fill={rgb, 255:red, 0; green, 0; blue, 0 }  ,fill opacity=1 ] (393,198.45) .. controls (393,194.67) and (396.07,191.6) .. (399.85,191.6) .. controls (403.63,191.6) and (406.7,194.67) .. (406.7,198.45) .. controls (406.7,202.23) and (403.63,205.3) .. (399.85,205.3) .. controls (396.07,205.3) and (393,202.23) .. (393,198.45) -- cycle ;
\draw  [fill={rgb, 255:red, 217; green, 217; blue, 217 }  ,fill opacity=1 ] (50.85,200.45) -- (250.25,399.85) -- (50.85,399.85) -- cycle ;
\draw  [fill={rgb, 255:red, 0; green, 0; blue, 0 }  ,fill opacity=1 ] (44,200.45) .. controls (44,196.67) and (47.07,193.6) .. (50.85,193.6) .. controls (54.63,193.6) and (57.7,196.67) .. (57.7,200.45) .. controls (57.7,204.23) and (54.63,207.3) .. (50.85,207.3) .. controls (47.07,207.3) and (44,204.23) .. (44,200.45) -- cycle ;
\draw  [fill={rgb, 255:red, 0; green, 0; blue, 0 }  ,fill opacity=1 ] (93,250.45) .. controls (93,246.67) and (96.07,243.6) .. (99.85,243.6) .. controls (103.63,243.6) and (106.7,246.67) .. (106.7,250.45) .. controls (106.7,254.23) and (103.63,257.3) .. (99.85,257.3) .. controls (96.07,257.3) and (93,254.23) .. (93,250.45) -- cycle ;
\draw  [fill={rgb, 255:red, 0; green, 0; blue, 0 }  ,fill opacity=1 ] (143.7,300.15) .. controls (143.7,296.37) and (146.77,293.3) .. (150.55,293.3) .. controls (154.33,293.3) and (157.4,296.37) .. (157.4,300.15) .. controls (157.4,303.93) and (154.33,307) .. (150.55,307) .. controls (146.77,307) and (143.7,303.93) .. (143.7,300.15) -- cycle ;
\draw  [fill={rgb, 255:red, 0; green, 0; blue, 0 }  ,fill opacity=1 ] (193,350.45) .. controls (193,346.67) and (196.07,343.6) .. (199.85,343.6) .. controls (203.63,343.6) and (206.7,346.67) .. (206.7,350.45) .. controls (206.7,354.23) and (203.63,357.3) .. (199.85,357.3) .. controls (196.07,357.3) and (193,354.23) .. (193,350.45) -- cycle ;
\draw  [fill={rgb, 255:red, 0; green, 0; blue, 0 }  ,fill opacity=1 ] (243.4,399.85) .. controls (243.4,396.07) and (246.47,393) .. (250.25,393) .. controls (254.03,393) and (257.1,396.07) .. (257.1,399.85) .. controls (257.1,403.63) and (254.03,406.7) .. (250.25,406.7) .. controls (246.47,406.7) and (243.4,403.63) .. (243.4,399.85) -- cycle ;
\draw    (111,422.8) .. controls (70.41,409.93) and (45.5,359.62) .. (83.82,329.51) ;
\draw [shift={(85,328.6)}, rotate = 143.13] [color={rgb, 255:red, 0; green, 0; blue, 0 }  ][line width=0.75]    (10.93,-3.29) .. controls (6.95,-1.4) and (3.31,-0.3) .. (0,0) .. controls (3.31,0.3) and (6.95,1.4) .. (10.93,3.29)   ;
\draw    (324,285.6) .. controls (344,241.6) and (297,212.6) .. (269,210.6) .. controls (241.84,208.66) and (214.68,217.07) .. (196.65,227.62) ;
\draw [shift={(195,228.6)}, rotate = 328.57] [color={rgb, 255:red, 0; green, 0; blue, 0 }  ][line width=0.75]    (10.93,-3.29) .. controls (6.95,-1.4) and (3.31,-0.3) .. (0,0) .. controls (3.31,0.3) and (6.95,1.4) .. (10.93,3.29)   ;
\draw  (51,400.2) -- (401,400.2)(51,50.6) -- (51,400.2) -- cycle (394,395.2) -- (401,400.2) -- (394,405.2) (46,57.6) -- (51,50.6) -- (56,57.6)  ;
\draw    (115,100) .. controls (115.99,159.99) and (190.49,146.87) .. (230.79,117.5) ;
\draw [shift={(232,116.6)}, rotate = 143.13] [color={rgb, 255:red, 0; green, 0; blue, 0 }  ][line width=0.75]    (10.93,-3.29) .. controls (6.95,-1.4) and (3.31,-0.3) .. (0,0) .. controls (3.31,0.3) and (6.95,1.4) .. (10.93,3.29)   ;
\draw  [dash pattern={on 4.5pt off 4.5pt}]  (51,400.2) -- (400,49.6) ;

\draw (123,416.6) node [anchor=north west][inner sep=0.75pt]  [xscale=0.75,yscale=0.75] [align=left] {Trivially zero};
\draw (271,292) node [anchor=north west][inner sep=0.75pt]  [xscale=0.75,yscale=0.75] [align=left] {Original brackets};
\draw (76,63) node [anchor=north west][inner sep=0.75pt]  [xscale=0.75,yscale=0.75] [align=left] {Brackets defined by $\displaystyle \tilde{\sharp }_{1}$};
\draw (412,393.4) node [anchor=north west][inner sep=0.75pt]  [xscale=0.75,yscale=0.75]  {$\operatorname{ord\ \alpha }$};
\draw (33,25.4) node [anchor=north west][inner sep=0.75pt]  [xscale=0.75,yscale=0.75]  {$\operatorname{ord\ \beta }$};

\end{tikzpicture}
\end{center}
\end{remark}

We now turn into studying elementary properties of the previous extension, dealing with the problem of how many properties of Definition \ref{def:graded_Poisson_bracket} remain.

We will find useful the fact that we may compute $\widetilde \sharp_1(\{\alpha, \Theta\})$, where $\alpha \in \Omega^{n-1}_H(M)$ and $\Theta \in \Omega^a_H(M)[1]$. Indeed, we have the following:

\begin{proposition}
\label{prop:sharp_of_bracket}
Let $\Theta \in \Omega^{a}_H(M)[1],$ and $\alpha \in \Omega^{n-1}_H(M).$ Then,
\[
\widetilde \sharp_1(\dd \{\alpha, \Theta\}) = -\pounds_{\sharp_n (\dd \alpha)} \left(\widetilde{\sharp}_1(\dd \Theta)  \right)\,. 
\]
\end{proposition}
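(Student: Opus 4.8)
The plan is to first collapse the bracket into a contraction by a genuine vector field, and then reduce the whole statement to the fact that the Hamiltonian vector field $X := \sharp_n(\dd\alpha)$ is an infinitesimal symmetry of the structure map $\widetilde\sharp_1$. First I would rewrite $\{\alpha,\Theta\}$. Since $\dd\Theta\in(S^1)^{\wedge(a+1)}$ and $\dd\alpha\in S^n$, the defining relation of Theorem \ref{thm:extension_of_sharp_1} applied to $\beta=\dd\Theta$ and the form $\dd\alpha$ gives $\iota_{\widetilde\sharp_1(\dd\Theta)}\dd\alpha=(-1)^{n-a}\iota_X\dd\Theta$. Substituting this into \eqref{eq:bracket_sharp_1} and collecting the sign $(-1)^{\deg_H\Theta}(-1)^{n-a}=(-1)^{2n-1-2a}=-1$ yields the clean identity
\[
\{\alpha,\Theta\}=-\iota_X\dd\Theta,
\]
which is well defined because $\iota_k\dd\Theta=0$ for $k\in K_1=(S^1)^{\circ,1}$, as $\dd\Theta$ takes values in $(S^1)^{\wedge(a+1)}$. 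Taking the exterior derivative and using Cartan's formula together with $\dd\dd\Theta=0$, I then obtain $\dd\{\alpha,\Theta\}=-\dd\iota_X\dd\Theta=-\pounds_X\dd\Theta$.

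With this simplification, the proposition reduces to the commutation statement $\widetilde\sharp_1(\pounds_X\dd\Theta)=\pounds_X\!\left(\widetilde\sharp_1(\dd\Theta)\right)$: applying $\widetilde\sharp_1$ to $\dd\{\alpha,\Theta\}=-\pounds_X\dd\Theta$ and commuting $\widetilde\sharp_1$ past $\pounds_X$ produces exactly $-\pounds_X(\widetilde\sharp_1(\dd\Theta))$. I would prove $\widetilde\sharp_1\circ\pounds_X=\pounds_X\circ\widetilde\sharp_1$ on $(S^1)^{\wedge\bullet}$ by exploiting that $\widetilde\sharp_1$ is the unique anti-derivation extending $\sharp_1$ (Remark \ref{remark:anti_derivation}). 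Since $\pounds_X$ is a degree-zero derivation, the graded commutator $[\pounds_X,\widetilde\sharp_1]$ is again an anti-derivation of the same degree as $\widetilde\sharp_1$, so it is enough to check it vanishes on functions (immediate, as $\widetilde\sharp_1$ kills $0$-forms) and on the generators $\theta\in S^1$. On such a generator $\widetilde\sharp_1(\theta)=\sharp_1(\theta)$, so $[\pounds_X,\widetilde\sharp_1]\theta=\pounds_X(\sharp_1\theta)-\sharp_1(\pounds_X\theta)$, and the problem is reduced to the two infinitesimal-symmetry facts: (i) $\pounds_X\theta\in S^1$, and (ii) $\sharp_1(\pounds_X\theta)=\pounds_X(\sharp_1\theta)$ modulo $K_n$.

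The hard part, and the computational heart of the argument, is establishing (i) and (ii), for which I would invoke integrability (Definition \ref{def:graded_Dirac}) with $\dd\alpha\in S^n$ (so $p=1$, $U=X$) and $\theta\in S^1$ (so $q=n$, with $V$ a local representative of $\sharp_1\theta$). Integrability guarantees that
\[
\theta_{\mathrm{int}}=\pounds_X\theta+(-1)^n\pounds_V\dd\alpha-\tfrac{(-1)^n}{2}\dd\!\left(\iota_V\dd\alpha+(-1)^n\iota_X\theta\right)
\]
lies in $S^1$ and satisfies $\sharp_1(\theta_{\mathrm{int}})=[X,V]+K_n=\pounds_X(\sharp_1\theta)$. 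The decisive point is that the correction terms cancel: since $\dd\dd\alpha=0$ we have $\pounds_V\dd\alpha=\dd\iota_V\dd\alpha$, while skew-symmetry of the $\sharp$-maps gives $\iota_X\theta=(-1)^n\iota_V\dd\alpha$, so $\iota_V\dd\alpha+(-1)^n\iota_X\theta=2\iota_V\dd\alpha$ and the correction term $-\tfrac{(-1)^n}{2}\dd(2\iota_V\dd\alpha)=-(-1)^n\dd\iota_V\dd\alpha$ cancels $(-1)^n\pounds_V\dd\alpha$. Hence $\theta_{\mathrm{int}}=\pounds_X\theta$, which yields both (i) and (ii) at once. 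I note that (i), extended multiplicatively, also shows $\pounds_X\dd\Theta\in(S^1)^{\wedge(a+1)}$, so that every object above is well defined and, in particular, $\{\alpha,\Theta\}\in\Omega^a_H(M)[1]$. The remaining bookkeeping---verifying that the commutator is genuinely an anti-derivation on the algebra generated by $S^1$, and that the $K_1$-ambiguity in $X=\sharp_n(\dd\alpha)$ does not affect the $K_n$-class on the right-hand side---is routine.
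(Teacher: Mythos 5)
Your proposal is correct, and it reaches the identity by a genuinely different route from the paper. The paper proves the equality by testing both sides against an arbitrary $\beta\in S^n$: it expands $\iota_{\widetilde\sharp_1(\dd\{\alpha,\Theta\})}\beta$ via Cartan calculus and then invokes the integrability axiom for the \emph{pair} $\beta,\dd\alpha\in S^n$ (both of multivector degree $1$), which yields $[\sharp_n(\beta),\sharp_n(\dd\alpha)]=\sharp_n(-\pounds_{\sharp_n(\dd\alpha)}\beta)$ and lets the terms reassemble into $-\iota_{\pounds_{\sharp_n(\dd\alpha)}\widetilde\sharp_1(\dd\Theta)}\beta$. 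You instead first collapse the bracket to $\{\alpha,\Theta\}=-\iota_X\dd\Theta$ (a simplification the paper also exploits elsewhere, e.g.\ in the Jacobi-identity computation, but not in this proof), reduce the proposition to the commutation $[\pounds_X,\widetilde\sharp_1]=0$ on $(S^1)^{\wedge\bullet}$, and verify that commutation on degree-one generators by applying integrability to the mixed pair $\dd\alpha\in S^n$, $\theta\in S^1$ (degrees $p=1$, $q=n$); the anti-derivation property of Remark \ref{remark:anti_derivation} then propagates the identity to all degrees. Your sign bookkeeping checks out: the correction term in the integrability form does cancel exactly as you claim, giving $\pounds_X\theta\in S^1$ and $\sharp_1(\pounds_X\theta)=\pounds_X\sharp_1(\theta)$ modulo $K_n$. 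What your approach buys is a cleaner conceptual statement --- Hamiltonian vector fields are infinitesimal symmetries of the $\sharp$-maps --- which is reusable (it immediately gives the closedness lemma that the paper proves separately); what it costs is extra care with two points you correctly flag as routine but should not omit in a final write-up: that $[\pounds_X,\widetilde\sharp_1]$ is an anti-derivation despite the degree-dependent prefactor $(-1)^{a+1}$ in the formula for $\widetilde\sharp_1$ (the prefactor is degree-preserving under $\pounds_X$, so this is harmless), and that replacing $X$ by $X+k$ with $k\in K_1$ changes neither side modulo the relevant kernels (the paper's test-against-$S^n$ formulation absorbs this automatically).
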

Before proving the previous proposition, let us first show that the bracket is closed:
\begin{lemma} Given $\alpha \in \Omega^{n-1}_H(M)$ and $\Theta \in \Omega^{a}(M)[1]$, we have $\{\alpha, \Theta\}\in \Omega^a(M)[1]$.
\end{lemma}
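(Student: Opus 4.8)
The plan is to rewrite the bracket as a single contraction and then differentiate. Feeding $\beta = \dd\Theta \in (S^1)^{\wedge(a+1)}$ and $\dd\alpha \in S^n$ into the defining relation of $\widetilde\sharp_1$ from Theorem \ref{thm:extension_of_sharp_1}, and recalling $\deg_H \Theta = (n-1) - a$, a short sign count turns Equation \eqref{eq:bracket_sharp_1} into
\[
\{\alpha, \Theta\} = -\,\iota_{\sharp_n(\dd\alpha)}\dd\Theta \,.
\]
Before using this I would check it is well defined: $\sharp_n(\dd\alpha)$ is a vector field only modulo $K_1 = (S^1)^{\circ, 1}$, so I must verify $\iota_{K_1}\dd\Theta = 0$. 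This holds because any $v \in K_1$ annihilates $S^1$ and $\iota_v$ is an antiderivation, hence it kills every decomposable section of $(S^1)^{\wedge(a+1)}$. Writing $X := \sharp_n(\dd\alpha)$ and using $\dd\dd\Theta = 0$ together with Cartan's formula then gives $\dd\{\alpha,\Theta\} = -\dd\iota_X\dd\Theta = -\pounds_X \dd\Theta$.

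The problem therefore reduces to showing $\pounds_X \dd\Theta \in (S^1)^{\wedge(a+1)}$. Since $X$ is an honest vector field, $\pounds_X$ is a degree-preserving derivation of the exterior algebra, so it is enough to prove that $\pounds_X$ preserves the subbundle $S^1$: once $\pounds_X\theta \in S^1$ for every local section $\theta$ of $S^1$, the Leibniz rule propagates this to every decomposable section $\theta_0 \wedge \cdots \wedge \theta_a$, and the function coefficients are harmless since $\pounds_X(g\,\omega) = (Xg)\,\omega + g\,\pounds_X\omega$.

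The heart of the argument is this preservation statement, and it is where integrability enters. I would apply the integrability condition of Definition \ref{def:graded_Dirac} to the sections $\dd\alpha \in S^n$ and $\theta \in S^1$, so that in the notation there $p = 1$, $q = n$, $U = X$ and $V = \sharp_1(\theta)$; the hypothesis $n + 1 \leq 2n+1$ holds and the target degree is $n + 1 - n = 1$. Two facts collapse the integrability form: first $\pounds_{V}\dd\alpha = \dd\iota_{V}\dd\alpha$ because $\dd\dd\alpha = 0$; second, skew-symmetry of the $\sharp$-maps gives $\iota_X\theta = (-1)^n \iota_{V}\dd\alpha$, so the two contractions inside the total differential agree. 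Substituting these, the integrability form telescopes to exactly $\pounds_X\theta$, and the condition asserts that it takes values in $S^{n+1-n} = S^1$. This yields $\pounds_X\theta \in S^1$ and closes the chain.

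I expect the main obstacle to be bookkeeping rather than conceptual: pinning down the sign in $\{\alpha,\Theta\} = -\iota_X\dd\Theta$, and checking that the integrability expression cancels down to precisely $\pounds_X\theta$ (the cancellation uses both the skew-symmetry identity and the closedness of $\dd\alpha$). A secondary point to treat carefully is representative-independence: $X$ and $V$ are defined only modulo $K_1$ and $K_n$, so I would remark that $\iota_X\theta$, $\iota_V\dd\alpha$, and hence $\pounds_X\theta$, are unaffected by these ambiguities, using that $K_1$ annihilates $S^1$ and $K_n$ annihilates $S^n$.
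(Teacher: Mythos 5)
Your proof is correct. It rests on the same two pillars as the paper's own argument --- the rewriting $\{\alpha,\Theta\} = -\,\iota_{\sharp_n(\dd\alpha)}\dd\Theta$ followed by Cartan calculus, plus an invariance statement extracted from integrability --- but you organize the key step in a dual way. The paper contracts $\dd\{\alpha,\Theta\}$ with a vector field $X$ taking values in the annihilator $K_1$ and reduces, via $\pounds_X\iota_Y = \iota_{[X,Y]}+\iota_Y\pounds_X$, to the claim that $[X,\sharp_n(\dd\alpha)]$ again takes values in $K_1$, which it attributes to integrability without further detail. You instead prove that $\pounds_{\sharp_n(\dd\alpha)}$ preserves $S^1$ (hence, being a derivation, preserves sections of $(S^1)^{\wedge(a+1)}$), and you actually derive this from the integrability axiom by the telescoping computation with $p=1$, $q=n$; your sign bookkeeping there checks out, since $\pounds_V\dd\alpha = \dd\iota_V\dd\alpha$ and $\iota_U\theta = (-1)^n\iota_V\dd\alpha$ collapse the integrability form to exactly $\pounds_U\theta \in S^1$. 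The two key facts are equivalent: since $\iota_X\theta=0$ for $X\in K_1$ and $\theta\in S^1$, one has $\iota_{[X,\sharp_n(\dd\alpha)]}\theta = \iota_X\bigl(\pounds_{\sharp_n(\dd\alpha)}\theta\bigr)$, so your lemma implies the kernel-invariance the paper asserts. In effect you have supplied the justification the paper leaves implicit; your closing remarks on independence of the representatives modulo $K_1$ and $K_n$ are also correct and worth keeping.
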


\begin{proof}Let $\alpha \in \Omega^{n-1}_H(M)$ and $\Theta \in \Omega^a_H(M)[1]$. We need to check that $\{\alpha, \Theta\} \in \Omega ^{a}_H(M)[1]$, which is equivalent to $\dd\iota_{\sharp_n(\dd\alpha)}\dd\Theta \in S^{a}$. Now, the last requirement is satisfied if and only if $\iota_{K_1} \dd\iota_{\sharp_n(\dd\alpha)}\dd\Theta = 0$, where $K_1 = (S^1)^{\circ, 1}$. Indeed, let $X$ be a vector field taking values in $K_1$. Then, 
\begin{align*}
    \iota_X \dd \iota_{\sharp_n(\dd\alpha)} \dd\Theta &= \pounds_X \dd\iota_{\sharp_n(\dd \alpha)} \dd \Theta - \dd \iota_X \iota_{\sharp_n (\dd \alpha)} \dd \Theta  =  \pounds_X \dd\iota_{\sharp_n(\dd \alpha)} \dd \Theta\\
    &= \dd  \pounds_X \iota_{\sharp_n(\dd \alpha)} \dd \Theta = \dd \iota_{[X, \sharp_n(\dd \alpha)]} \dd \Theta + \dd \iota_{\sharp_n(\dd \alpha)} \pounds_{X} \dd \Theta \\
    &= \dd \iota_{[X, \sharp_n(\dd \alpha)]} \dd \Theta,
\end{align*}
which is zero, as $[X, \sharp_n(\dd \alpha)]$ takes values in $K_1$, since the graded Dirac structure is integrable.
\end{proof}

\begin{proof}[Proof of Proposition \ref{prop:sharp_of_bracket}] In order to prove the equality, let us check that contraction against an arbitrary form $\beta$ taking values in $S^n$ by both sides of the desired equality yields the same result. Indeed, using the defining property of $\widetilde \sharp_1$ (see Theorem \ref{thm:extension_of_sharp_1}) and that $\dd \Theta \in S^{n+1}$,  we get
\begin{align*}
    \iota_{\widetilde \sharp_1(\dd \{\alpha, \Theta\})} \beta &= (-1)^{n-a} \iota_{\sharp_n(\beta)} \dd \{\alpha, \Theta\} = - \iota_{\sharp_n(\beta)} \dd \iota_{\widetilde \sharp_1(\dd \Theta) }\dd \alpha\\
    &= - \pounds_{\sharp_n(\beta)} \iota_{\widetilde \sharp_1(\dd \Theta)} \dd \alpha + \dd \iota_{\sharp_n(\beta)} \iota_{\widetilde \sharp_1(\dd \Theta)} \dd \alpha\\
    &= - (-1)^{(n-a)}\pounds_{\sharp_n(\beta)} \iota_{ \sharp_n(\dd \alpha)} \dd \Theta + \dd \iota_{\sharp_n(\beta)} \iota_{\widetilde \sharp_1(\dd \Theta)} \dd \alpha\,.
\end{align*}
Now, rewriting the leftmost term employing the formula $\pounds_X \iota_Y \omega  = \iota_{[X, Y]}\omega + \iota_Y \pounds_X \omega$, we get
\begin{align}
    \iota_{\widetilde \sharp_1(\dd \{\alpha, \Theta\})} \beta &= - (-1)^{n-a} \left(\iota_{[\sharp_n(\beta), \sharp_n(\dd \alpha)]} \dd \Theta + \iota_{\sharp_n(\dd \alpha)}  \pounds_{\sharp_n(\beta)} \dd \Theta \right)+\dd \iota_{\sharp_n(\beta)} \iota_{\widetilde \sharp_1(\dd \Theta)} \dd \alpha\,.
    \label{eq:contraction_unsing_Lie_bracket}
\end{align}
Recall that $(S^a, \sharp_a)$ is a graded Dirac structure, so that we have
$[\sharp_n(\beta), \sharp_n(\dd \alpha)] = \sharp_n(\eta)$, where 
\begin{equation}
    \eta = \pounds_{\sharp_n(\beta)} \dd \alpha - \pounds_{\sharp_n(\dd \alpha)} \beta + \frac{1}{2} \dd \left( \iota_{\sharp_n(\dd \alpha)} \beta - \iota_{\sharp_n(\beta)} \dd \alpha\right) = - \pounds_{\sharp_n(\dd \alpha)} \beta\,.
    \label{eq:description_of_eta}
\end{equation}
Using Eq. \eqref{eq:contraction_unsing_Lie_bracket} together with Eq. \eqref{eq:description_of_eta} we obtain
\begin{align*}
    \iota_{\widetilde \sharp_1(\dd \{\alpha, \Theta\})} \beta &=- (-1)^{n-a} \left(\iota_{[\sharp_n(\beta), \sharp_n(\dd \alpha)]} \dd \Theta + \iota_{\sharp_n(\dd \alpha)}  \pounds_{\sharp_n(\beta)} \dd \Theta \right)+\dd \iota_{\sharp_n(\beta)} \iota_{\widetilde \sharp_1(\dd \Theta)} \dd \alpha\\
    &= - (-1)^{n-a} \left( \iota_{\sharp_n(\eta)}  \dd \Theta + \iota_{\sharp_n(\dd \alpha)}  \pounds_{\sharp_n(\beta)} \dd \Theta\right)+\dd \iota_{\sharp_n(\beta)} \iota_{\widetilde \sharp_1(\dd \Theta)} \dd \alpha\\
    &= - (-1)^{n-a} \left( (-1)^{n-a} \iota_{\widetilde \sharp_1(\dd \Theta)} \eta + \iota_{\sharp_n(\dd \alpha)}  \pounds_{\sharp_n(\beta)} \dd \Theta\right) +\dd \iota_{\sharp_n(\beta)} \iota_{\widetilde \sharp_1(\dd \Theta)} \dd \alpha\\
    &=  \iota_{\widetilde \sharp_1(\dd \Theta)} \pounds_{\sharp_n(\dd \alpha)} \beta - (-1)^{n-a} \iota_{\sharp_n(\dd \alpha)} \pounds_{\sharp_n(\beta)} \dd \Theta + \dd \iota_{\sharp_n(\beta)} \iota_{\widetilde \sharp_1(\dd \Theta)} \dd \alpha \,.
\end{align*}
Now it only remains to rewrite $\iota_{\widetilde \sharp_1(\dd \alpha)} \pounds_{\sharp_n(\dd \alpha)} \beta =  -\iota_{\pounds_{\sharp_n(\dd \alpha)}\widetilde \sharp_1(\dd \Theta)} \beta +\pounds_{\sharp_n(\dd \alpha)}\iota_{\widetilde \sharp_1 (\dd \Theta)} \beta$ to get
\begin{align*}
    \iota_{\widetilde \sharp_1(\dd \{\alpha, \Theta\})} \beta =&- \iota_{\pounds_{\sharp_n(\dd \alpha)}\widetilde \sharp_1(\dd \Theta)} \beta + \pounds_{\sharp_n(\dd \alpha)}\iota_{\widetilde \sharp_1 (\dd \Theta)} \beta\\
    & - (-1)^{n-a} \iota_{\sharp_n(\dd \alpha)} \pounds_{\sharp_n(\beta)} \dd \Theta + \dd \iota_{\sharp_n(\beta)} \iota_{\widetilde \sharp_1(\dd \Theta)} \dd \alpha\\
    &= - \iota_{\pounds_{\sharp_n(\dd \alpha)}\widetilde \sharp_1(\dd \Theta)} \beta\,,
\end{align*}
finishing the proof.
\end{proof}

Now we turn into studying its properties under contraction by vector fields:

\begin{proposition}
\label{prop:sharp_1_contraction}
Let $\alpha$ be an $a$-form taking values in $S^a[1]$ and let $X \in \mathfrak{X}(M)$. Then, $\iota_X \alpha \in S^{a-1}[1]$ and $\widetilde{\sharp}_1(\iota_X \alpha) = - \iota_{X} \widetilde \sharp_1(\alpha)$, where contraction is interpreted as $\iota_X (\theta \otimes u) = (\iota_X \theta) \otimes u$.
\end{proposition}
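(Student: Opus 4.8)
The statement splits into a closure claim, $\iota_X\alpha\in S^{a-1}[1]$, and an intertwining relation between $\widetilde\sharp_1$ and interior product. I would dispatch the closure claim first, and purely algebraically. Since $S^a[1]=(S^1)^{\wedge a}$, at each point $\alpha$ is a finite sum of decomposables $\theta^1\wedge\cdots\wedge\theta^a$ with $\theta^i\in S^1$; by linearity it is enough to treat one such term. Then
\[
\iota_X\!\left(\theta^1\wedge\cdots\wedge\theta^a\right)=\sum_{j=1}^{a}(-1)^{j-1}\,(\iota_X\theta^j)\,\theta^1\wedge\cdots\wedge\widehat{\theta^j}\wedge\cdots\wedge\theta^a,
\]
and since each coefficient $\iota_X\theta^j$ is a function, the right-hand side is again a combination of wedge products of $a-1$ elements of $S^1$, i.e. it lies in $(S^1)^{\wedge(a-1)}=S^{a-1}[1]$.

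For the intertwining relation both sides are elements of $\bigwedge^{a-2}M\otimes\bigvee_n M/K_n$, so the natural strategy is to compare their pairings against test forms. Recall that $\widetilde\sharp_1$ was constructed in Theorem \ref{thm:extension_of_sharp_1} by dualizing the pairing with $S^n$; consequently an element of $\bigwedge^{a-2}M\otimes\bigvee_n M/K_n$ is determined by the forms $\iota_{(\cdot)}\beta$ obtained by contracting against all $\beta$ taking values in $S^n$. Hence it suffices to check the desired equality after contracting each side against an arbitrary $\beta\in S^n$, exactly as in the proof of Proposition \ref{prop:sharp_of_bracket}.

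I would then rewrite each contraction using the defining identity $\iota_{\sharp_n(\beta)}\gamma=(-1)^{n+1-\deg\gamma}\iota_{\widetilde\sharp_1(\gamma)}\beta$ of Theorem \ref{thm:extension_of_sharp_1}, applied once at degree $a$ (for $\alpha$) and once at degree $a-1$ (for $\iota_X\alpha$). On the side $\iota_X\widetilde\sharp_1(\alpha)$ there is one extra manipulation: because the multivector slot of $\widetilde\sharp_1(\alpha)$ is an $n$-vector and $\beta$ is an $n$-form, the contraction $\iota_{(\cdot)}\beta$ produces a scalar on that slot, so $\iota_X$ passes through it, giving $\iota_{\iota_X\widetilde\sharp_1(\alpha)}\beta=\iota_X\bigl(\iota_{\widetilde\sharp_1(\alpha)}\beta\bigr)$. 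After applying the defining identity this becomes a multiple of $\iota_X\iota_{\sharp_n(\beta)}\alpha$, while the other side is a multiple of $\iota_{\sharp_n(\beta)}\iota_X\alpha$; these two differ only by the anticommutation of the interior products of the two vector fields $X$ and $\sharp_n(\beta)$, namely $\iota_X\iota_{\sharp_n(\beta)}=-\iota_{\sharp_n(\beta)}\iota_X$.

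The main obstacle is precisely the sign bookkeeping. The normalizing factor $(-1)^{n+1-\deg\gamma}$ of $\widetilde\sharp_1$ changes when the degree drops from $a$ to $a-1$, and this must be combined consistently with the anticommutation sign above and with the convention $\iota_{u\wedge v}=\iota_v\iota_u$ for iterated contractions; tracking these three contributions is the only delicate point. The single structural ingredient, beyond pure multilinear algebra, is the observation that $\iota_X$ commutes with the $S^n$-pairing on the $n$-vector slot, which is what reduces the comparison to the elementary anticommutation identity; in particular no integrability or Leibniz property of the graded Dirac structure is needed here.
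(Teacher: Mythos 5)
Your route is genuinely different from the paper's: the paper expands $\alpha$ into decomposables $\theta_1\wedge\cdots\wedge\theta_a$ and compares the two sides term by term using the explicit anti-derivation formula for $\widetilde\sharp_1$ from Remark \ref{remark:anti_derivation}, whereas you dualize against $S^n$. Your reduction is legitimate: $\bigvee_n M/K_n$ injects into $(S^n)^\ast$ by the definition of $K_n$, so an element of $\bigwedge^{a-2}M\otimes\bigvee_n M/K_n$ is indeed determined by its contractions with sections of $S^n$; the closure claim and the observation that $\iota_X$ passes through the pairing on the $n$-vector slot (because $\iota_u\beta$ is a function there) are both fine, and you are right that no integrability is needed.

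The gap is that you defer exactly the one computation that carries the content of the proposition, and the two sign contributions you identify do not combine to give the stated $-1$: they cancel. With the identity of Theorem \ref{thm:extension_of_sharp_1}, testing against $\beta\in S^n$ gives
\[
\iota_{\widetilde\sharp_1(\iota_X\alpha)}\beta=(-1)^{n+1-(a-1)}\,\iota_{\sharp_n(\beta)}\iota_X\alpha=(-1)^{n-a}\,\iota_{\sharp_n(\beta)}\iota_X\alpha,
\]
while
\[
\iota_{\iota_X\widetilde\sharp_1(\alpha)}\beta=\iota_X\iota_{\widetilde\sharp_1(\alpha)}\beta=(-1)^{n+1-a}\,\iota_X\iota_{\sharp_n(\beta)}\alpha=-(-1)^{n+1-a}\,\iota_{\sharp_n(\beta)}\iota_X\alpha=(-1)^{n-a}\,\iota_{\sharp_n(\beta)}\iota_X\alpha,
\]
so your method, carried to completion, yields $\widetilde\sharp_1(\iota_X\alpha)=+\iota_X\widetilde\sharp_1(\alpha)$ rather than the claimed $-\iota_X\widetilde\sharp_1(\alpha)$. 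A direct check on $\theta_1\wedge\theta_2$ with the anti-derivation formula, $\widetilde\sharp_1(\theta_1\wedge\theta_2)=\theta_1\otimes\sharp_1(\theta_2)-\theta_2\otimes\sharp_1(\theta_1)$, gives the same $+$: both $\widetilde\sharp_1(\iota_X(\theta_1\wedge\theta_2))$ and $\iota_X\widetilde\sharp_1(\theta_1\wedge\theta_2)$ equal $\theta_1(X)\sharp_1(\theta_2)-\theta_2(X)\sharp_1(\theta_1)$. So either there is a third source of sign in your scheme (there is none as described), or the minus sign in the statement is inconsistent with the sign conventions of Theorem \ref{thm:extension_of_sharp_1}; in either case the proposal stops precisely where this must be resolved and therefore does not yet establish the claimed identity.
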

\begin{proof} Indeed, let $\alpha = \theta_1\wedge \cdots \wedge \theta_a$, where $\theta_i \in S^1$, for every $i = 1, \dots, a$. Then
$
    \widetilde \sharp_1(\theta_1 \wedge \cdots \wedge \theta_a) = (-1)^{a+1}\sum_{j = 1}^{a} (-1)^{j+1} \theta_1 \wedge \cdots \wedge \hat{\theta}_j \wedge \cdots \wedge \theta_a \otimes \sharp_1(\theta_j)\,,
$
so that
\begin{align*}
    \iota_X\sharp_1(\theta_1 \wedge \cdots \wedge \theta_a)  = & (-1)^{a+1} \sum_{j = 1}^a \sum_{i< j}\theta_i (X) \theta_1 \wedge \cdots \wedge \hat{\theta}_i \wedge \cdots \wedge \hat{\theta_j} \wedge \cdots \wedge \theta_a \otimes \sharp_1(\theta_j)\\
    &+ (-1)^{a} \sum_{j = 1}^a \sum_{i> j}\theta_i (X) \theta_1 \wedge \cdots \wedge \hat{\theta}_j \wedge \cdots \wedge \hat{\theta_i} \wedge \cdots \wedge \theta_a \otimes \sharp_1(\theta_j)\,.
 \end{align*}
 Now, $\iota_X \left( \theta_1 \wedge \cdots \wedge \theta_a\right) = \sum_{i = 1}^{a} (-1)^{j+1} \theta_i(X) \theta_1 \wedge \cdots \wedge \hat{\theta_i} \wedge \cdots \wedge \theta_a$ and
 \begin{align*}
      \widetilde \sharp_1\left(\iota_X \left(\theta_1 \wedge \cdots \wedge  \theta_a \right)\right) = &(-1)^a \sum_{i = 1}^a \sum_{j > i }\theta_i (X) \theta_1 \wedge \cdots \wedge \hat{\theta}_i \wedge \cdots \wedge \hat{\theta_j} \wedge \cdots \wedge \theta_a \otimes \sharp_1(\theta_j) \\
      &+ (-1)^{a+1}\sum_{i = 1}^a \sum_{j < i }\theta_i (X) \theta_1 \wedge \cdots \wedge \hat{\theta}_j \wedge \cdots \wedge \hat{\theta_i} \wedge \cdots \wedge \theta_a \otimes \sharp_1(\theta_j)\,,
 \end{align*}
 which finishes the proof.
\end{proof}

\begin{theorem}
\label{thm:Properties_of_first_extension}
The extension of the graded Poisson bracket on a graded Dirac manifold defined by the extension $\widetilde \sharp_1$ satisfies the following properties:
\begin{enumerate}[\rm (i)]
    \item It is graded-skew-symmetric.
    \item It satisfies invariance by symmetries: If $\pounds_X \Theta = 0$, for some vector field $X \in \mathfrak{X}(M)$ and differential form $\Theta \in \Omega^a_H(M)[1]$, then $\{\alpha, \iota_X \Theta \} = - \iota_X \{\alpha, \Theta\}$.
    \item It satisfies the (graded) Jacobi identity for $\alpha, \beta \in \Omega_H^{n-1}(M),$ and $\Theta \in \Omega^a_H(M)[1]$ up to an exact term:
    \[\{\alpha, \{\beta, \Theta\}\} + \{\Theta, \{\alpha, \beta\}\} + \{\beta,\{\Theta, \alpha\}\} = \text{exact form}\,.\]
    \item It satisfies the Leibniz identity, for $\alpha \in \Omega^{n-1}_H(M)$, $\Theta_1 \in \Omega^{a_1}_H(M)[1]$ and $\Theta_2 \in \Omega^{a_2}_H(M)[1]$, we have
    \[
    \{\alpha, \Theta_1 \wedge \dd \Theta_2\} = \{\alpha, \Theta_1\} \wedge \dd \Theta_2 + (-1)^{a_1 +1} \dd \Theta_1 \wedge \{\alpha, \Theta_2\}\,.
    \]
\end{enumerate}
\end{theorem}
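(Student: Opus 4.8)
The plan is to treat the four properties in turn, with the defining relation of $\widetilde\sharp_1$ from Theorem \ref{thm:extension_of_sharp_1}, the contraction rule of Proposition \ref{prop:sharp_1_contraction}, the anti-derivation formula of Remark \ref{remark:anti_derivation}, and—for the Jacobi identity—the transport formula of Proposition \ref{prop:sharp_of_bracket} as the four engines. The single elementary fact that recurs is that the multivector part of $\widetilde\sharp_1(\dd\Theta)$ lives in $\bigvee_n M$, so that whenever it is contracted against $\dd\alpha$ with $\alpha\in\Omega^{n-1}_H(M)$ the outcome $\iota_u\dd\alpha$ is already a function; this annihilates every spurious term that an honest Leibniz expansion would otherwise produce.

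For skew-symmetry (i), the only forms for which both $\{\alpha,\Theta\}$ and $\{\Theta,\alpha\}$ are simultaneously defined are Hamiltonian $(n-1)$-forms in the $[1]$-domain, and on this overlap Proposition \ref{prop:Extension_to_Hamiltonian} (see the remark after Eq. \eqref{eq:bracket_sharp_1}) identifies the extended bracket with the original one, so graded-skew-symmetry is inherited. One may also check it directly: for $\dd\alpha,\dd\Theta\in S^n$, combining $\iota_{\sharp_n(\dd\alpha)}\dd\Theta=-\iota_{\widetilde\sharp_1(\dd\Theta)}\dd\alpha$ from Theorem \ref{thm:extension_of_sharp_1} with the skew-symmetry of $\sharp_n$ yields $\iota_{\widetilde\sharp_1(\dd\Theta)}\dd\alpha=-\iota_{\widetilde\sharp_1(\dd\alpha)}\dd\Theta$, which is the required identity since $\deg_H\alpha=\deg_H\Theta=0$. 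For invariance (ii), from $\pounds_X\Theta=0$ I get $\dd\iota_X\Theta=-\iota_X\dd\Theta\in(S^1)^{\wedge a}$, so $\iota_X\Theta\in\Omega^{a-1}_H(M)[1]$; Proposition \ref{prop:sharp_1_contraction} then gives $\widetilde\sharp_1(\dd\iota_X\Theta)=\iota_X\widetilde\sharp_1(\dd\Theta)$. Expanding $\iota_X\{\alpha,\Theta\}$ by the Leibniz rule for $\iota_X$ produces two summands, the one involving $\iota_X\iota_u\dd\alpha$ vanishing because $\iota_u\dd\alpha$ is a function; matching the survivor against $\{\alpha,\iota_X\Theta\}$ and tracking the degree shift gives the claimed $-\iota_X\{\alpha,\Theta\}$.

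The Leibniz identity (iv) follows the same pattern. First I would record the co-Leibniz rule $\widetilde\sharp_1(\omega_1\wedge\omega_2)=(-1)^{\deg\omega_2}\widetilde\sharp_1(\omega_1)\wedge\omega_2+\omega_1\wedge\widetilde\sharp_1(\omega_2)$, immediate from the anti-derivation formula of Remark \ref{remark:anti_derivation}. Applying it to $\dd(\Theta_1\wedge\dd\Theta_2)=\dd\Theta_1\wedge\dd\Theta_2$, contracting with $\dd\alpha$, and using once more that the $n$-vector part turns $\dd\alpha$ into a function so the form factors pull out freely, the two summands become $\{\alpha,\Theta_1\}\wedge\dd\Theta_2$ and a scalar multiple of $\dd\Theta_1\wedge\{\alpha,\Theta_2\}$; collecting the global prefactor $(-1)^{\deg_H(\Theta_1\wedge\dd\Theta_2)}$ then reproduces the coefficients $1$ and $(-1)^{a_1+1}$.

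The heart of the theorem is the Jacobi identity (iii), and my plan is to pass to the infinitesimal level through Proposition \ref{prop:sharp_of_bracket}, namely $\widetilde\sharp_1(\dd\{\alpha,\Theta\})=-\pounds_{\sharp_n(\dd\alpha)}\widetilde\sharp_1(\dd\Theta)$. Writing $X_\alpha:=\sharp_n(\dd\alpha)$ and $D_\alpha:=-\pounds_{X_\alpha}$, this says $\widetilde\sharp_1\circ\dd$ intertwines $\{\alpha,\cdot\}$ with $D_\alpha$. I then compute $[X_\alpha,X_\beta]$ from the integrability of Definition \ref{def:graded_Dirac}: specializing the tensor $\eta$ of Eq. \eqref{eq:description_of_eta} to the two $n$-forms $\dd\alpha,\dd\beta$ and using $\pounds_U=\dd\iota_U$ on closed forms gives $[X_\alpha,X_\beta]=-X_{\{\alpha,\beta\}}$, hence $[D_\alpha,D_\beta]=D_{\{\alpha,\beta\}}$. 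Applying $\widetilde\sharp_1\circ\dd$ to the Jacobiator $J$ and substituting, the three terms collapse to $([D_\alpha,D_\beta]-D_{\{\alpha,\beta\}})\widetilde\sharp_1(\dd\Theta)=0$, so that $\widetilde\sharp_1(\dd J)=0$.

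The hard part is the final implication, since $\widetilde\sharp_1(\dd J)=0$ does not by itself force $\dd J=0$: the map $\widetilde\sharp_1$ is far from injective, its kernel containing every form with no leg in the image of $\sharp_n$. To reach genuine exactness of $J$ I expect to retain, rather than discard, the exact remainders generated by the identities $\pounds=\dd\iota+\iota\dd$ that underlie Proposition \ref{prop:sharp_of_bracket}: rewriting each of the three nested brackets of $J$ through $\{\alpha,\Theta\}=(-1)^{\deg_H\Theta}\iota_{\widetilde\sharp_1(\dd\Theta)}\dd\alpha$ and reorganizing, the principal parts cancel by the operator identity just established while the leftover should assemble into $\dd(\,\cdot\,)$ explicitly. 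I anticipate that the delicate bookkeeping of these exact terms—deciding which contractions collapse to functions and which survive—is where the real effort of the proof will lie.
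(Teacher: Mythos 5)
Your treatment of items (i), (ii) and (iv) is correct and essentially the paper's own argument. For (i) the paper simply invokes the definition (for $\deg \Theta \geq n$ the bracket $\{\Theta,\alpha\}$ is \emph{defined} by extending skew-symmetrically), and your verification on the overlap with the original bracket is consistent with that. For (ii) your computation is exactly the one in the paper: the term $\iota_X\iota_u\dd\alpha$ dies because $\iota_u\dd\alpha$ is a function. For (iv) you work on the $\widetilde\sharp_1$ side via a co-Leibniz rule deduced from the anti-derivation formula of Remark \ref{remark:anti_derivation}, whereas the paper first dualizes, writing $(-1)^{\deg_H(\Theta_1\wedge\dd\Theta_2)}\iota_{\widetilde\sharp_1(\dd\Theta_1\wedge\dd\Theta_2)}\dd\alpha=-\iota_{\sharp_n(\dd\alpha)}(\dd\Theta_1\wedge\dd\Theta_2)$ and then uses that contraction by the vector field $\sharp_n(\dd\alpha)$ is a derivation of the exterior algebra; the two routes are equivalent, the paper's being lighter on sign bookkeeping.

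The gap is in (iii). As you yourself concede, the identity $[D_\alpha,D_\beta]=D_{\{\alpha,\beta\}}$ only yields $\widetilde\sharp_1(\dd J)=0$, which does not give exactness of the Jacobiator, and the computation you defer as ``delicate bookkeeping'' is in fact the entire content of the claim — so as written, item (iii) is not proved. You should know that the deferred step is short and that your proposed strategy does close it. Since $\dd\{\alpha,\beta\}\in S^n$, the duality of Theorem \ref{thm:extension_of_sharp_1} converts $\{\{\alpha,\beta\},\Theta\}$ into $-\iota_{\sharp_n(\dd\{\alpha,\beta\})}\dd\Theta=\iota_{[X_\alpha,X_\beta]}\dd\Theta$, using your relation $X_{\{\alpha,\beta\}}=-[X_\alpha,X_\beta]$; then the identity $\iota_{[X_\alpha,X_\beta]}=\pounds_{X_\alpha}\iota_{X_\beta}-\iota_{X_\beta}\pounds_{X_\alpha}$ together with $\pounds_X=\dd\iota_X+\iota_X\dd$ applied to the closed form $\dd\Theta$ gives $\{\{\alpha,\beta\},\Theta\}=\dd\,\iota_{\sharp_n(\dd\alpha)}\iota_{\sharp_n(\dd\beta)}\dd\Theta+\{\{\Theta,\beta\},\alpha\}-\{\{\Theta,\alpha\},\beta\}$, with the exact term completely explicit. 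This is the paper's three-line proof: no analysis of the kernel of $\widetilde\sharp_1$ is needed, and the preliminary passage through the operators $D_\alpha$ can be dropped altogether.
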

\begin{proof}
\begin{enumerate}[\rm (i)]
    \item It is satisfied by definition.
    \item Indeed, suppose $\pounds_X \Theta = \dd \iota_X \Theta + \iota_X \dd \Theta = 0$. Then, for every $\alpha \in \Omega^{n-1}_H(M)$, we have 
    \begin{align*}
        \{\alpha, \iota_X\Theta\} &= (-1)^{\deg_H \Theta + 1} \iota_{\widetilde \sharp_1(\dd \iota_X \Theta)} \dd \alpha = (-1)^{\deg_H \Theta} \iota_{\widetilde\sharp_1(\iota_X \dd \Theta)} \dd \alpha \\
        &= -(-1)^{\deg_H \Theta} \iota_{\iota_X \widetilde \sharp_1\left( 
 \dd \Theta\right)} \dd \alpha = -(-1)^{\deg_H \Theta} \iota_X  \iota_{\widetilde \sharp_1\left( 
 \dd \Theta\right)} \dd \alpha \\
 &= - \iota_X \{\alpha, \Theta\}\,,
    \end{align*}
    where in the third equality we have used Proposition \ref{prop:sharp_1_contraction}.
\item Finally, for $\alpha, \beta \in \Omega^{n-1}_H(M)$ and $\Theta \in \Omega^a_H(M)[1]$, we have:
\begin{align*}
   \{\{\alpha, \beta\}, \Theta\}  &= (-1)^{n-1 -a} \iota_{\widetilde \sharp_1(\dd \Theta)} \dd\{ \alpha, \beta\} = - \iota_{\sharp_n(\dd \{\alpha, \beta\})} \dd \Theta  = \iota_{[\sharp_n(\dd \alpha), \sharp_n(\dd \beta)]} \dd \Theta \\
   &= \pounds_{\sharp_n(\dd \alpha)} \iota_{\sharp_n(\dd \beta)} \dd \Theta -\iota_{\sharp_n(\dd \beta)} \pounds_{\sharp_n(\dd \alpha)} \dd \Theta \\
   &=  \dd \iota_{\sharp_n(\dd \alpha)} \iota_{\sharp_n(\dd \beta)} \dd \Theta + \iota_{\sharp_n(\dd \alpha)} \dd \iota_{\sharp_n(\dd \beta)} \dd \Theta - \iota_{\sharp_n(\dd \beta)}\dd  \iota_{\sharp_n(\dd \alpha)} \dd \Theta\\
   &=  \dd \iota_{\sharp_n(\dd \alpha)} \iota_{\sharp_n(\dd \beta)} \dd \Theta + \{\{\Theta, \beta\}, \alpha\} - \{\{\Theta, \alpha\}, \beta\}\,,
\end{align*}
frow which Jacobi identity follows.
\end{enumerate}
\item First notice, as $\dd \Theta_i \in (S^1)^{\wedge({a_i+1})}$ that we have $\dd \Theta_1 \wedge \dd \Theta_2 \in (S^{1})^{\wedge(a_1+a_2+2)}$. Now, we have 
\begin{align*}
    \{\alpha, \Theta_1 \wedge \Theta_2\} &= (-1)^{\deg_H(\Theta_1 \wedge \dd \Theta_2)} \iota_{\widetilde \sharp_1(\dd \Theta_1 \wedge \dd \Theta_2)} \dd \alpha = - \iota_{\sharp_n(\dd \alpha)} (\dd \Theta_1 \wedge \dd \Theta_2)\\
    &= - \{\Theta_1, \alpha\} \wedge \dd \Theta_2 + (-1)^{a_1} \dd \Theta_1 \wedge \{ \Theta_2, \alpha\}\\
    &= \{\alpha, \Theta_1\} \wedge \dd \Theta_2 + (-1)^{a_1 +1} \dd \Theta_1 \wedge \{\alpha, \Theta_2\}\,.
\end{align*}
\end{proof}

\begin{remark}
Notice that we cannot use the values of $\widetilde \sharp_1$ to define a bracket on
\[
\Omega^a_H(M) \otimes \Omega^b_H{M}[1] \xrightarrow{\{\cdot, \cdot\}} \Omega^{a + b - (n - 1)}(M),
\]
for $a < n -1$, as it would be trivially zero.
\end{remark}

\subsection{Last step: Extension of \texorpdfstring{$\sharp_a$}{} to a suitable family of forms}
\label{subsection:last_step}
For the sake of exposition, let us first discuss how to extend the brackets of the previous step to a bracket $\{\alpha, \Theta\}$, where $\alpha \in \Omega^{n-2}_H(M)$ and $\Theta \in \Omega^a_H(M)[1]$. Following the proposed approach, we would need to find an extension
\[
\widetilde{\sharp}_2: (S^1)^{\wedge a} \longrightarrow \bigwedge^{a - 2} M \otimes \bigvee_{n-1} M/K_{n- 1}.
\]
In order to define the new extension by Equation \eqref{eq:bracket_sharp_1}, we need to ask that for compatibility with $\widetilde \sharp_1$, so that we require $\widetilde \sharp_1 = \widetilde \sharp_2\wedge \mathds{1}_1 $ (following the same idea behind the formula in Remark \ref{remark:formula_for_first_extension}), where the exterior product is understood component wise. An analogous computation to that in Lemma \ref{lemma1} shows that the previous requirement is equivalent to $\widetilde \sharp_2$ satisfying
\[
\iota_{\widetilde \sharp_2(\Theta)} \alpha =  \iota_{\widetilde \sharp_1(\Theta)} \alpha\,.
\]
It could happen that such extension does not exists and thus, we need to restrict to the following subspace, which we shall assume to be a subbundle:
\[
 S^a[2] = \left\{\Theta \in (S^1)^{\wedge a}: \exists\, \widetilde {\sharp}_2(\Theta) \in \bigwedge^{a - 2}M \otimes \bigwedge_{n - 1}M/ K_{n - 1} \text{ with } \iota_{\widetilde \sharp_2(\Theta)}\alpha = \iota_{\widetilde \sharp_1(\Theta)} \alpha,\, \forall \alpha \in S^n\right\}. 
\]
Following the notation of the previous section, we obtain a bracket
\[
\Omega^a_H(M) \otimes \Omega^{b}_H(M)[2] \xrightarrow{\{\cdot, \cdot\}_{\widetilde \sharp_2}} \Omega^{a+b- (n-1)}(M)\,,
\]
where \[
\Omega^a_H(M)[2] := \{\alpha \in \Omega^{a}(M): \dd \alpha \in S^{a+1}[2]\}\,, \quad \{\alpha, \Theta\}_{\widetilde \sharp_2} = (-1)^{\deg_H \Theta} \iota_{\widetilde \sharp_2(\Theta)} \dd \alpha\,.
\]

We now proceed iteratively and define \[
 S^a[j] := \left\{\Theta \in (S^1)^{\wedge a}: \exists\, \widetilde {\sharp}_j(\Theta) \in \bigwedge^{a - j}M \otimes \bigwedge_{n + 1 - j}M/ K_{n + 1 - j} \text{ with } \iota_{\widetilde \sharp_j(\Theta)}\alpha = {}\iota_{\widetilde \sharp_{1}(\Theta)} \alpha,\, \forall \alpha \in S^n\right\}\,,
\]
and then choose a map \[
\widetilde \sharp_j : S^a[j] \longrightarrow \bigwedge^{a - j} M \otimes \bigvee_{n+1-j} M \big / K_{n+1-j}
\]
that is compatible with all the previous choices $\widetilde \sharp_1, \widetilde \sharp_2, \dots, \widetilde \sharp_{j-1}$ (compatible meaning $\widetilde \sharp_i = C(i,j)\widetilde \sharp_j \wedge \mathds{1}_{j-i}$, for a suitable constant $C(i,j)$ to be specified later) to then extend the bracket to
\[
\Omega^{n-j}_H(M) \otimes \Omega^{b}_H(M)[j] \xrightarrow{\{\cdot, \cdot\}} \Omega^{a + b - (n - 1)}(M)\,, \quad \{\alpha, \Theta\}_{\widetilde \sharp_j} := (-1)^{\deg_H \Theta} \iota_{\widetilde \sharp_j(\dd \Theta)} \dd \alpha \,.
\]
where 
\[
\Omega^{a}_H(M)[j]= \{\Theta \in \Omega^a(M): \dd \Theta \in S^{a+1}[j]\}
\]
with $a \geq n - j$.

Then, such a choice of vector bundle map satisfying the compatibility condition gives a bracket defined as 
\begin{equation}
\label{eq:bracket_using_extension_sharp_j}
    \{\alpha, \Theta\}_{\widetilde \sharp_j} = (-1)^{\deg_H \Theta} \binom{j + a - ( n+1 ) }{ j - 1} \iota_{\widetilde \sharp_j (\dd \Theta)} \dd \alpha,
\end{equation}
The following diagram represents the pairs of forms on which we can evaluate  the bracket given the extension $\widetilde \sharp_j$, extending by skew-symmetry and using Eq. \eqref{eq:bracket_using_extension_sharp_j}: 
\begin{center}

\tikzset{every picture/.style={line width=0.75pt}} 

\begin{tikzpicture}[x=0.5pt,y=0.5pt,yscale=-1,xscale=1]

\draw  [fill={rgb, 255:red, 217; green, 217; blue, 217 }  ,fill opacity=1 ] (50.85,200.45) -- (250.25,200.45) -- (250.25,399.85) -- (50.85,399.85) -- cycle ;
\draw  [fill={rgb, 255:red, 0; green, 0; blue, 0 }  ,fill opacity=1 ] (243,199.45) .. controls (243,195.67) and (246.07,192.6) .. (249.85,192.6) .. controls (253.63,192.6) and (256.7,195.67) .. (256.7,199.45) .. controls (256.7,203.23) and (253.63,206.3) .. (249.85,206.3) .. controls (246.07,206.3) and (243,203.23) .. (243,199.45) -- cycle ;
\draw  [fill={rgb, 255:red, 0; green, 0; blue, 0 }  ,fill opacity=1 ] (243,249.45) .. controls (243,245.67) and (246.07,242.6) .. (249.85,242.6) .. controls (253.63,242.6) and (256.7,245.67) .. (256.7,249.45) .. controls (256.7,253.23) and (253.63,256.3) .. (249.85,256.3) .. controls (246.07,256.3) and (243,253.23) .. (243,249.45) -- cycle ;
\draw  [fill={rgb, 255:red, 0; green, 0; blue, 0 }  ,fill opacity=1 ] (243,298.45) .. controls (243,294.67) and (246.07,291.6) .. (249.85,291.6) .. controls (253.63,291.6) and (256.7,294.67) .. (256.7,298.45) .. controls (256.7,302.23) and (253.63,305.3) .. (249.85,305.3) .. controls (246.07,305.3) and (243,302.23) .. (243,298.45) -- cycle ;
\draw  [fill={rgb, 255:red, 0; green, 0; blue, 0 }  ,fill opacity=1 ] (243,349.45) .. controls (243,345.67) and (246.07,342.6) .. (249.85,342.6) .. controls (253.63,342.6) and (256.7,345.67) .. (256.7,349.45) .. controls (256.7,353.23) and (253.63,356.3) .. (249.85,356.3) .. controls (246.07,356.3) and (243,353.23) .. (243,349.45) -- cycle ;
\draw  [fill={rgb, 255:red, 0; green, 0; blue, 0 }  ,fill opacity=1 ] (93,199.45) .. controls (93,195.67) and (96.07,192.6) .. (99.85,192.6) .. controls (103.63,192.6) and (106.7,195.67) .. (106.7,199.45) .. controls (106.7,203.23) and (103.63,206.3) .. (99.85,206.3) .. controls (96.07,206.3) and (93,203.23) .. (93,199.45) -- cycle ;
\draw  [fill={rgb, 255:red, 0; green, 0; blue, 0 }  ,fill opacity=1 ] (143,199.45) .. controls (143,195.67) and (146.07,192.6) .. (149.85,192.6) .. controls (153.63,192.6) and (156.7,195.67) .. (156.7,199.45) .. controls (156.7,203.23) and (153.63,206.3) .. (149.85,206.3) .. controls (146.07,206.3) and (143,203.23) .. (143,199.45) -- cycle ;
\draw  [fill={rgb, 255:red, 0; green, 0; blue, 0 }  ,fill opacity=1 ] (143,249.45) .. controls (143,245.67) and (146.07,242.6) .. (149.85,242.6) .. controls (153.63,242.6) and (156.7,245.67) .. (156.7,249.45) .. controls (156.7,253.23) and (153.63,256.3) .. (149.85,256.3) .. controls (146.07,256.3) and (143,253.23) .. (143,249.45) -- cycle ;
\draw  [fill={rgb, 255:red, 0; green, 0; blue, 0 }  ,fill opacity=1 ] (193,199.45) .. controls (193,195.67) and (196.07,192.6) .. (199.85,192.6) .. controls (203.63,192.6) and (206.7,195.67) .. (206.7,199.45) .. controls (206.7,203.23) and (203.63,206.3) .. (199.85,206.3) .. controls (196.07,206.3) and (193,203.23) .. (193,199.45) -- cycle ;
\draw  [fill={rgb, 255:red, 0; green, 0; blue, 0 }  ,fill opacity=1 ] (193,249.45) .. controls (193,245.67) and (196.07,242.6) .. (199.85,242.6) .. controls (203.63,242.6) and (206.7,245.67) .. (206.7,249.45) .. controls (206.7,253.23) and (203.63,256.3) .. (199.85,256.3) .. controls (196.07,256.3) and (193,253.23) .. (193,249.45) -- cycle ;
\draw  [fill={rgb, 255:red, 0; green, 0; blue, 0 }  ,fill opacity=1 ] (193,298.45) .. controls (193,294.67) and (196.07,291.6) .. (199.85,291.6) .. controls (203.63,291.6) and (206.7,294.67) .. (206.7,298.45) .. controls (206.7,302.23) and (203.63,305.3) .. (199.85,305.3) .. controls (196.07,305.3) and (193,302.23) .. (193,298.45) -- cycle ;
\draw    (50.85,200.45) -- (250.25,399.85) ;
\draw  [fill={rgb, 255:red, 0; green, 0; blue, 0 }  ,fill opacity=1 ] (243,50.45) .. controls (243,46.67) and (246.07,43.6) .. (249.85,43.6) .. controls (253.63,43.6) and (256.7,46.67) .. (256.7,50.45) .. controls (256.7,54.23) and (253.63,57.3) .. (249.85,57.3) .. controls (246.07,57.3) and (243,54.23) .. (243,50.45) -- cycle ;
\draw  [fill={rgb, 255:red, 0; green, 0; blue, 0 }  ,fill opacity=1 ] (243,99.45) .. controls (243,95.67) and (246.07,92.6) .. (249.85,92.6) .. controls (253.63,92.6) and (256.7,95.67) .. (256.7,99.45) .. controls (256.7,103.23) and (253.63,106.3) .. (249.85,106.3) .. controls (246.07,106.3) and (243,103.23) .. (243,99.45) -- cycle ;
\draw  [fill={rgb, 255:red, 0; green, 0; blue, 0 }  ,fill opacity=1 ] (243,150.45) .. controls (243,146.67) and (246.07,143.6) .. (249.85,143.6) .. controls (253.63,143.6) and (256.7,146.67) .. (256.7,150.45) .. controls (256.7,154.23) and (253.63,157.3) .. (249.85,157.3) .. controls (246.07,157.3) and (243,154.23) .. (243,150.45) -- cycle ;
\draw  [fill={rgb, 255:red, 0; green, 0; blue, 0 }  ,fill opacity=1 ] (243,199.85) .. controls (243,196.07) and (246.07,193) .. (249.85,193) .. controls (253.63,193) and (256.7,196.07) .. (256.7,199.85) .. controls (256.7,203.63) and (253.63,206.7) .. (249.85,206.7) .. controls (246.07,206.7) and (243,203.63) .. (243,199.85) -- cycle ;
\draw  [fill={rgb, 255:red, 0; green, 0; blue, 0 }  ,fill opacity=1 ] (294,199.45) .. controls (294,195.67) and (297.07,192.6) .. (300.85,192.6) .. controls (304.63,192.6) and (307.7,195.67) .. (307.7,199.45) .. controls (307.7,203.23) and (304.63,206.3) .. (300.85,206.3) .. controls (297.07,206.3) and (294,203.23) .. (294,199.45) -- cycle ;
\draw  [fill={rgb, 255:red, 0; green, 0; blue, 0 }  ,fill opacity=1 ] (343,198.45) .. controls (343,194.67) and (346.07,191.6) .. (349.85,191.6) .. controls (353.63,191.6) and (356.7,194.67) .. (356.7,198.45) .. controls (356.7,202.23) and (353.63,205.3) .. (349.85,205.3) .. controls (346.07,205.3) and (343,202.23) .. (343,198.45) -- cycle ;
\draw  [fill={rgb, 255:red, 0; green, 0; blue, 0 }  ,fill opacity=1 ] (393,198.45) .. controls (393,194.67) and (396.07,191.6) .. (399.85,191.6) .. controls (403.63,191.6) and (406.7,194.67) .. (406.7,198.45) .. controls (406.7,202.23) and (403.63,205.3) .. (399.85,205.3) .. controls (396.07,205.3) and (393,202.23) .. (393,198.45) -- cycle ;
\draw  [fill={rgb, 255:red, 217; green, 217; blue, 217 }  ,fill opacity=1 ] (50.85,200.45) -- (250.25,399.85) -- (50.85,399.85) -- cycle ;
\draw  [fill={rgb, 255:red, 0; green, 0; blue, 0 }  ,fill opacity=1 ] (44,200.45) .. controls (44,196.67) and (47.07,193.6) .. (50.85,193.6) .. controls (54.63,193.6) and (57.7,196.67) .. (57.7,200.45) .. controls (57.7,204.23) and (54.63,207.3) .. (50.85,207.3) .. controls (47.07,207.3) and (44,204.23) .. (44,200.45) -- cycle ;
\draw  [fill={rgb, 255:red, 0; green, 0; blue, 0 }  ,fill opacity=1 ] (93,250.45) .. controls (93,246.67) and (96.07,243.6) .. (99.85,243.6) .. controls (103.63,243.6) and (106.7,246.67) .. (106.7,250.45) .. controls (106.7,254.23) and (103.63,257.3) .. (99.85,257.3) .. controls (96.07,257.3) and (93,254.23) .. (93,250.45) -- cycle ;
\draw  [fill={rgb, 255:red, 0; green, 0; blue, 0 }  ,fill opacity=1 ] (143.7,300.15) .. controls (143.7,296.37) and (146.77,293.3) .. (150.55,293.3) .. controls (154.33,293.3) and (157.4,296.37) .. (157.4,300.15) .. controls (157.4,303.93) and (154.33,307) .. (150.55,307) .. controls (146.77,307) and (143.7,303.93) .. (143.7,300.15) -- cycle ;
\draw  [fill={rgb, 255:red, 0; green, 0; blue, 0 }  ,fill opacity=1 ] (193,350.45) .. controls (193,346.67) and (196.07,343.6) .. (199.85,343.6) .. controls (203.63,343.6) and (206.7,346.67) .. (206.7,350.45) .. controls (206.7,354.23) and (203.63,357.3) .. (199.85,357.3) .. controls (196.07,357.3) and (193,354.23) .. (193,350.45) -- cycle ;
\draw  [fill={rgb, 255:red, 0; green, 0; blue, 0 }  ,fill opacity=1 ] (243.4,399.85) .. controls (243.4,396.07) and (246.47,393) .. (250.25,393) .. controls (254.03,393) and (257.1,396.07) .. (257.1,399.85) .. controls (257.1,403.63) and (254.03,406.7) .. (250.25,406.7) .. controls (246.47,406.7) and (243.4,403.63) .. (243.4,399.85) -- cycle ;
\draw  (51,400.2) -- (401,400.2)(51,50.6) -- (51,400.2) -- cycle (394,395.2) -- (401,400.2) -- (394,405.2) (46,57.6) -- (51,50.6) -- (56,57.6)  ;
\draw  [dash pattern={on 4.5pt off 4.5pt}]  (51,400.2) -- (400,49.6) ;
\draw  [fill={rgb, 255:red, 0; green, 0; blue, 0 }  ,fill opacity=1 ] (193,50.45) .. controls (193,46.67) and (196.07,43.6) .. (199.85,43.6) .. controls (203.63,43.6) and (206.7,46.67) .. (206.7,50.45) .. controls (206.7,54.23) and (203.63,57.3) .. (199.85,57.3) .. controls (196.07,57.3) and (193,54.23) .. (193,50.45) -- cycle ;
\draw  [fill={rgb, 255:red, 0; green, 0; blue, 0 }  ,fill opacity=1 ] (193,99.45) .. controls (193,95.67) and (196.07,92.6) .. (199.85,92.6) .. controls (203.63,92.6) and (206.7,95.67) .. (206.7,99.45) .. controls (206.7,103.23) and (203.63,106.3) .. (199.85,106.3) .. controls (196.07,106.3) and (193,103.23) .. (193,99.45) -- cycle ;
\draw  [fill={rgb, 255:red, 0; green, 0; blue, 0 }  ,fill opacity=1 ] (193,150.45) .. controls (193,146.67) and (196.07,143.6) .. (199.85,143.6) .. controls (203.63,143.6) and (206.7,146.67) .. (206.7,150.45) .. controls (206.7,154.23) and (203.63,157.3) .. (199.85,157.3) .. controls (196.07,157.3) and (193,154.23) .. (193,150.45) -- cycle ;
\draw  [fill={rgb, 255:red, 0; green, 0; blue, 0 }  ,fill opacity=1 ] (143,49.45) .. controls (143,45.67) and (146.07,42.6) .. (149.85,42.6) .. controls (153.63,42.6) and (156.7,45.67) .. (156.7,49.45) .. controls (156.7,53.23) and (153.63,56.3) .. (149.85,56.3) .. controls (146.07,56.3) and (143,53.23) .. (143,49.45) -- cycle ;
\draw  [fill={rgb, 255:red, 0; green, 0; blue, 0 }  ,fill opacity=1 ] (143,98.45) .. controls (143,94.67) and (146.07,91.6) .. (149.85,91.6) .. controls (153.63,91.6) and (156.7,94.67) .. (156.7,98.45) .. controls (156.7,102.23) and (153.63,105.3) .. (149.85,105.3) .. controls (146.07,105.3) and (143,102.23) .. (143,98.45) -- cycle ;
\draw  [fill={rgb, 255:red, 0; green, 0; blue, 0 }  ,fill opacity=1 ] (143,149.45) .. controls (143,145.67) and (146.07,142.6) .. (149.85,142.6) .. controls (153.63,142.6) and (156.7,145.67) .. (156.7,149.45) .. controls (156.7,153.23) and (153.63,156.3) .. (149.85,156.3) .. controls (146.07,156.3) and (143,153.23) .. (143,149.45) -- cycle ;
\draw  [fill={rgb, 255:red, 0; green, 0; blue, 0 }  ,fill opacity=1 ] (294,250.45) .. controls (294,246.67) and (297.07,243.6) .. (300.85,243.6) .. controls (304.63,243.6) and (307.7,246.67) .. (307.7,250.45) .. controls (307.7,254.23) and (304.63,257.3) .. (300.85,257.3) .. controls (297.07,257.3) and (294,254.23) .. (294,250.45) -- cycle ;
\draw  [fill={rgb, 255:red, 0; green, 0; blue, 0 }  ,fill opacity=1 ] (343,249.45) .. controls (343,245.67) and (346.07,242.6) .. (349.85,242.6) .. controls (353.63,242.6) and (356.7,245.67) .. (356.7,249.45) .. controls (356.7,253.23) and (353.63,256.3) .. (349.85,256.3) .. controls (346.07,256.3) and (343,253.23) .. (343,249.45) -- cycle ;
\draw  [fill={rgb, 255:red, 0; green, 0; blue, 0 }  ,fill opacity=1 ] (393,249.45) .. controls (393,245.67) and (396.07,242.6) .. (399.85,242.6) .. controls (403.63,242.6) and (406.7,245.67) .. (406.7,249.45) .. controls (406.7,253.23) and (403.63,256.3) .. (399.85,256.3) .. controls (396.07,256.3) and (393,253.23) .. (393,249.45) -- cycle ;
\draw  [fill={rgb, 255:red, 0; green, 0; blue, 0 }  ,fill opacity=1 ] (293,299.45) .. controls (293,295.67) and (296.07,292.6) .. (299.85,292.6) .. controls (303.63,292.6) and (306.7,295.67) .. (306.7,299.45) .. controls (306.7,303.23) and (303.63,306.3) .. (299.85,306.3) .. controls (296.07,306.3) and (293,303.23) .. (293,299.45) -- cycle ;
\draw  [fill={rgb, 255:red, 0; green, 0; blue, 0 }  ,fill opacity=1 ] (342,298.45) .. controls (342,294.67) and (345.07,291.6) .. (348.85,291.6) .. controls (352.63,291.6) and (355.7,294.67) .. (355.7,298.45) .. controls (355.7,302.23) and (352.63,305.3) .. (348.85,305.3) .. controls (345.07,305.3) and (342,302.23) .. (342,298.45) -- cycle ;
\draw  [fill={rgb, 255:red, 0; green, 0; blue, 0 }  ,fill opacity=1 ] (392,298.45) .. controls (392,294.67) and (395.07,291.6) .. (398.85,291.6) .. controls (402.63,291.6) and (405.7,294.67) .. (405.7,298.45) .. controls (405.7,302.23) and (402.63,305.3) .. (398.85,305.3) .. controls (395.07,305.3) and (392,302.23) .. (392,298.45) -- cycle ;

\draw (119,0) node [anchor=north west][inner sep=0.75pt]  [xscale=0.75,yscale=0.75] [align=left] {Brackets defined by $\displaystyle \tilde{\sharp }_{j}$};
\draw (412,393.4) node [anchor=north west][inner sep=0.75pt]  [xscale=0.75,yscale=0.75]  {$\operatorname{ord\ \alpha }$};
\draw (33,25.4) node [anchor=north west][inner sep=0.75pt]  [xscale=0.75,yscale=0.75]  {$\operatorname{ord\ \beta }$};

\end{tikzpicture}
\end{center}

\begin{remark} The reader may notice that the definition of the bundle $S^a[j]$ does not depend on the mappings $\widetilde \sharp_2,\dots, \widetilde \sharp_{j-1}$, and may correctly point out that an extension of $\widetilde \sharp_j$ defined in the whole bundle $S^a[j]$ that is compatible with the previous extensions may not exist. This would mean that an extension of the brackets (following our construction) compatible with the previous extensions is not possible.
\end{remark}

Let us give some initial results concerning these vector subbundles.

\begin{lemma}
\label{lemma:wedge_of_identities}
Let $V$ be a finite dimensional vector space and let $\mathds{1}_a$ denote the identity on $\bigwedge^a V^\ast \otimes \bigwedge^a V$. Then, 
\[
\mathds{1}_a \wedge \mathds{1}_b = \binom{a+b}{a} \mathds{1}_{a+b}\,. 
\]
\end{lemma}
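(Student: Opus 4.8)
The plan is to prove the identity by a direct computation in an arbitrary basis, exactly in the spirit of the proofs of Lemmas \ref{lemma1} and \ref{lemma2}; the only real content is a combinatorial count together with a sign cancellation. First I would fix a basis $\{e_i\}$ of $V$ with dual basis $\{e^i\}$. Using the ordered multi-index convention of the paper, write $\mathds{1}_a = e^{\widetilde\jmath} \otimes e_{\widetilde\jmath}$ and $\mathds{1}_b = e^{\widetilde k} \otimes e_{\widetilde k}$, where $\widetilde\jmath = j_1 \cdots j_a$ and $\widetilde k = k_1 \cdots k_b$ run over ordered multi-indices of the indicated lengths.

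Next I would apply the component-wise exterior product, which on decomposables reads $(\theta \otimes u) \wedge (\eta \otimes v) = (\theta \wedge \eta) \otimes (u \wedge v)$, to obtain
\[
\mathds{1}_a \wedge \mathds{1}_b = (e^{\widetilde\jmath} \wedge e^{\widetilde k}) \otimes (e_{\widetilde\jmath} \wedge e_{\widetilde k}),
\]
summed over ordered $\widetilde\jmath$ and $\widetilde k$. A summand vanishes unless the index sets underlying $\widetilde\jmath$ and $\widetilde k$ are disjoint; when they are disjoint, let $\widetilde m$ denote the ordered multi-index obtained by sorting their union, and let $\sigma$ be the permutation that reorders the concatenation $\widetilde\jmath\,\widetilde k$ into $\widetilde m$.

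The crucial observation, and the only point demanding care, is the bookkeeping of this reordering sign. One has $e^{\widetilde\jmath} \wedge e^{\widetilde k} = (-1)^\sigma e^{\widetilde m}$ and, with the \emph{same} permutation, $e_{\widetilde\jmath} \wedge e_{\widetilde k} = (-1)^\sigma e_{\widetilde m}$, since the rearrangement of the upper indices and of the lower indices is governed by one and the same $\sigma$. Hence each disjoint summand equals $((-1)^\sigma)^2\, e^{\widetilde m} \otimes e_{\widetilde m} = e^{\widetilde m} \otimes e_{\widetilde m}$: the sign squares to $+1$ and no net sign survives.

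Finally I would reorganize the sum by the total multi-index $\widetilde m$ of length $a+b$. For a fixed ordered $\widetilde m$, the pairs $(\widetilde\jmath, \widetilde k)$ of disjoint ordered multi-indices with sorted union $\widetilde m$ are in bijection with the choices of which $a$ of the $a+b$ entries of $\widetilde m$ form $\widetilde\jmath$ (the remaining $b$ forming $\widetilde k$), of which there are exactly $\binom{a+b}{a}$. Therefore
\[
\mathds{1}_a \wedge \mathds{1}_b = \sum_{\widetilde m} \binom{a+b}{a}\, e^{\widetilde m} \otimes e_{\widetilde m} = \binom{a+b}{a}\, \mathds{1}_{a+b},
\]
which is the claim. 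Everything beyond the sign cancellation is the standard count of ordered splittings of an $(a+b)$-element set, so I expect no genuine obstacle once the coincidence of the two reordering signs is recorded.
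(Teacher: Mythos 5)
Your proposal is correct and follows essentially the same route as the paper: a direct basis computation in which the concatenated multi-indices are re-expressed via ordered ones, the two reordering signs (upper and lower) cancel, and the $\binom{a+b}{a}$ ordered splittings of each length-$(a+b)$ multi-index produce the binomial coefficient. The paper packages the sign cancellation into the product of two generalized Kronecker deltas, $\delta^{\widetilde I \widetilde J}_{\widetilde \mu}\delta_{\widetilde I \widetilde J}^{\widetilde \nu}$, where you spell it out explicitly, but the argument is the same.
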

\begin{proof} Let $\{e_i\}$ denote a basis of $V$ so that
\[
\mathds{1}_a = e^{i_1, \dots, i_a} \otimes e_{i_1, \dots, i_a}\,, \quad\mathds{1}_b = e^{i_1, \dots, i_b} \otimes e_{i_1, \dots, i_b}\,.
\]
Hence, 
\begin{align*}
    \mathds{1}_a \wedge \mathds{1}_b &= e^{i_1, \dots, i_a, j_1, \dots, j_b} \otimes e_{i_1, \dots, i_a, j_1, \dots, j_b} = \delta ^{\widetilde I \widetilde J}_{\widetilde \mu}\delta_{\widetilde I\widetilde J}^{\widetilde \nu}e^{\widetilde \mu} \otimes e_{\widetilde \nu}\\
    &= \binom{a+b}{a} \mathds{1}_{a+b}\,,
\end{align*}
which ends the proof.
\end{proof}

\begin{proposition} For $(M, S^a, \sharp_a)$ a regular graded Dirac manifold we have the following:
\begin{enumerate}[\rm (i)]
\label{prop:relation_among_extensions}
    \item 
    \label{prop:unique_extension_j_to_j+1}
    Let
    $
    \widetilde \sharp_j : S^a[j] \longrightarrow \bigwedge^{a - j}M \otimes \bigvee_{n+1-j} M/K_{n+1-j}
    $ be a vector bundle map satisfying $\iota_{\widetilde \sharp_j(\alpha)} \beta = \iota_{\widetilde \sharp_1 (\alpha)} \beta$ for every $\beta \in S^{n}$. Then,
    there exists an unique extension $\widetilde \sharp_i : S^a[j] \longrightarrow \bigwedge^{a-i} M \otimes \bigvee_{n+1-i} M \big / K_{n+1-i}$, for $1 \leq i < j$, satisfying the following two properties
    \begin{enumerate}[\rm (a)]
        \item For every $\gamma \in S^{n+1-i}$ and $\alpha \in S^{a}[j]$, we have
        \[
    \iota_{\widetilde \sharp_i(\alpha)} \gamma = \binom{j-1 }{j - i}\iota_{\widetilde \sharp_j(\alpha)} \gamma\,;
    \]  
        \item For every $\beta \in S^n$, and $\alpha \in S^a[j]$, we must have
        $\iota_{\widetilde \sharp_i(\alpha)} \beta = \iota_{\widetilde \sharp_1(\alpha)} \beta$\,.
    \end{enumerate}

    \item \label{prop:relation_among_extensions_i}$S^a[{j+1}] \subseteq S^a[j],$ for $1 \leq j < n$ and $1 \leq a$. 
    \item \label{prop:relation_among_extenions_ii} For $X \in \mathfrak{X}(M)$, we have $\iota_X : S^a[j] \longrightarrow S^{a-1}[j-1]$, for every $a > n$ and $2 \leq j \leq n$. 
\end{enumerate}
\end{proposition}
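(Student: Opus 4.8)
For part (i), the plan is to build $\widetilde\sharp_i$ by the same device that produced the first-step extension in Remark \ref{remark:formula_for_first_extension}: wedging with the identity tensor. Concretely I would set $\widetilde\sharp_i:=C(i,j)\,\widetilde\sharp_j\wedge\mathds 1_{j-i}$, the exterior product taken component-wise on $\bigwedge^{a-j}M\otimes\bigvee_{n+1-j}M$, which lands in $\bigwedge^{a-i}M\otimes\bigvee_{n+1-i}M$ because both the form- and the multivector-degree rise by exactly $j-i$. First I would check that this descends to the quotient by $K_{n+1-i}$: wedging with $\mathds 1_{j-i}$ carries $K_{n+1-j}$ into $K_{n+1-i}$, since if $w$ annihilates $S^{n+1-j}$ then for $\gamma\in S^{n+1-i}$ and any $(j-i)$-vector $V$ one has $\iota_V(\iota_w\gamma)=\pm\iota_w(\iota_V\gamma)=0$ (using $S^{n+1-j}=\iota_{\bigvee_{j-i}TM}S^{n+1-i}$), whence $\iota_w\gamma=0$ and $w\wedge e_{\widetilde K}$ annihilates $S^{n+1-i}$. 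To read off the coefficients I would contract against a form $\omega$ of degree $c$ exactly as in the proof of Proposition \ref{prop:Extension_to_Hamiltonian}: the identity $\iota_{u\wedge(\theta\otimes v)}\gamma=\iota_{\theta\otimes v}\iota_u\gamma$ reduces $\iota_{\widetilde\sharp_j(\alpha)\wedge\mathds 1_{j-i}}\omega$ to $\iota_{\mathds 1_{j-i}}(\iota_{\widetilde\sharp_j(\alpha)}\omega)$, and Lemma \ref{lemma2} evaluates this as $\binom{c-(n+1-j)}{j-i}\,\iota_{\widetilde\sharp_j(\alpha)}\omega$. Specializing $c=n+1-i$ gives the factor $\binom{j-i}{j-i}=1$ and $c=n$ gives $\binom{j-1}{j-i}$; together with the hypothesis $\iota_{\widetilde\sharp_j(\alpha)}\beta=\iota_{\widetilde\sharp_1(\alpha)}\beta$ on $S^n$ this fixes the normalization $C(i,j)$ so that (a) and (b) hold.

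For uniqueness I would use that $K_{n+1-i}$ is by definition the annihilator of $S^{n+1-i}$, so the pairing $\bigvee_{n+1-i}M/K_{n+1-i}\times S^{n+1-i}\to\mathbb R$ is non-degenerate in the first slot; writing an element with linearly independent form-parts shows it is determined by its contractions against $S^{n+1-i}$. Since (a) prescribes $\iota_{\widetilde\sharp_i(\alpha)}\gamma$ for all $\gamma\in S^{n+1-i}$, it pins $\widetilde\sharp_i$ down, and (b) is then the companion identity from the $c=n$ specialization. Part (ii) is then an immediate corollary: given $\Theta\in S^a[j+1]$ there is $\widetilde\sharp_{j+1}(\Theta)$ agreeing with $\widetilde\sharp_1(\Theta)$ against $S^n$, so the hypothesis of (i) holds with $j$ replaced by $j+1$; applying (i) with $i=j$ produces $\widetilde\sharp_j(\Theta)$ of the correct bidegree satisfying property (b), i.e. $\iota_{\widetilde\sharp_j(\Theta)}\beta=\iota_{\widetilde\sharp_1(\Theta)}\beta$ for all $\beta\in S^n$, which is exactly the condition defining $S^a[j]$, so $\Theta\in S^a[j]$.

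For part (iii) I would first observe $\iota_X\Theta\in(S^1)^{\wedge(a-1)}$, since $\iota_X$ is an antiderivation and $\iota_X\theta$ is a function for $\theta\in S^1$. By Proposition \ref{prop:sharp_1_contraction}, $\widetilde\sharp_1(\iota_X\Theta)=-\iota_X\widetilde\sharp_1(\Theta)$, so membership of $\iota_X\Theta$ in $S^{a-1}[j-1]$ amounts to showing that $\iota_X\widetilde\sharp_1(\Theta)$ factors through multivector order $n+2-j$, i.e. can be written as $\Psi\wedge\mathds 1_{j-2}$ with $\Psi\in\bigwedge^{a-j}M\otimes\bigvee_{n+2-j}M$. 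Because $\Theta\in S^a[j]$ we have $\widetilde\sharp_1(\Theta)=\widetilde\sharp_j(\Theta)\wedge\mathds 1_{j-1}$ (the $i=1$ case of (i)); applying the antiderivation $\iota_X$ together with the identity $\iota_X\mathds 1_{j-1}=X\wedge\mathds 1_{j-2}$ (which I would verify componentwise, as in the $j-1=2$ model case) gives
\[
\iota_X\widetilde\sharp_1(\Theta)=\bigl(\iota_X\widetilde\sharp_j(\Theta)\bigr)\wedge\mathds 1_{j-1}+(-1)^{a-j}\bigl(\widetilde\sharp_j(\Theta)\wedge X\bigr)\wedge\mathds 1_{j-2},
\]
where $\widetilde\sharp_j(\Theta)\wedge X$ already has multivector order $n+2-j$ and form-degree $a-j$.

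The one genuinely new ingredient is the treatment of the first summand: here $\iota_X\widetilde\sharp_j(\Theta)$ has multivector order $n+1-j$, too low, but Lemma \ref{lemma:wedge_of_identities} lets me split $\mathds 1_{j-1}=\tfrac1{j-1}\mathds 1_1\wedge\mathds 1_{j-2}$, so that $(\iota_X\widetilde\sharp_j(\Theta))\wedge\mathds 1_{j-1}=\tfrac1{j-1}(\iota_X\widetilde\sharp_j(\Theta)\wedge\mathds 1_1)\wedge\mathds 1_{j-2}$, and now $\iota_X\widetilde\sharp_j(\Theta)\wedge\mathds 1_1$ has multivector order $n+2-j$ and form-degree $a-j$ as well. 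Collecting both contributions exhibits $\iota_X\widetilde\sharp_1(\Theta)=\Psi\wedge\mathds 1_{j-2}$ with $\Psi$ homogeneous of the required bidegree, so contracting against $S^n$ (factor $\binom{j-2}{j-2}=1$) shows that $-\Psi$ serves as $\widetilde\sharp_{j-1}(\iota_X\Theta)$, proving $\iota_X\Theta\in S^{a-1}[j-1]$; the hypotheses $a>n$ and $2\le j\le n$ guarantee that all degrees in play ($a-1\ge n$, $j-1\ge1$, $a-j\ge0$) are legitimate. I expect the main obstacle to be precisely this multivector-order bookkeeping — keeping the two Leibniz terms, their signs and their combinatorial constants straight — and the key realization that factoring through a \emph{lower} multivector order can always be upgraded to order $n+2-j$ by peeling off a single $\mathds 1_1$ via Lemma \ref{lemma:wedge_of_identities}.
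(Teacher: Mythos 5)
Your proposal is correct and follows essentially the same route as the paper: part (i) is the wedge-with-$\mathds{1}_{j-i}$ construction normalized via Lemma \ref{lemma2} (equivalently, dualizing the contraction map against $S^{n+1-i}$), part (ii) is the immediate corollary, and part (iii) produces an explicit witness for $\widetilde\sharp_{j-1}(\iota_X\Theta)$ from the antiderivation property of $\iota_X$ and checks its contraction against $S^n$. Your witness in (iii) differs from the paper's $-\iota_X\widetilde\sharp_{j-1}(\Theta)-(-1)^{a-j}\widetilde\sharp_j(\Theta)\wedge X$ by an inessential multiple of $\widetilde\sharp_j(\Theta)\wedge X$, which is harmless since only the contraction against $S^n$ is prescribed; note also that your computation (consistently with the paper's own proof) forces the normalization $C(i,j)=\binom{j-1}{j-i}^{-1}$, so property (a) actually holds with the inverse of the binomial coefficient displayed in the statement.
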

\begin{proof}\begin{enumerate}[\rm (i)]
    \item Let $\alpha \in S^a[j]$. Then, we can define a map $S^{n+1-i} \longrightarrow \bigwedge^{a-i} M$ by $\gamma \mapsto \binom{j-1}{j-i}^{-1} \iota_{\widetilde \sharp_{j}(\alpha)} \gamma$. Dualizing this map we obtain an element $\widetilde \sharp_i(\alpha) \in \bigwedge^{a-i} M \otimes \bigvee_{n+1-i} M \big / K_{n+1-i}$. This clearly satisfies the first property. In order to check the second, let us recall that the compatibility conditions reads as 
    \[\widetilde \sharp_1 = \widetilde \sharp_i  \wedge \mathds{1}_{i-1}\,,\]
    which holds in this case since we have $\widetilde \sharp_i = \binom{j-1}{j-i}^{-1}\widetilde \sharp_j \wedge \mathds{1}_{j-i}$ (by Lemma \ref{lemma:wedge_of_identities}), $\mathds{1}_{j-i} \wedge \mathds{1}_{i-1} = \binom{j-1}{j-i} \mathds{1}_{j-1}$, and $\widetilde \sharp_1 = \widetilde \sharp_j \wedge \mathds{1}_{j-1}$.
    \item The construction applied in item \ref{prop:unique_extension_j_to_j+1} clearly proves the inclusion.
    \item Let $\alpha \in S^a[j]$, so that there exists $\widetilde \sharp_j(\alpha) \in \bigwedge^{a-j} M \otimes \bigvee_{n+1 - j} M \big / K_{n+1-j}$ such that $\iota_{\widetilde \sharp_j(\alpha)} \beta = \iota_{\widetilde \sharp_1(\alpha)} \beta$, for every $\beta \in S^n$. Define $\widetilde \sharp_{j-1}(\iota_X \alpha)  := -\iota_X \widetilde \sharp_{j-1}(\alpha) - (-1)^{a-j}\widetilde \sharp_j(\alpha) \wedge X$, where $\widetilde \sharp_{j-1}(\alpha)$ denotes the unique extension of $\widetilde \sharp_j(\alpha)$ from item \ref{prop:unique_extension_j_to_j+1}. Then, for arbitrary $\beta \in S^n$ we have
    \begin{align*}
        \iota_{\widetilde \sharp_j(\iota_x \alpha)} \beta &= -\iota_{\iota_X \widetilde \sharp_j(\alpha)} \beta - (-1)^{a-j} \iota_{\widetilde \sharp_j(\alpha) \wedge X} \beta = -\iota_{\iota_X \widetilde \sharp_j(\alpha)} \beta - (-1)^{n+1+a} \iota_{\widetilde \sharp_j(\alpha)} \iota_X \beta \\
        &= -\iota_X \left( \iota_{\widetilde \sharp_j(\alpha)} \beta\right) + (-1)^{n+1+a} \iota_{\widetilde \sharp_{j}(\alpha)} \iota_X \beta - (-1)^{n+1+a} \iota_{\widetilde \sharp_j(\alpha)} \iota_X \beta\\
        &= -\iota_X\left( \iota_{\widetilde \sharp_1(\alpha)} \beta\right) = -\iota_{\iota_X \widetilde \sharp_1(\alpha)} \beta \\
        &= \iota_{\widetilde \sharp_1(\iota_X \alpha)} \beta\,,
    \end{align*}
    where in the last equality we have used Proposition \ref{prop:sharp_1_contraction}.
    
\end{enumerate}
\end{proof}

Let $m = \dim M $. Then, Proposition \ref{prop:relation_among_extensions} gives the following picture of the subbundles $S^a[j]$ and the relations between them: 
\[\begin{tikzcd}[ column sep={6.3em,between origins},
  row sep={3em,between origins},]
	{S^{m}[n]} & {S^{m}[n-1]} & \cdots & {S^{m}[j]} & {S^{m}[j-1]} & \cdots & {S^{m}[1]} \\
	\vdots & \vdots & \ddots & \vdots & \vdots & \ddots & \vdots \\
	{S^a[n]} & {S^a[n-1]} & \cdots & {S^a[j]} & {S^{a}[j-1]} & \cdots & {S^a[1]} \\
	{S^{a-1}[n]} & {S^{a-1}[n-1]} & \cdots & {S^{a-1}[j]} & {S^{a-1}[j-1]} & \cdots & {S^{a-1}[1]} \\
	\vdots & \vdots & \ddots & \vdots & \vdots & \ddots & \vdots \\
	{S^{n+1}[n]} & {S^{n+1}[n-1]} & \cdots & {S^{n+1}[j]} & {S^{n+1}[j-1]} && {S^{n+1}[1]}
	\arrow[hook, from=1-1, to=1-2]
	\arrow[hook, from=1-2, to=1-3]
	\arrow[hook, from=1-3, to=1-4]
	\arrow[hook, from=1-4, to=1-5]
	\arrow[hook, from=1-5, to=1-6]
	\arrow[hook, from=1-6, to=1-7]
	\arrow["{\iota_{\T M}}"{pos=0.6}, from=2-1, to=3-2]
	\arrow["{\iota_{\T M}}"{pos=0.6}, from=2-3, to=3-4]
	\arrow["{\iota_{\T M}}"{pos=0.6}, from=2-4, to=3-5]
	\arrow["{\iota_{\T M}}"', from=2-6, to=3-7]
	\arrow[hook, from=3-1, to=3-2]
	\arrow["{\iota_{\T M}}"{pos=0.6}, from=3-1, to=4-2]
	\arrow[hook, from=3-2, to=3-3]
	\arrow[hook, from=3-3, to=3-4]
	\arrow["{\iota_{\T M}}"{pos=0.6}, from=3-3, to=4-4]
	\arrow[hook, from=3-4, to=3-5]
	\arrow["{\iota_{\T M}}"{pos=0.6}, from=3-4, to=4-5]
	\arrow[hook, from=3-5, to=3-6]
	\arrow[hook, from=3-6, to=3-7]
	\arrow["{\iota_{\T M}}"{pos=0.6}, from=3-6, to=4-7]
	\arrow[hook, from=4-1, to=4-2]
	\arrow[hook, from=4-2, to=4-3]
	\arrow[hook, from=4-3, to=4-4]
	\arrow[hook, from=4-4, to=4-5]
	\arrow[hook, from=4-5, to=4-6]
	\arrow[hook, from=4-6, to=4-7]
	\arrow["{\iota_{\T M }}"{pos=0.6}, from=5-1, to=6-2]
	\arrow["{\iota_{\T M}}"{pos=0.7}, from=5-3, to=6-4]
	\arrow["{\iota_{\T M}}"{pos=0.7}, from=5-4, to=6-5]
	\arrow["{\iota_{\T M}}"{pos=0.6}, from=5-6, to=6-7]
	\arrow[hook, from=6-1, to=6-2]
	\arrow[hook, from=6-2, to=6-3]
	\arrow[hook, from=6-3, to=6-4]
	\arrow[hook, from=6-4, to=6-5]
	\arrow[hook, from=6-5, to=6-7]
\end{tikzcd}.\]
We have horizontal arrows representing inclusions from item \ref{prop:relation_among_extensions_i}, and diagonal arrows representing contraction by vector fields from item \ref{prop:relation_among_extenions_ii}.

Therefore, by Proposition \ref{prop:relation_among_extensions}, we can only hope of finding a global extension if we define a final vector bundle map
$
\widetilde \sharp_n: S^a[n] \longrightarrow \bigwedge^{a-n}M \otimes \T M/K_1
$,
for $a\geq n$. Then, we may obtain the associated extensions as
\[
\widetilde \sharp_i = \binom{n-1}{n-i}^{-1} \widetilde \sharp_n \wedge \mathds{1}_{n-i}
\]
and defining the space of Hamiltonian forms as: 
\[
\Omega^a_H(M)[n] := \begin{cases} 
\Omega^a_H(M)\, & a \leq n - 1\,\\
\{\alpha \in \Omega^a(M) \colon \dd \alpha \in S^{a+1}[n]\}
\, & a \geq n\end{cases}\,,
\]
we may define the following bracket
\begin{align*}
   &\Omega^a(M)[n] \otimes \Omega^b(M)[n] \xrightarrow{\{\cdot, \cdot\}_{\widetilde \sharp_n}} \Omega^{a+ n - (n+1)}(M)\,,\\\\
    &\{\Theta_1, \Theta_2\}_{\widetilde \sharp_n} := 
\begin{cases}
    \{\Theta_1,\Theta_2\} & \text{if } \deg \Theta_1, \deg \Theta_2 \leq n-1\\\\
    (-1)^{\deg_H \Theta_2}{\iota_{\widetilde \sharp_n(\dd \Theta_2)} \dd \Theta_1} & \text{if } \deg \Theta_2 \geq n
\end{cases}\,.
\end{align*}

\begin{remark} We will give dynamical interpretation of all possible extensions (compatible with a fibered structure and restricted to certain subfamily) in the following section (see Theorem \ref{thm:Characterization_of_fibered_Extensions}).
\end{remark}

Suppose that a final extension is chosen, $\widetilde \sharp_n\colon S^a[n] \longrightarrow \bigwedge^{a-n} M \otimes \T M \big / K_1$. We now turn into the discussion of how many properties of Definition \ref{def:graded_Poisson_bracket} remain valid for the bracket $\{\cdot, \cdot\}_{\widetilde \sharp_n}$, depending on the particular extension $\widetilde \sharp_n$ chosen. Unlike the bracket induced from $\widetilde \sharp_1$, these extensions do not satisfy all the properties of Definition \ref{def:graded_Poisson_bracket}. Nevertheless, we can give sufficient conditions on the map $\widetilde \sharp_n$ so that these are satisfied:

\begin{theorem}
\label{thm_properties_final_extension}
Let $\widetilde \sharp_n : S^a[n] \longrightarrow \bigwedge^{n-a}M \otimes \T M / K_1$ be an extension of $\sharp_n$ and denote by $\{\cdot, \cdot\}_{\widetilde \sharp_n}$ the induced bracket on $\Omega^a_H(M)[n]$. Then it satisfies the following properties
\begin{enumerate}[\rm (i)]
    \item We have $\{\alpha, \Theta\}_{\widetilde \sharp_n} = -(-1)^{\deg_H \alpha \deg_H \Theta}\{\Theta, \alpha\}_{\widetilde \sharp_n}$ for $\alpha \in \Omega^a_H(M)$ usual Hamiltonian form and $\Theta \in \Omega^{b}_H(M)[n]$. 
    \item If $\widetilde \sharp_n$ satisfies $\widetilde \sharp_{n-1}(\iota_X \alpha)  := -\iota_X \widetilde \sharp_{n-1}(\alpha) - (-1)^{a-n}\widetilde \sharp_n(\alpha) \wedge X$, where $\widetilde \sharp_{n-1}$ is the unique extension given by item \ref{prop:unique_extension_j_to_j+1} of Proposition \ref{prop:relation_among_extensions}, it satisfies invariance by symmetries: For $X \in \mathfrak{X}(M)$, if $\pounds_X \Theta = 0$, then $\{\Theta_1, \iota_X \Theta_2\} = - \iota_X \{\Theta_1, \Theta_2\}$, for every $\Theta_1 \in \Omega^{a}(M)[n]$, with $a \geq 2$.
    \item If $\widetilde \sharp_n \left (\dd\{\alpha, \Theta_1\}\right  ) =  -\pounds_{{ \sharp_n(\dd \alpha)}} \widetilde \sharp_n(\dd \Theta_1)$, then it satisfies a Jacobi identity:
    \[
    \{\alpha, \{\Theta_1, \Theta_2\}\} + \{\Theta_2,\{ \alpha, \Theta_1\}\} + \{\Theta_1, \{\Theta_2, \alpha\}\} = \text{exact form},
    \]
    for $\alpha \in \Omega^{n-1}_H(M)$, $\Theta_1 \in \Omega^{a_1}(M)[n]$ and $\Theta_2 \in \Omega^{a_2}(M)[n]$.
\end{enumerate}
\end{theorem}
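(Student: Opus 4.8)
The three assertions are the $\widetilde\sharp_n$-analogues of the corresponding parts of Theorem \ref{thm:Properties_of_first_extension}, and the plan is to mirror those proofs, the two hypotheses imposed in (ii) and (iii) playing exactly the roles that Proposition \ref{prop:sharp_1_contraction} and Proposition \ref{prop:sharp_of_bracket} played \emph{automatically} for $\widetilde\sharp_1$. The computational engine is a simplification valid because $\alpha$ is a usual Hamiltonian form: since $\widetilde\sharp_n$ agrees with $\widetilde\sharp_1$ on contractions against $S^n$ (by the very definition of $S^a[n]$) and $\widetilde\sharp_1$ is adjoint to $\sharp_n$ (Theorem \ref{thm:extension_of_sharp_1}), one obtains, for every $\Xi\in\Omega^\bullet(M)[n]$ and $\alpha\in\Omega^{n-1}_H(M)$,
\[
\{\alpha,\Xi\}_{\widetilde\sharp_n}=-\iota_{\sharp_n(\dd\alpha)}\dd\Xi=\dd\iota_{\sharp_n(\dd\alpha)}\Xi-\pounds_{\sharp_n(\dd\alpha)}\Xi\,,
\]
the last step being Cartan's formula; the $K_1$-ambiguity of $\sharp_n(\dd\alpha)$ is harmless because $\dd\Xi$ annihilates $K_1$. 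Writing $V:=\sharp_n(\dd\alpha)$, this says $\{\alpha,\cdot\}_{\widetilde\sharp_n}\equiv-\pounds_V$ modulo exact forms.

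For (i), I would establish the skew-symmetry between $\widetilde\sharp_n$ and the original $\sharp$-maps, reading $\{\Theta,\alpha\}$ as the evaluation $(-1)^{\deg_H\alpha}\iota_{\sharp_{a+1}(\dd\alpha)}\dd\Theta$ by the genuine map $\sharp_{a+1}$ (legitimate since $\dd\alpha\in S^{a+1}$). By Remark \ref{remark:structure_at_n_determines_structure} every element of $S^{a+1}$ is built from $S^n$ via $\sharp_{a+1}(\iota_U\gamma)=\sharp_n(\gamma)\wedge U$, which reduces $\iota_{\widetilde\sharp_n(\dd\Theta)}\dd\alpha$ to contractions against $S^n$, where the adjunction of the first paragraph applies; moving the resulting vector contraction back out yields $\iota_{\sharp_{a+1}(\dd\alpha)}\dd\Theta$ up to the sign fixed by the skew-symmetry of the $\sharp$-family (property (i) of Definition \ref{def:graded_Dirac}), and tracking this sign produces the factor $-(-1)^{\deg_H\alpha\,\deg_H\Theta}$. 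The delicate point is the general case $a<n-1$, where the commutation of $\iota_U$ past the multivector-valued contraction must be controlled.

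For (ii), I would copy the proof of Theorem \ref{thm:Properties_of_first_extension}(ii). From $\pounds_X\Theta_2=0$ we have $\dd\iota_X\Theta_2=-\iota_X\dd\Theta_2$, and by Proposition \ref{prop:relation_among_extensions}(iii) contraction sends $S^a[n]$ into $S^{a-1}[n-1]$, so $\{\Theta_1,\iota_X\Theta_2\}$ must be computed with the compatible lower map $\widetilde\sharp_{n-1}$. Substituting the recurrence hypothesis
\[
\widetilde\sharp_{n-1}(\iota_X\dd\Theta_2)=-\iota_X\widetilde\sharp_{n-1}(\dd\Theta_2)-(-1)^{\deg(\dd\Theta_2)-n}\,\widetilde\sharp_n(\dd\Theta_2)\wedge X
\]
produces two contributions: the first reproduces $-\iota_X$ of the leading term, while the $\wedge X$ term reproduces precisely the summand appearing when $\iota_X$ is distributed, by the graded Leibniz rule, across the multivector-valued contraction $\iota_{\widetilde\sharp_n(\dd\Theta_2)}\dd\Theta_1$. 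Matching these, together with the binomial constants relating $\widetilde\sharp_{n-1}$ and $\widetilde\sharp_n$ from Proposition \ref{prop:relation_among_extensions}(i), gives $\{\Theta_1,\iota_X\Theta_2\}=-\iota_X\{\Theta_1,\Theta_2\}$; bookkeeping these constants and signs is the only real work.

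For (iii), I would exploit that the hypothesis $\widetilde\sharp_n(\dd\{\alpha,\Theta\})=-\pounds_V\widetilde\sharp_n(\dd\Theta)$ is, in view of $\dd\{\alpha,\Theta\}=-\pounds_V\dd\Theta$, exactly the naturality $\widetilde\sharp_n(\pounds_V\dd\Theta)=\pounds_V\widetilde\sharp_n(\dd\Theta)$. Combined with $\pounds_V\dd=\dd\pounds_V$ this makes the extended bracket natural under $\pounds_V$, i.e. $\pounds_V\{\Theta_1,\Theta_2\}=\{\pounds_V\Theta_1,\Theta_2\}+\{\Theta_1,\pounds_V\Theta_2\}$. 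Using $\{\alpha,\cdot\}\equiv-\pounds_V$ modulo exact forms, and that the bracket is insensitive to exact forms in either slot, the three Jacobiator terms collapse: the two copies of $\{\Theta_1,\pounds_V\Theta_2\}$ cancel, and one is left, modulo the exact form $\dd\iota_V\{\Theta_1,\Theta_2\}$, with $-\big(\{\pounds_V\Theta_1,\Theta_2\}+\{\Theta_2,\pounds_V\Theta_1\}\big)$. The main obstacle is therefore reduced to showing that this graded-symmetric remainder is exact — equivalently, that the extended bracket is graded-skew-symmetric up to an exact term on high-degree forms — which I expect to prove by extracting a total $\dd$ exactly as in the proof of Theorem \ref{thm:Properties_of_first_extension}(iii), using the integrability of the graded Dirac structure and the skew-symmetry of the $\sharp$-maps.
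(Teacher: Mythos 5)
Your part (ii) matches the paper's proof, which is literally ``follows from a similar computation to that done in item (iii) of Proposition \ref{prop:relation_among_extensions}'' --- the computation you describe. Part (i), however, is over-engineered relative to the paper and, as a substantive argument, does not go through: in the paper the bracket $\{\Theta,\alpha\}_{\widetilde\sharp_n}$ with $\deg\Theta\ge n$ and $\alpha$ a usual Hamiltonian form is \emph{defined} by extending the formula of Eq.~\eqref{eq:bracket_using_extension_sharp_j} by graded skew-symmetry (see the sentence introducing the diagram in Subsection \ref{subsection:last_step}), so the paper's proof of (i) is the one-liner ``by definition''. Your attempt to prove it as a genuine identity, by reading $\{\Theta,\alpha\}$ as $(-1)^{\deg_H\alpha}\iota_{\sharp_{a+1}(\dd\alpha)}\dd\Theta$ and matching it against $\iota_{\widetilde\sharp_n(\dd\Theta)}\dd\alpha$, cannot work for $a<n-1$: the only constraint built into $S^b[n]$ is that $\widetilde\sharp_n$ reproduce $\widetilde\sharp_1$ on contractions against elements of $S^n$, and there is no adjunction of the \emph{chosen} extension $\widetilde\sharp_n(\dd\Theta)$ against $\sharp_{a+1}(\dd\alpha)$ for lower $a$. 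The ``delicate point'' you flag is not an obstacle to be overcome; it is simply not what the statement asserts.

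The genuine gap is the last step of (iii). Your reduction is sound: with $V=\sharp_n(\dd\alpha)$ one has $\{\alpha,\cdot\}=-\iota_V\dd(\cdot)=-\pounds_V+\dd\,\iota_V$, the hypothesis is exactly $\pounds_V$-naturality of $\widetilde\sharp_n\circ\dd$, and the Jacobiator collapses, modulo exact terms, to a graded-symmetric combination $\{\Xi,\Theta_2\}_{\widetilde\sharp_n}\pm\{\Theta_2,\Xi\}_{\widetilde\sharp_n}$ with $\Xi=\{\alpha,\Theta_1\}$, i.e.\ to the graded skew-symmetry up to exact terms of the extended bracket on a pair of forms \emph{both} of degree $\ge n$. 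But your justification of that remaining claim --- ``the integrability of the graded Dirac structure and the skew-symmetry of the $\sharp$-maps'' --- does not apply here: those properties constrain only the original maps $\sharp_a$ on $S^a$ for $a\le n$, whereas both contractions in $\{\Xi,\Theta_2\}$ and $\{\Theta_2,\Xi\}$ are performed with the chosen extension $\widetilde\sharp_n$ on $S^a[n]$, $a>n$, and nothing in its defining compatibility with $\widetilde\sharp_1$ forces $\iota_{\widetilde\sharp_n(\dd\Theta_2)}\dd\Xi$ and $\iota_{\widetilde\sharp_n(\dd\Xi)}\dd\Theta_2$ to agree up to sign and an exact term. This is precisely where the present situation differs from the model computation in Theorem \ref{thm:Properties_of_first_extension}(iii), where every bracket occurring has a genuine $(n-1)$-form in one slot and only the definitional skew-symmetry is ever invoked; here the terms $\{\Theta_1,\Theta_2\}$, $\{\Theta_2,\{\alpha,\Theta_1\}\}$ and $\{\Theta_1,\{\Theta_2,\alpha\}\}$ all pair two high-degree forms. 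To close the argument you must either expand all three outer brackets directly via $\widetilde\sharp_n(\dd\{\alpha,\Theta_i\})=-\pounds_V\widetilde\sharp_n(\dd\Theta_i)$ and match terms --- which your sketch does not carry out and which still terminates in the same unproven ordering identity --- or impose on $\widetilde\sharp_n$ a skew-symmetry condition analogous to Definition \ref{def:graded_Dirac}(i).
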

\begin{proof}
\begin{enumerate}[\rm (i)]
    \item Graded skew-symmetry is given by definition.
    \item Follows from a similar computation to that done in item (iii) of Proposition \ref{prop:relation_among_extensions}.
    \item Again, follows from a similar computation from item (iii) of Theorem \ref{thm:Properties_of_first_extension}.
\end{enumerate}
\end{proof}

\section{Dynamics on fibered graded Dirac manifolds}
\label{section:dynamics}

In this section we apply the constructions of the previous section to define dynamics and study the evolution of arbitrary forms.

Throughout this section we fix an arbitrary fibered graded Dirac manifold $\tau\colon M \longrightarrow X$ (see Definition \ref{def:fibered_graded_Dirac}).

\subsection{Fibered extensions, Hamiltonians and Hamilton--De Donder--Weyl equations}
\label{subsection:Hamilton--DeDonder--Weyl_eq}

Let us first give a description of dynamics on graded Dirac manifolds, defining the equations of classical field theories in terms of Poisson brackets. We first need to define what we understand by a Hamiltonian, the object that will define the dynamics.

\begin{Def}[Hamiltonian]
\label{def:Hamiltonian}
A \textbf{Hamiltonian} is an $n$-form $\mathcal{H} \in \Omega^{n}_H(M)[n]$ such that
$
-\widetilde \sharp_1(\dd \mathcal{H}) + \mathds{1}_n
$
is a semi-basic form that satisfies $\iota_{-\widetilde \sharp_1(\dd \mathcal{H}) + \mathds{1}_n} \omega = \omega$, for every semi-basic $n$-form $\omega$.
\end{Def}

\begin{remark} Notice that the notion of semi-basic (as well as that of $r$-horizontal) can be translated to forms taking values in vector bundles.
\end{remark}

\begin{remark}
\label{remark:preserve_semi-basic-forms}
The last condition $\iota_{-\widetilde \sharp_1 (\dd \mathcal{H}) +\mathds{1}_n} \omega = \omega$ is equivalent to $\iota_{\widetilde \sharp_1(\dd \mathcal{H})} \omega = 0$, for every semi-basic $n$-form $\omega$.
\end{remark}

\begin{remark} Notice that $S^n$ contains all semi-basic $n$-forms over $X$, so that $K_n$ is contained in the space of $1$-vertical multivectors, namely those multivectors taking values in $\left \langle  \ker \dd \tau \wedge \bigvee_{p-1} M\right \rangle$. Hence, it still makes sense to consider vertical valued sections of $\bigvee_n M / K_n$.
\end{remark}

\begin{example}
\label{ex:multisymplectic_Hamiltonian}
Let $\pi \colon Y \longrightarrow X$ be a fibered manifold. Let $\bigwedge^n_2 Y$ be endowed with its canonical multisymplectic structure and reduce it, following Example \ref{example:multisymplectic_field_theory}, to a fibered graded Dirac (in fact, Poisson) structure on $\tau \colon \bigwedge^n_2 Y \big / \bigwedge^n_1 Y \longrightarrow X$. Then, given a section $h\colon \bigwedge^n_2 Y \big / \bigwedge^n_1 Y  \longrightarrow \bigwedge^n_2 Y$, we have that $\mathcal{H} := h^\ast \Theta$ is a Hamiltonian for the corresponding graded Poisson structure on $ M= \bigwedge^n_2 Y \big / \bigwedge^n_1 Y$. Indeed, in canonical coordinates, we have $S^n = \langle \dd^n x, \dd y^i \wedge \dd ^{n-1}x_\mu, \dd p^\mu_i \wedge \dd^{n-1}x_\mu \rangle$ and
\[
\sharp_n(\dd^n x) = 0\,, \quad \sharp_n(\dd y^i \wedge \dd^{n-1}x_\mu) = -\pdv{p^\mu_i}\,, \quad \sharp_n(\dd p^\mu_i \wedge \dd^{n-1}x_\mu) = \pdv{y^i}\,.
\]
An immediate computation shows $\mathcal{H}= H \dd^n x - p^\mu_i \dd y^i \wedge \dd^{n-1}x_\mu$ so that 
\[
\dd \mathcal{H} =  \pdv{H}{y^i} \dd y^i \wedge \dd^n x + \pdv{H}{p^\mu_i}\dd p^\mu_i\wedge \dd ^n x - \dd p^\mu_i \wedge \dd y^i \wedge \dd^{n-1}x_\mu\,.
\]
We have (for complete details regarding this computations we refer to Section \ref{section:Applications}) 
\begin{align*}
   \widetilde \sharp_1 (\dd \mathcal{H}) =& \frac{1}{n} \pdv{H}{y^i}\dd^n x \otimes \pdv{p^\mu_i} \wedge \pdv{^{n-1}}{^{n-1} x_\mu} -\pdv{H}{p^\mu_i} \dd^n x \otimes \pdv{y^i}\wedge \pdv{^{n-1}}{^{n-1} x_\mu} + \\& \dd y^i \wedge \dd^{n-1} x_\mu \otimes\pdv{y^i}\wedge \pdv{^{n-1}}{^{n-1} x_\mu} + \frac{1}{n} \dd p^\mu_i \wedge \dd^{n-1}x_\mu \otimes   \pdv{p^\mu_i} \wedge \pdv{^{n-1}}{^{n-1} x_\mu}\,.
\end{align*}
Therefore, $-\widetilde \sharp_1 (\dd \mathcal{H}) + \mathds{1}_n$ takes the expression
\[
-\frac{1}{n} \pdv{H}{y^i}\dd^n x \otimes \pdv{p^\mu_i} \wedge \pdv{^{n-1}}{^{n-1} x_\mu} +\pdv{H}{p^\mu_i} \dd^n x \otimes \pdv{y^i}\wedge \pdv{^{n-1}}{^{n-1} x_\mu} +\dd^n x \otimes \pdv{^n}{^n x}\,,
\]
which is clearly semi-basic and satisfies $\iota_{-\widetilde \sharp_1 (\dd \mathcal{H}) + \mathds{1}_n} \dd^n x = \dd^n x$, proving that it is a Hamiltonian in the sense of Definition \ref{def:Hamiltonian}.
\end{example}

Now, since $-\widetilde \sharp_1(\dd \mathcal{H}) + \mathds{1}_n$ is semi-basic over $X$, this allows us to define dynamics as follows:

\begin{Def}[Hamilton-De Donder-Weyl equations for $\mathcal{H}$] We define the \textbf{Hamilton--De Donder--Weyl equations} for a section $\psi: X \longrightarrow M$ as 
\[
\psi^\ast(\dd \alpha) = \left(\dd \alpha + \{\alpha, \mathcal{H}\} \right) \circ \psi = \left(\iota_{- \widetilde \sharp_1(\dd \mathcal{H} ) + \mathds{1}_n} \dd \alpha \right) \circ \psi\,,
\]
for every $\alpha \in \Omega^{n - 1}_H(M)$. 
\end{Def}

\begin{remark} Notice that the condition $\iota_{-\widetilde \sharp_1(\dd \mathcal{H}) + \mathds{1}_n} \omega = \omega$ guarantees that for basic $(n-1)$-forms $\omega$, we obtain $\dd \omega + \{\omega, \mathcal{H}\} = \dd\omega$, so that the evolution of basic forms is always determined, as it should be.
\end{remark}
\begin{remark} Notice the immediate similarity with the equation
$\dv{f}{t} = \pdv{f}{t} + \{f, H\}$
from classical mechanics.
\end{remark}

\begin{example}
    Let us write the equations for the Hamiltonian of Example \ref{ex:multisymplectic_Hamiltonian}. Notice that we have 
\begin{align*}
     \{\alpha, \mathcal{H}\} = (-1)^{\deg_H \mathcal{H}} \iota_{\widetilde \sharp_1(\dd \mathcal{H})} \dd \alpha= -\iota_{\widetilde \sharp_1( \dd \mathcal{H})} \dd \alpha\,.
\end{align*}
Taking $\alpha$ to be $y^i \dd^{n-1}x_\mu$ and $p^\mu_i \dd^{n-1}x_\mu$ we obtain
\begin{align*}
    \{y^i \dd^{n-1}x_\mu, \mathcal{H}\} = \pdv{H}{p^\mu_i} \dd^n x - \dd y^i \wedge \dd^{n-1}x_\mu\,, \quad
    \{ p^\mu_i \dd^{n-1}x_\mu, \mathcal{H}\} = - \pdv{H}{y^i} \dd^n x - \dd p^\mu_i \wedge \dd^{n-1}x_\mu \,,
\end{align*}
so that the equations determined by $\mathcal{H}$ for a section $\psi \colon X \longrightarrow M$ given locally by $y^i(x)$ and $p^\mu_i(x)$ read 
\begin{align*}
    \pdv{y^i}{x^\mu} \dd^n x &=   \psi^\ast \left( \dd y^i \wedge \dd^{n-1}x_\mu \right) = \left( \dd y^i \wedge \dd^{n-1}x_\mu + \{ y^i \dd^{n-1}x_\mu, \mathcal{}H\}\right) \circ \psi = \pdv{H}{p^\mu_i} \dd^n x\\
    \pdv{p^\mu_i}{x^\mu} \dd^n x &=   \psi^\ast \left( \dd p^\mu_i \wedge \dd^{n-1}x_\mu \right) = \left( \dd p^\mu_i \wedge \dd^{n-1}x_\mu + \{ p^\mu_i\dd^{n-1}x_\mu, \mathcal{H}\}\right) \circ \psi =- \pdv{H}{y^i} \dd^n x\,,
\end{align*}
recovering the well known Hamilton--De Donder--Weyl equations of classical field theories.
\end{example}

In many cases, it will be interesting to have a linearized version of a solution. This corresponds to the following:

\begin{Def}[Linear solution] A \textbf{linear solution} is a connection (given by a horizontal lift) on $\tau\colon M \longrightarrow X$, $h$, such that \[
h^\ast(\dd \alpha) =  \dd \alpha + \{ \alpha, \mathcal{H}\}_{\widetilde \sharp _1},
\]
for every $\alpha \in \Omega^{n-1}_H(M).$
\end{Def}

Regarding the extensions studied in Subsection \ref{subsection:last_step}, we have the following:

\begin{proposition}
\label{prop:extensions_preverve_Hamiltonian}
Let $\mathcal{H}$ be a Hamiltonian on a regular graded Dirac manifold of order $n$, $(M, S^a, \sharp_a).$ Then, given any extension
\[
\widetilde \sharp_n: S^{n+1}[n] \longrightarrow \bigwedge^{a- n} M \otimes \T M/K_1,
\]
such that $\widetilde \sharp_n(\dd \mathcal{H})$ takes vertical values, we have that $-\widetilde \sharp_n(\dd \mathcal{H}) + \mathds{1}_1$ is semi-basic and that $\iota_{-\widetilde \sharp_n(\dd \mathcal{H}) + \mathds{1}_1} \omega = \omega$, for every semi-basic $1$-form $\omega$.
\end{proposition}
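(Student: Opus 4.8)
The plan is to verify the two asserted properties separately, each of which I would recast as a statement about how $\widetilde \sharp_n(\dd \mathcal{H})$, regarded as a $\T M/K_1$-valued $1$-form (equivalently, an endomorphism of $\T M$ modulo $K_1$), acts on vertical directions. Writing $\widetilde \sharp_n(\dd \mathcal{H}) = \theta^I \otimes V_I$ with the $V_I$ vertical (this is precisely the hypothesis), the contraction property is immediate: for any semi-basic $1$-form $\omega$ one has $\iota_{\widetilde \sharp_n(\dd \mathcal{H})} \omega = \theta^I\, \omega(V_I) = 0$, since $\omega$ annihilates vertical vectors, whence $\iota_{-\widetilde \sharp_n(\dd \mathcal{H}) + \mathds{1}_1} \omega = \iota_{\mathds{1}_1}\omega = \omega$ by Lemma \ref{lemma2}. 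This is the exact analogue, one degree lower, of Remark \ref{remark:preserve_semi-basic-forms}, and uses nothing beyond the vertical-values hypothesis.

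For semi-basicness of $-\widetilde \sharp_n(\dd \mathcal{H}) + \mathds{1}_1$, I would first observe that, since $\iota_v \mathds{1}_1 = v$, the claim is equivalent to $\iota_v \widetilde \sharp_n(\dd \mathcal{H}) \equiv v \pmod{K_1}$ for every vertical vector $v$; that is, $\widetilde \sharp_n(\dd \mathcal{H})$ must restrict to the identity on vertical directions modulo $K_1$. To establish this I would invoke the compatibility relation between the extensions, $\widetilde \sharp_1(\dd \mathcal{H}) = \widetilde \sharp_n(\dd \mathcal{H}) \wedge \mathds{1}_{n-1}$ (the case $i=1$ of Proposition \ref{prop:relation_among_extensions}), together with the defining property of the Hamiltonian, namely that $-\widetilde \sharp_1(\dd \mathcal{H}) + \mathds{1}_n$ is semi-basic. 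Contracting the latter with a vertical $v$ and using the elementary identity $\iota_v \mathds{1}_k = v \wedge \mathds{1}_{k-1}$ (proved as in Lemma \ref{lemma1}) gives $\iota_v \widetilde \sharp_1(\dd \mathcal{H}) = v \wedge \mathds{1}_{n-1}$. On the other hand, contracting the compatibility relation and using that $\iota_v$ is an antiderivation on the form factor yields $\iota_v \widetilde \sharp_1(\dd \mathcal{H}) = \big(\iota_v \widetilde \sharp_n(\dd \mathcal{H})\big) \wedge \mathds{1}_{n-1} - \widetilde \sharp_n(\dd \mathcal{H}) \wedge \iota_v \mathds{1}_{n-1}$.

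The crux is the cross term $\widetilde \sharp_n(\dd \mathcal{H}) \wedge \iota_v \mathds{1}_{n-1} = \widetilde \sharp_n(\dd \mathcal{H}) \wedge v \wedge \mathds{1}_{n-2}$: its multivector factor contains $V_I \wedge v$ with both $V_I$ and $v$ vertical, so it is $2$-vertical. Here the fibered hypothesis enters decisively, since $S^n$ consists of $(n-1)$-horizontal forms: every $2$-vertical $n$-vector lies in $K_n = (S^n)^{\circ, n}$, and therefore this cross term vanishes in $\bigwedge^{n-1}M \otimes \bigvee_n M / K_n$. Combining the two expressions then gives $\big(\iota_v \widetilde \sharp_n(\dd \mathcal{H}) - v\big) \wedge \mathds{1}_{n-1} \equiv 0 \pmod{K_n}$, i.e. $\iota_u \mathds{1}_n \in \bigwedge^{n-1}M \otimes K_n$ for $u := \iota_v \widetilde \sharp_n(\dd \mathcal{H}) - v$, which is exactly the condition $u \wedge \bigvee_{n-1}M \subseteq K_n$. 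Finally, the relation $S^1 = \iota_{\bigvee_{n-1}M} S^n$ yields the equivalence $u \in K_1 \iff u \wedge \bigvee_{n-1}M \subseteq K_n$, so the displayed condition forces $u \in K_1$, that is $\iota_v \widetilde \sharp_n(\dd \mathcal{H}) \equiv v \pmod{K_1}$, completing the proof.

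The main obstacle is this cross-term analysis: one must check carefully that the vertical-values hypothesis and the $(n-1)$-horizontality of $S^n$ \emph{together} push the obstructing term into $K_n$, and that the final passage from $u \wedge \bigvee_{n-1}M \subseteq K_n$ to $u \in K_1$ genuinely rests on the relation $S^1 = \iota_{\bigvee_{n-1}M} S^n$ and the nondegeneracy of the annihilator pairing. Everything else is routine bookkeeping with the identities $\mathds{1}_1 \wedge \mathds{1}_{n-1} = n\, \mathds{1}_n$ (Lemma \ref{lemma:wedge_of_identities}) and $\iota_v \mathds{1}_k = v \wedge \mathds{1}_{k-1}$.
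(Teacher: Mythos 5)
Your proof is correct, and it is essentially the transpose of the paper's own argument: both rest on the same three ingredients, namely the compatibility $\widetilde\sharp_1(\dd\mathcal H)=\widetilde\sharp_n(\dd\mathcal H)\wedge\mathds 1_{n-1}$ encoded in the definition of $S^{n+1}[n]$, the defining semi-basicness of $-\widetilde\sharp_1(\dd\mathcal H)+\mathds 1_n$, and the fact that verticality of $\widetilde\sharp_n(\dd\mathcal H)$ combined with the $(n-1)$-horizontality of $S^n$ forces the obstructing term into the kernel. The difference is purely one of which side of the annihilator duality you compute on. The paper contracts the semi-basic tensor against $\alpha\in S^n$ and then against $u\in\bigvee_{n-1}M$, commutes $\iota_u$ past $\iota_{\widetilde\sharp_n(\dd\mathcal H)}$, and observes that the commutator term $\iota_{\iota_{\widetilde\sharp_n(\dd\mathcal H)}u}\,\alpha$ is semi-basic because a $1$-vertical $(n-1)$-multivector contracted into an $(n-1)$-horizontal form is semi-basic --- which is exactly the reason your cross term $\widetilde\sharp_n(\dd\mathcal H)\wedge v\wedge\mathds 1_{n-2}$ dies modulo $K_n$; it then concludes that $\theta-\iota_{\widetilde\sharp_n(\dd\mathcal H)}\theta$ is semi-basic for every $\theta\in S^1=\iota_{\bigvee_{n-1}M}S^n$. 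You instead contract the tensor identity directly with a vertical $v$ and descend from $u\wedge\bigvee_{n-1}M\subseteq K_n$ to $u\in K_1$. A small merit of your version is that this final descent --- equivalently, the passage from the dual statement on $S^1$ to the semi-basicness of $-\widetilde\sharp_n(\dd\mathcal H)+\mathds 1_1$ as a $\T M/K_1$-valued form --- is spelled out via the nondegeneracy of the pairing with $S^1$, a step the paper leaves implicit. The first assertion (the contraction identity on semi-basic $\omega$) is handled identically in both proofs.
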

\begin{proof} Let $\alpha \in S^n$ and $u \in\bigvee_{n-1} M$. Then, since $-\widetilde \sharp_1(\dd \mathcal{H}) + \mathds{1}_{n-1}$ is semi-basic, $-\iota_{\widetilde \sharp_n( \dd \mathcal{H})} \alpha +  \alpha$ is semi-basic (as $\iota_{\widetilde \sharp_1(\dd \mathcal{H})} \alpha = \iota_{\widetilde \sharp_n(\dd \mathcal{H})} \alpha$). Contracting by $u$ preserves the semi-basic character, so that
\begin{align*} 
    \iota_u \left( -\iota_{\widetilde \sharp_n( \dd \mathcal{H})} \alpha +  \alpha\right) &= -\iota_u \iota_{\widetilde \sharp_n(\dd \mathcal{H}) } \alpha+ \iota_u \alpha\\
    &= -\iota_{\widetilde \sharp_n (\dd \mathcal{H})} \iota_u \alpha + \iota_u\alpha +  \iota_{\iota_{\widetilde \sharp_n(\dd\mathcal{H})} u } \alpha
\end{align*}
is semi-basic. The proof will be finished once we show that the last term $\iota_{\iota_{\widetilde \sharp_1(\dd \mathcal{H})} u } \alpha$ also is. Indeed, since $\sharp_1(\dd \mathcal{H})$ takes vertical values, ${\iota_{\widetilde \sharp_1(\dd \mathcal{H})} u }$ is at least $1$-vertical and, together with the fact that $\alpha$ is at most $(n - 1)$-horizontal, implies that $\iota_{\iota_{\widetilde \sharp_1(\dd \mathcal{H})} u } \alpha$ is semi-basic, which finishes the proof. The fact that $\iota_{- \widetilde \sharp_n(\dd \mathcal{H}) +  \mathds{1}} \omega = \omega$, for every semi-basic $1$-form $\omega$ follows from $\widetilde \sharp_n$ taking vertical values.
\end{proof}

We now deal with the problem of interpreting the possible extensions of the brackets in term of the dynamics determined by a Hamiltonian.

Fix a Hamiltonian $\mathcal{H}$ and an extension $\widetilde \sharp_n$ that satisfies the hypotheses of Proposition \ref{prop:extensions_preverve_Hamiltonian}. Then, we get a map
\[S^1 \longrightarrow \tau^\ast \left(\T^\ast X \right)\,, \quad \theta \mapsto \theta -\iota_{\widetilde \sharp_n(\dd \mathcal{H})} \theta\,.\]
that is the identity on semi-basic forms. Dualizing this map, we obtain a vector bundle map
\[
\gamma_{\mathcal{H}}: \tau^\ast \left(\T X \right) \longrightarrow \T M \big / K_1 \,,
\]
such that $\tau_\ast \circ \gamma_{\mathcal{H}} = \operatorname{id}_{\T X}$ (recall that the projection is well defined modulo $K_p$). This map can be thought of as the horizontal lift of a connection ``up to $K_1$''. This connection will solve the Hamilton--De Donder--Weyl equations. We have the following result formalizing the previous idea:
\begin{proposition} Let $h$ be the horizontal lift of an Ehresmann connection on $\tau\colon M \longrightarrow X$. If its horizontal lift modulo $K_1$ coincides with $\gamma_\mathcal{H}$, we have that $h$ solves the Hamilton--de Donder--Weyl equations determined by $\mathcal{H}$.
\end{proposition}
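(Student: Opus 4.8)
The plan is to show that, for every $\alpha \in \Omega^{n-1}_H(M)$, the two semi-basic $n$-forms $h^\ast(\dd\alpha)$ and $\dd\alpha + \{\alpha,\mathcal{H}\}_{\widetilde\sharp_1} = \iota_{-\widetilde\sharp_1(\dd\mathcal{H})+\mathds{1}_n}\dd\alpha$ coincide. The right-hand side is semi-basic by Definition \ref{def:Hamiltonian} (its form part is semi-basic and $\iota_u\dd\alpha$ is a function), while $h^\ast(\dd\alpha)$ is semi-basic as the horizontal projection of $\dd\alpha$ along the connection $h$. Since $\dim X = n$, semi-basic $n$-forms are sections of the line bundle $\tau^\ast\bigwedge^n \T^\ast X$, so two of them agree as soon as they have equal contraction against one horizontal $n$-multivector projecting onto a frame of $\T X$. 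Working locally, I fix a frame $\{\partial_\mu\}$ of $\T X$ and set $U_h := h(\partial_1)\wedge\cdots\wedge h(\partial_n)$. Using $\iota_{\mathds{1}_n}\dd\alpha = \dd\alpha$ (Lemma \ref{lemma2}) and the fact that $h^\ast$ acts as the identity on the fully horizontal $U_h$, so that $\iota_{U_h}(h^\ast\dd\alpha) = \iota_{U_h}\dd\alpha$, the whole statement reduces to the single identity $\iota_{U_h}\iota_{\widetilde\sharp_1(\dd\mathcal{H})}\dd\alpha = 0$.

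Next I would move the inner contraction onto $\widetilde\sharp_n$. Since $\dd\alpha\in S^n$ and $\dd\mathcal{H}\in (S^1)^{\wedge(n+1)}$, the defining property of $\widetilde\sharp_1$ in Theorem \ref{thm:extension_of_sharp_1} gives $\iota_{\widetilde\sharp_1(\dd\mathcal{H})}\dd\alpha = \iota_{\sharp_n(\dd\alpha)}\dd\mathcal{H}$, while the defining property of $S^a[n]$ and $\widetilde\sharp_n$ gives $\iota_{\widetilde\sharp_n(\dd\mathcal{H})}\dd\alpha = \iota_{\widetilde\sharp_1(\dd\mathcal{H})}\dd\alpha$; hence $\iota_{\widetilde\sharp_1(\dd\mathcal{H})}\dd\alpha = \iota_{\widetilde\sharp_n(\dd\mathcal{H})}\dd\alpha$. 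Writing $\widetilde\sharp_n(\dd\mathcal{H}) = \sum_l \sigma_l\otimes w_l$ locally, with $\sigma_l$ one-forms and $w_l$ vectors modulo $K_1$, this equals $\sum_l \sigma_l\wedge \iota_{w_l}\dd\alpha$. Contracting with $U_h$ and collecting terms, for each $\mu$ the coefficient of the remaining $(n-1)$-fold contraction is $\sum_l \langle\sigma_l,h(\partial_\mu)\rangle\,\iota_{w_l}\dd\alpha = \iota_{V_\mu}\dd\alpha$, where $V_\mu := \sum_l \langle\sigma_l,h(\partial_\mu)\rangle\,w_l$. Thus it suffices to prove $\iota_{V_\mu}\dd\alpha = 0$ for every $\mu$.

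Here the hypothesis enters. Because $h$ modulo $K_1$ equals $\gamma_{\mathcal{H}}$, and $\gamma_{\mathcal{H}}$ is by construction the dual of $\theta\mapsto \theta - \iota_{\widetilde\sharp_n(\dd\mathcal{H})}\theta$, pairing $h(\partial_\mu)$ with an arbitrary $\theta\in S^1$ gives $\iota_{h(\partial_\mu)}\theta = \iota_{h(\partial_\mu)}\big(\theta - \iota_{\widetilde\sharp_n(\dd\mathcal{H})}\theta\big)$, that is $\iota_{h(\partial_\mu)}\iota_{\widetilde\sharp_n(\dd\mathcal{H})}\theta = 0$; expanding with $\widetilde\sharp_n(\dd\mathcal{H}) = \sum_l \sigma_l\otimes w_l$, this reads $\langle V_\mu,\theta\rangle = 0$ for all $\theta\in S^1$, i.e. $V_\mu\in K_1$. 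Finally I would use the elementary fact that $\iota_v\dd\alpha = 0$ whenever $v\in K_1$ and $\dd\alpha\in S^n$: iterating $S^{a-1} = \iota_{\T M}S^a$ shows that any $(n-1)$-fold contraction of $\dd\alpha$ lands in $S^1$, which $K_1 = (S^1)^{\circ,1}$ annihilates, so a full contraction of $\dd\alpha$ involving one vector of $K_1$ vanishes (this also makes $\iota_{V_\mu}\dd\alpha$ independent of the chosen lift of the $w_l$). Hence $\iota_{V_\mu}\dd\alpha = 0$, so $\iota_{U_h}\iota_{\widetilde\sharp_1(\dd\mathcal{H})}\dd\alpha = 0$ and the two semi-basic forms coincide, proving that $h$ is a linear solution of the Hamilton--De Donder--Weyl equations determined by $\mathcal{H}$.

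The routine parts are the two semi-basicity checks and the combinatorial expansion of $\iota_{U_h}(\sigma_l\wedge \iota_{w_l}\dd\alpha)$, whose signs are irrelevant since every term vanishes. The genuine content, and the step I expect to be most delicate, is the translation of the geometric hypothesis $h\bmod K_1 = \gamma_{\mathcal{H}}$ into the algebraic condition $V_\mu\in K_1$ through the duality defining $\gamma_{\mathcal{H}}$, together with the identification $\iota_{\widetilde\sharp_1(\dd\mathcal{H})}\dd\alpha = \iota_{\widetilde\sharp_n(\dd\mathcal{H})}\dd\alpha$ that transports the computation onto $\widetilde\sharp_n$, where the hypothesis is naturally phrased.
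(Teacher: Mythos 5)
Your proof is correct, and it takes a somewhat different route from the paper's. The paper argues at the level of the pullback operator: from $h \bmod K_1 = \gamma_{\mathcal{H}}$ it first extracts $h^\ast \theta = \theta - \iota_{\widetilde\sharp_n(\dd\mathcal{H})}\theta$ for all $\theta \in S^1$, and then, using multiplicativity of $h^\ast$ together with the verticality of $\widetilde\sharp_n(\dd\mathcal{H})$ to kill all higher-order cross terms, expands $h^\ast(\theta_1\wedge\cdots\wedge\theta_a)$ when all but one factor is semi-basic; taking $a=n$ gives $h^\ast\alpha = \alpha - \iota_{\widetilde\sharp_n(\dd\mathcal{H})}\alpha = \alpha - \iota_{\widetilde\sharp_1(\dd\mathcal{H})}\alpha$ on $S^n$ and hence the claim. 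You instead exploit that both sides are sections of the line bundle $\tau^\ast\bigwedge^n \T^\ast X$, reduce everything to the single scalar identity $\iota_{U_h}\iota_{\widetilde\sharp_1(\dd\mathcal{H})}\dd\alpha = 0$, and verify it by unwinding the duality defining $\gamma_{\mathcal{H}}$ into the algebraic statement $V_\mu \in K_1$, combined with the observation that $(n-1)$-fold contractions of $\dd\alpha \in S^n$ land in $S^1$, which $K_1$ annihilates. Your route has the advantage of not relying on the (implicit) decomposability of sections of $S^n$ as sums of wedge products of $S^1$-sections with at most one non-semi-basic factor, and of making the role of $K_1$ and of the quotient-valued $w_l$ completely explicit; the paper's route is less pointwise but yields the stronger intermediate identity $h^\ast\alpha = \alpha - \iota_{\widetilde\sharp_n(\dd\mathcal{H})}\alpha$ for forms $\alpha \in S^a$ of \emph{arbitrary} degree, which is precisely what is recycled immediately afterwards to prove Theorem \ref{thm:evolution_of_arbitrary_forms} on the evolution of forms of any order. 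Both arguments ultimately rest on the same two ingredients: the duality defining $\gamma_{\mathcal{H}}$ and the compatibility $\iota_{\widetilde\sharp_n(\dd\mathcal{H})}\beta = \iota_{\widetilde\sharp_1(\dd\mathcal{H})}\beta$ on $S^n$.
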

\begin{proof} Since the horizontal lift of $h$ coincides with $\gamma_\mathcal{H}$ modulo $K_1$, we have $h^\ast \theta = \theta - \iota_{\widetilde \sharp_n(\dd \mathcal{H})} \theta$, for every $\theta \in S^1$. Therefore, we can write 
\begin{align*}
    h^\ast \left( \theta_1 \wedge \cdots \wedge \theta_a\right) &=  h^\ast \theta_1 \wedge \cdots \wedge h^\ast \theta_a \\
    &= (\theta_1 - \iota_{\widetilde \sharp_n(\dd \mathcal{H})}\theta_1) \wedge \cdots \wedge(\theta_a - \iota_{\widetilde \sharp_n(\dd \mathcal{H})}\theta_a)\,,
\end{align*}
for $\theta_1, \dots, \theta_a \in S^1$. Hence, since $\widetilde \sharp_n$ takes vertical values, if every form $\theta_1, \dots, \theta_a$ but possibly one is semi-basic, we may write
\[
h^\ast \left( \theta_1 \wedge \cdots \wedge \theta_a\right) = \theta_1 \wedge \cdots \wedge \theta_a - \iota_{\widetilde \sharp_n(\dd \mathcal{H})} \left( \theta_1 \wedge \cdots \wedge \theta_a\right)\,.
\]
In particular, letting $a  = n$ and $\alpha \in S^n$ we have $h^\ast \alpha = \alpha - \iota_{\widetilde \sharp_n(\dd \mathcal{H})} \alpha = \alpha - \iota_{\widetilde \sharp_1(\dd \mathcal{H})}$, which clearly implies $h^\ast (\dd \alpha) = \dd \alpha + \{ \alpha, \mathcal{H}\}$, for every $\alpha \in \Omega^{n-1}_H(M)$, finishing the proof.
\end{proof}
Notice that the previous computation works for any form $\alpha \in S^a$, and for arbitrary $a$, so that we have that suitable extensions of the brackets give a possible evolution of arbitrary forms, namely:
\begin{theorem} 
\label{thm:evolution_of_arbitrary_forms}
Let $\widetilde \sharp_n$ be an extension that {takes vertical values}, $\mathcal{H} \in \Omega^{n}(M)[n]$ be a Hamiltonian and $h$ be a horizontal lift of an Eheresmann connection whose vertical lift is $\gamma_\mathcal{H}$ modulo $K_1$. Then, $h$ solves the Hamilton--De Donder--Weyl equations determined by $\mathcal{H}$ and we have
\[
h_\mathcal{H}^\ast (\dd \alpha) = \dd \alpha +\{\alpha, \mathcal{H}\}_{\widetilde \sharp_n},
\]
for any $\alpha \in \Omega^{a}_H(M),$ with $0 \leq a \leq n - 1$.
\end{theorem}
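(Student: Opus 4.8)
The plan is to observe that the assertion about the Hamilton--De Donder--Weyl equations is already contained in the proposition preceding the statement, so the only genuinely new content is the identity $h^\ast(\dd\alpha)=\dd\alpha+\{\alpha,\mathcal{H}\}_{\widetilde\sharp_n}$, now for every $0\le a\le n-1$ and not just $a=n-1$. I would begin by unwinding the bracket. Since $\deg\mathcal{H}=n\ge n$, the second branch of the definition of $\{\cdot,\cdot\}_{\widetilde\sharp_n}$ applies, and as $\deg_H\mathcal{H}=(n-1)-n=-1$ we obtain $\{\alpha,\mathcal{H}\}_{\widetilde\sharp_n}=(-1)^{\deg_H\mathcal{H}}\iota_{\widetilde\sharp_n(\dd\mathcal{H})}\dd\alpha=-\iota_{\widetilde\sharp_n(\dd\mathcal{H})}\dd\alpha$. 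Hence the goal is precisely
\[
h^\ast(\dd\alpha)=\dd\alpha-\iota_{\widetilde\sharp_n(\dd\mathcal{H})}\dd\alpha,\qquad \dd\alpha\in S^{a+1}.
\]

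The mechanism is the one already used in the preceding proposition, and the point is that it never relied on $a=n$. Here $h^\ast$ is the horizontal projection attached to the connection, hence a degree-zero algebra homomorphism of $\Omega(M)$; by construction of $\gamma_{\mathcal{H}}$ it satisfies $h^\ast\theta=\theta-\iota_{\widetilde\sharp_n(\dd\mathcal{H})}\theta$ for $\theta\in S^1$, and since $\widetilde\sharp_n(\dd\mathcal{H})$ takes vertical values it fixes every semi-basic form. The essential structural input is the fibered hypothesis: by Definition \ref{def:fibered_graded_Dirac} (i), $S^{a+1}$ consists of $a$-horizontal forms, i.e. forms killed by any two vertical entries. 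I would then either (a) run the decomposable computation of the proposition verbatim, writing $\dd\alpha$ locally as a sum of products $\theta_1\wedge\cdots\wedge\theta_{a+1}$ with $\theta_i\in S^1$ in which $a$-horizontality forces at most one factor to be non-semi-basic, so that $h^\ast(\theta_1\wedge\cdots\wedge\theta_{a+1})=\theta_1\wedge\cdots\wedge\theta_{a+1}-\iota_{\widetilde\sharp_n(\dd\mathcal{H})}(\theta_1\wedge\cdots\wedge\theta_{a+1})$; or (b) argue directly that, with $\widetilde\sharp_n(\dd\mathcal{H})$ representing the vertical projection (i.e. $\mathrm{id}-h^\ast$ on $S^1$), the difference between the homomorphism $h^\ast\omega$ and $\omega-\iota_{\widetilde\sharp_n(\dd\mathcal{H})}\omega$ is a sum of terms with at least two vertical entries, which vanish on any $a$-horizontal $\omega$. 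Either way one recovers the displayed identity for all such $\alpha$.

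The computation itself is short; I expect the real work to lie in the bookkeeping. First, since $\widetilde\sharp_n$ is only defined modulo $K_1$, one must check that $\iota_{\widetilde\sharp_n(\dd\mathcal{H})}\dd\alpha$ is independent of the chosen representative. This is where I would invoke that $K_1=(S^1)^{\circ,1}$ is contained in the vertical bundle --- a consequence of $S^n$ containing every semi-basic $n$-form --- together with the $1$-verticality of $\dd\alpha\in S^{a+1}$: replacing the vector part by $w\in K_1$ changes the contraction by a term $\theta'\wedge\iota_w\dd\alpha$ with $w$ vertical, which dies by $1$-verticality. The same well-definedness is already exploited in Proposition \ref{prop:extensions_preverve_Hamiltonian}. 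The second, and genuinely central, point is the vanishing of the higher vertical terms (equivalently, that in the local decomposition at most one factor is non-semi-basic); this is exactly what $a$-horizontality of $S^{a+1}$ guarantees, and it is the step that makes the whole argument go through uniformly in $a$.
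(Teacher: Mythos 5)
Your proposal is correct and follows essentially the same route as the paper: the paper's proof is literally ``from the previous computation,'' i.e.\ the decomposable argument $h^\ast\theta=\theta-\iota_{\widetilde\sharp_n(\dd\mathcal{H})}\theta$ on $S^1$ combined with the observation that $a$-horizontality of $S^{a+1}$ leaves at most one non-semi-basic factor, so the cross terms vanish and the identity holds uniformly in $a$. One small quibble on your well-definedness aside: $\iota_w\dd\alpha$ for a single vertical $w$ does \emph{not} vanish by $1$-verticality (that only kills two vertical entries); the correct reason it vanishes for $w\in K_1$ is that $K_1=(S^1)^{\circ,1}$ annihilates $S^1=\iota_{\bigvee_a M}S^{a+1}$, so $\iota_w\dd\alpha$ pairs to zero with every $a$-vector.
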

\begin{proof}
    From the previous computation, for $\alpha \in \Omega^a_H(M)$ we have 
    \[
    h^\ast(\dd \alpha) = \dd \alpha - \iota_{\widetilde \sharp_n(\dd \mathcal{H})}\dd \alpha = \dd \alpha + \{\alpha, \mathcal{H}\}\,.
    \]
\end{proof}

We would like to end this discussion with the following result, which gives the relation between the possible extensions and the possible choices of solutions. Before stating and proving it first notice that the space of Hamiltonians is affine, and so is the space of Ehresmann connections. Furthermore, if we are exclusively interested in studying the equations defined by Hamiltonians, we can restrict to those forms whose exterior derivative takes values in the following subbundle
\[
\Gamma := \{\alpha \in S^{n+1}[n] : -\widetilde\sharp_1(\alpha) + \mathds{1}_n = \text{semi-basic}\}\,,
\]
and study extensions of the $\sharp$ maps defined only on $\Gamma$.

\begin{theorem}\label{thm:Characterization_of_fibered_Extensions} Suppose that, locally, $S^n = \langle \dd \alpha_1, \dots, \dd \alpha_k\rangle$, for certain Hamiltonian forms $\alpha_1, \dots, \alpha_k$. Then, there is a bijective correspondence between vertical valued extensions $\widetilde \sharp_n$ defined on $\Gamma$ and the family of affine maps 
\[
\gamma:\{\text{Hamiltonians} \} \longrightarrow \{\text{Horizontal lifts modulo } K_1\}
\]
satisfying the following two properties:
\begin{enumerate}[\rm (i)]
    \item If $\dd \mathcal{H}_1|_x = \dd \mathcal{H}_2|_x$, then $\gamma_{\mathcal{H}_1}|_x = \gamma_{\mathcal{H}_2}|_x$.
    \item If $h$ is a connection such that its horizontal lift coincides with $\gamma(\mathcal{H})$ modulo $K_1$, then $h$ solves the Hamilton--De Donder--Weyl equations. 
\end{enumerate}
\end{theorem}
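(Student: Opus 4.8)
The plan is to show that the assignment $\widetilde\sharp_n \mapsto \gamma$, given by $\gamma_{\mathcal{H}} := \mathds{1}_1 - \widetilde\sharp_n(\dd\mathcal{H})$ (the construction already carried out just before Theorem~\ref{thm:evolution_of_arbitrary_forms}), is the desired bijection, with inverse $\gamma \mapsto \widetilde\sharp_n$, $\widetilde\sharp_n(\dd\mathcal{H}) := \mathds{1}_1 - \gamma_{\mathcal{H}}$. The whole argument rests on two facts: that $\widetilde\sharp_n(\dd\mathcal{H})$ and $\gamma_{\mathcal{H}}$ carry exactly the same information once $\mathcal{H}$ is fixed, and that as $\mathcal{H}$ ranges over all Hamiltonians the covectors $\dd\mathcal{H}|_x$ fill out the whole fibre $\Gamma|_x$. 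Note first that by Proposition~\ref{prop:extensions_preverve_Hamiltonian} the combination $-\widetilde\sharp_n(\dd\mathcal{H}) + \mathds{1}_1$ is semi-basic and acts as the identity on semi-basic $1$-forms, so it lies in $\tau^\ast(\T^\ast X)\otimes \T M/K_1 = \operatorname{Hom}(\tau^\ast(\T X),\T M/K_1)$, which is precisely where $\gamma_{\mathcal{H}}$ lives, and the dualization of $\theta \mapsto \theta - \iota_{\widetilde\sharp_n(\dd\mathcal{H})}\theta$ identifies the two. This justifies the displayed formula.

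For the forward map, given a vertical-valued extension $\widetilde\sharp_n$ on $\Gamma$, the resulting $\gamma_{\mathcal{H}}$ satisfies $\tau_\ast\circ\gamma_{\mathcal{H}} = \operatorname{id}$ and is thus a horizontal lift modulo $K_1$. Property (i) is immediate from tensoriality of the bundle map $\widetilde\sharp_n$: it depends on $\dd\mathcal{H}$ only through its value at $x$, so $\dd\mathcal{H}_1|_x = \dd\mathcal{H}_2|_x$ forces $\gamma_{\mathcal{H}_1}|_x = \gamma_{\mathcal{H}_2}|_x$. Property (ii) is exactly the content of the (unlabeled) proposition preceding Theorem~\ref{thm:evolution_of_arbitrary_forms}. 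Finally, affineness of $\gamma$ follows because $\mathcal{H}\mapsto\dd\mathcal{H}$ is affine (the Hamiltonians form an affine space), $\widetilde\sharp_n$ is linear, and $\gamma_{\mathcal{H}} = \mathds{1}_1 - \widetilde\sharp_n(\dd\mathcal{H})$.

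For the inverse, given an affine $\gamma$ satisfying (i) and (ii), I define $\widetilde\sharp_n$ fibrewise: for $\beta \in \Gamma|_x$ I choose a Hamiltonian $\mathcal{H}$ with $\dd\mathcal{H}|_x = \beta$ and set $\widetilde\sharp_n(\beta) := \mathds{1}_1 - \gamma_{\mathcal{H}}|_x$. Property (i) makes this independent of the choice of $\mathcal{H}$, hence well defined; since $\tau_\ast\circ\gamma_{\mathcal{H}} = \operatorname{id}$, the difference $\mathds{1}_1 - \gamma_{\mathcal{H}}$ projects to zero under $\tau_\ast$ and so takes vertical values; and affineness of $\gamma$ together with the affine structure on $\Gamma$ makes $\widetilde\sharp_n$ an affine bundle map on $\Gamma$. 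That $\widetilde\sharp_n$ is a \emph{legitimate} extension, i.e.\ that $\iota_{\widetilde\sharp_n(\dd\mathcal{H})}\omega = \iota_{\widetilde\sharp_1(\dd\mathcal{H})}\omega$ for every $\omega \in S^n$ (so that indeed $\dd\mathcal{H}\in S^{n+1}[n]$), is precisely encoded in property (ii): for a connection $h$ with horizontal lift $\gamma_{\mathcal{H}}$ modulo $K_1$ one has $h^\ast\omega = \omega - \iota_{\widetilde\sharp_n(\dd\mathcal{H})}\omega$, while solving the Hamilton--De Donder--Weyl equations gives $h^\ast\omega = \omega - \iota_{\widetilde\sharp_1(\dd\mathcal{H})}\omega$, and comparing forces the two contractions to coincide. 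A direct computation with $\mathds{1}_1 - (\mathds{1}_1 - \gamma_{\mathcal{H}}) = \gamma_{\mathcal{H}}$ shows the two assignments are mutually inverse.

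The main obstacle is the surjectivity used in the inverse direction: every $\beta\in\Gamma|_x$ must equal $\dd\mathcal{H}|_x$ for some Hamiltonian $\mathcal{H}$. This is exactly where the hypothesis $S^n = \langle\dd\alpha_1,\dots,\dd\alpha_k\rangle$ enters. Fixing one Hamiltonian $\mathcal{H}_0$ (which exists, as in Example~\ref{ex:multisymplectic_Hamiltonian}), it is enough to realize each element of the linear model $\{\eta : \widetilde\sharp_1(\eta)\ \text{semi-basic}\}$ of $\Gamma|_x$ as $\dd\eta|_x$ for some $\eta\in\Omega^n(M)[n]$ with $\widetilde\sharp_1(\dd\eta)$ semi-basic, so that $\mathcal{H}_0 + \eta$ remains a Hamiltonian. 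The exact local generators $\dd\alpha_i$ of $S^n$, together with the fibered condition that every semi-basic $n$-form lies in $S^n$, supply enough exact sections of $\Gamma$ to guarantee that the differential map $\mathcal{H}\mapsto\dd\mathcal{H}|_x$ is fibrewise onto $\Gamma|_x$. This realization is the technical heart of the statement; once it is in place the correspondence and its bijectivity are formal.
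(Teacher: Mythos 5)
Your proposal follows essentially the same route as the paper: the correspondence is $\gamma_{\mathcal{H}} = \mathds{1}_1 - \widetilde\sharp_n(\dd\mathcal{H})$ (equivalently, dualizing $\theta \mapsto \theta - \iota_{\widetilde\sharp_n(\dd\mathcal{H})}\theta$), property (i) gives well-definedness of the inverse assignment, and property (ii) gives compatibility with $\widetilde\sharp_1$. One step is imprecise as written: when you derive $\iota_{\widetilde\sharp_n(\dd\mathcal{H})}\omega = \iota_{\widetilde\sharp_1(\dd\mathcal{H})}\omega$ for every $\omega \in S^n$ by comparing two expressions for $h^\ast\omega$, the Hamilton--De Donder--Weyl equations only furnish $h^\ast(\dd\alpha) = \dd\alpha - \iota_{\widetilde\sharp_1(\dd\mathcal{H})}\dd\alpha$ for Hamiltonian $(n-1)$-forms $\alpha$, i.e.\ for exact elements of $S^n$; passing from these to all of $S^n$ is precisely where the paper invokes the hypothesis $S^n = \langle \dd\alpha_1, \dots, \dd\alpha_k\rangle$, whereas you assign that hypothesis solely to the surjectivity of $\mathcal{H} \mapsto \dd\mathcal{H}|_x$ onto $\Gamma|_x$. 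Your surjectivity observation is a fair one --- the paper leaves that point implicit --- but the compatibility step needs the generation hypothesis as well; with that one-line addition your argument coincides with the paper's.
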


\begin{proof} First, given the vertical extension of $\widetilde \sharp_n$, we obtain the corresponding map $\gamma$ as before, that is, noticing that the map
\[
\dd f  \mapsto \dd f + \{f, \mathcal{H}\} = \dd f - \iota_{\widetilde \sharp_n(\dd \mathcal{H})} \dd f
\]
defines a map $S^1 \longrightarrow S^1$ that takes values in semi-basic forms. Dualizing we obtain the corresponding horizontal lift. Conversely, let $\gamma$ be such a map and, for every Hamiltonian $\mathcal{H} \in \Omega^n_H(M)[n]$ define $\widetilde \sharp_n(\dd \mathcal{H})$ to be the $(1,1)$ tensor field defined by 
\[
\iota_{\widetilde \sharp_n(\dd \mathcal{H})} \dd f = \dd f - h^\ast \dd f\,, 
\]
where $h$ is the horizontal lift associated to the connection $\gamma(\mathcal{H})$. The first requiered propery on $\gamma$ implies that it defines an extension 
\[
\widetilde \sharp_n : \Gamma \longrightarrow \T^\ast M \otimes \T M / K_1\,,
\]
and the fact that $\gamma(\dd \mathcal{H})$ solves the Hamilton--De Donder--Weyl equations, together with the assuption that $S^n$ is generated by the exterior differential of Hamiltonian forms, implies that 
\[
\widetilde \sharp_1 = \widetilde \sharp_n \wedge\mathds{1}_{n-1} \,,
\]
so that it defines an extension of the brackets.
\end{proof}
\subsection{Differential forms with defined evolution}

We have shown in Subsection \ref{subsection:Hamilton--DeDonder--Weyl_eq} that extensions of brackets define linear solutions for every possible choice of Hamiltonian in a way such that the expression $\dd \alpha + \{\alpha, \mathcal{H}\}_{\widetilde \sharp_n}$ gives a possible evolution of the form $\alpha$. Thus, if we identify the subfamily of Hamiltonian forms for which $\dd \alpha + \{\alpha, \mathcal{H}\}_{\widetilde \sharp_n}$ is independent of the particular extension $\widetilde \sharp_n$, we are identifying the family of forms that have defined evolution, for every solution of Hamilton--De Donder--Weyl equations:

\begin{theorem}
\label{thm:defined_evolution_implies_Hamiltonian}
Let $(\tau:M \longrightarrow X, S^a, \sharp_a)$ be a fibered graded Dirac manifold. Let $\mathcal{H} \in \Omega^{n}_H(M)[n]$ be a Hamiltonian and let $\gamma \in \Omega^a(M)$ be a form such that there exists a semi-basic form $\beta \in \Omega^{a+1}(M)$  with
\[h^\ast (\dd \gamma ) = \beta,\]
for every linear solution of Hamilton-De Donder-Weyl equations $h$. Then, $\gamma \in \Omega^a_H(M)$ and it satisfies:
\[\gamma \wedge \varepsilon \in \Omega^{n-1}_H(M),\] for every basic closed form $\varepsilon \in \Omega^{n - 1 - a}(M).$
\end{theorem}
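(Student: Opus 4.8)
The plan is to reduce both assertions to the single statement that $\dd\gamma$ takes values in $S^{a+1}$, and to extract this from the rigidity of the horizontalization $h^\ast$ across the family of linear solutions. First I would reformulate membership via annihilators: since $S^{a+1}$ is a subbundle, reflexivity of the annihilator gives $S^{a+1} = (K_{a+1})^{\circ, a+1}$, where $K_{a+1} = \{U \in \bigvee_{a+1} M : \iota_U S^{a+1} = 0\}$ in the notation fixed in the conventions. Hence $\gamma \in \Omega^a_H(M)$ is equivalent to $\iota_U \dd\gamma = 0$ for every $U \in K_{a+1}$, and this is the identity I would aim to establish pointwise.

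To exploit the hypothesis, fix one linear solution $h_0$, so that $\beta = h_0^\ast(\dd\gamma)$, and recall from the proposition preceding Theorem~\ref{thm:evolution_of_arbitrary_forms} that a linear solution acts on $1$-forms by $h^\ast\theta = \theta - \iota_{\widetilde\sharp_n(\dd\mathcal{H})}\theta$ and extends as an algebra homomorphism fixing basic forms; thus $h^\ast$ is the horizontalization attached to the Ehresmann connection determined by $h$. The hypothesis says this horizontalization sends $\dd\gamma$ to the fixed semi-basic form $\beta$ regardless of which linear solution is chosen. I would then analyze how $h^\ast(\dd\gamma)$ varies when $h$ runs through the linear solutions: writing a second linear solution as $h_0$ perturbed by a connection difference $\phi \in \Gamma(\tau^\ast T^\ast X \otimes V)$ with $V = \ker \dd\tau$, the first variation of the horizontalization is the horizontalization of $\dd x^\mu \wedge \iota_{\phi_\mu}\dd\gamma$, and the admissible $\phi$ are exactly those for which the analogous expression vanishes on every element of $S^n$ (that is, those preserving $h^\ast$ on $S^n$, which is precisely what the defining equation of a linear solution fixes). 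Invariance of $h^\ast(\dd\gamma)$ then forces $\dd x^\mu \wedge \iota_{\phi_\mu}\dd\gamma$ to horizontalize to zero for all admissible $\phi$, and after stripping the purely horizontal cofactor $\dd x^\mu$ this yields contraction identities for $\dd\gamma$ against the multivectors built from the admissible $\phi$.

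The decisive step, and the one I expect to be the main obstacle, is to identify this family of admissible contractions with all of $K_{a+1}$. Concretely, I would show that the multivectors obtained by wedging the admissible $\phi_\mu$ with horizontal multivectors span $K_{a+1}$; this is where the fibered hypotheses of Definition~\ref{def:fibered_graded_Dirac} (so that $S^{a+1}$ is $a$-horizontal and contains all semi-basic forms) together with the flatness hypotheses introduced before Theorem~\ref{thm:bracket_tensor_equivalence} and the generation assumption that $S^n$ is spanned by differentials of Hamiltonian forms (as in Theorem~\ref{thm:Characterization_of_fibered_Extensions}) must enter. One must also control the higher-order contributions of the horizontalization, namely the terms with more than one vertical leg, which I would separate from the first-order part by the standard scaling and multilinearity argument on the polynomial dependence of $h^\ast(\dd\gamma)$ on $\phi$. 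Granting this identification, invariance gives $\iota_U\dd\gamma = 0$ for all $U \in K_{a+1}$, hence $\dd\gamma \in S^{a+1}$ and $\gamma \in \Omega^a_H(M)$.

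Finally, the second assertion follows by applying the first to the form $\gamma \wedge \varepsilon$. Since $\varepsilon$ is basic, $h^\ast\varepsilon = \varepsilon$ for every connection, and since it is closed, $\dd(\gamma\wedge\varepsilon) = \dd\gamma\wedge\varepsilon$; therefore $h^\ast\big(\dd(\gamma\wedge\varepsilon)\big) = (h^\ast\dd\gamma)\wedge\varepsilon = \beta\wedge\varepsilon$, which is semi-basic of degree $n$ and independent of the chosen linear solution. Thus $\gamma\wedge\varepsilon$ satisfies the hypothesis of the theorem in degree $n-1$, and the implication already established gives $\dd(\gamma\wedge\varepsilon)\in S^n$, that is $\gamma\wedge\varepsilon\in\Omega^{n-1}_H(M)$, as required.
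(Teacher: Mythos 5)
Your closing paragraph --- applying the degree-$(n-1)$ case to $\gamma\wedge\varepsilon$ for closed basic $\varepsilon$ --- is exactly the mechanism the paper uses for the second assertion. The core of your plan, however, has a genuine gap at the step you yourself flag as ``the main obstacle''. You propose to deduce $\iota_U\dd\gamma=0$ for all $U\in K_{a+1}$ from the vanishing of the variations of $h\mapsto h^\ast(\dd\gamma)$ along admissible connection differences $\phi$, and to do so you need the multivectors built from admissible $\phi$'s (wedged with horizontal factors) to span $K_{a+1}$. This spanning claim is false already in the canonical example $M=\bigwedge^n_2Y\big/\bigwedge^n_1Y$: the Hamilton--De Donder--Weyl equations completely determine the $\pdv{y^i}$-components of every linear solution (namely $\Gamma^i_\mu=\pdv{H}{p^\mu_i}$), so the direction space of the affine set of linear solutions consists only of (trace-constrained) $\pdv{p^\mu_i}$-valued $\phi$'s. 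Consequently no contraction you extract from varying $h$ inside the solution set can see, for instance, the $2$-vertical multivectors $\pdv{y^i}\wedge\pdv{y^j}\wedge(\text{horizontal})$, all of which lie in $K_{a+1}$. The derivative information along the solution set is therefore strictly too weak to force $\dd\gamma\in S^{a+1}$, and no scaling/multilinearity refinement of the higher-order terms repairs this, since those terms also only probe the admissible directions.

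The paper closes this gap by a different mechanism. In top degree it does not differentiate along the solution set; it observes that $\varphi(h)=h^\ast(\dd\gamma)-\beta$ is an affine functional vanishing on the nonempty common zero locus $\bigcap_\alpha\Phi_\alpha^{-1}(0)$ of the affine functionals $\Phi_\alpha(h)=h^\ast\alpha-(\alpha-\iota_{\widetilde\sharp_n(\dd\mathcal{H})}\alpha)$, $\alpha\in S^n$, and hence lies in their span. This upgrades the hypothesis from an identity on the solution set to the identity $\varphi=\Phi_{\alpha_0}$ on the space of \emph{all} connections, after which one may vary $h$ in arbitrary vertical directions --- precisely the directions your admissible $\phi$'s are missing --- to conclude $\dd\gamma\in S^n$. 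The general-degree Hamiltonicity is then \emph{not} proved directly: it is recovered from the wedge conditions $\dd\gamma\wedge\varepsilon\in S^n$ (for all basic $\varepsilon$) by a separate pointwise linear-algebra lemma, a step absent from your plan because your plan claims the general-degree statement up front. If you want to salvage your route, you must either import the affine-span argument or find a substitute for the missing directions of $K_{a+1}$; as written, the decisive step cannot be carried out.
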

\begin{proof} Let us begin with the case $a = n-1.$ 
\begin{lemma}
\label{lemma:Lemma_3}
Let $\gamma \in \Omega^{n-1}(M)$ be an $(n-1)$-form and $\beta \in \Omega^n(M)$be a semi-basic $n$-form such that 
$h^\ast (\dd \gamma) = \beta,$ for every solution of the Hamilton--de Donder--Weyl equations determined by a Hamiltonian $\mathcal{H} \in\Omega^n_H(M)[n]$. Then $\gamma \in \Omega^{n-1}_H(M)$.
\end{lemma}
\begin{proof} The space of linear solutions is defined as those connections (given by a horizontal lift) $h$ such that 
$h^\ast (\dd \alpha) = \dd \alpha + \{\alpha, \mathcal{H}\}$, for every $\alpha \in \Omega^{n-1}_H(M)$. Let us define, for each $\alpha \in S^n$
\[
\Phi_\alpha: \{\text{horizontal lifts}\} \rightarrow \bigwedge^n \T^\ast X\,, \quad h \mapsto  h^\ast(\alpha)- \left(\alpha - \iota_{\widetilde \sharp_n(\dd \mathcal{H})}\alpha\right)\,.
\]
Hence, if $h$ is a horizontal lift such that $\Phi_\alpha(h) = 0$, for every $\alpha \in S^n$, we have that $h$ solves the equations defined by $ \mathcal{H}$. Also notice that each $\Phi_\alpha$ is an affine map. Now let $\gamma \in \Omega^{n-1}(M)$ and $\beta \in \Omega^n(M)$ be such that $h^\ast(\dd \gamma) = \beta$, for every solution of the Hamilton--De Donder--Weyl equations. Define 
\[\varphi: \{\text{horizontal lifts}\} \longrightarrow \bigwedge^n \T^\ast X\,, \quad h \mapsto h^\ast(\dd \alpha) - \beta\,.\]
Then $\varphi$ is identically zero on $\bigcap_{\alpha \in S^a} \Phi_\alpha^{-1}(0)$. Since $\varphi$ is affine and $\bigcap_{\alpha \in S^a} \Phi_\alpha^{-1}(0) \neq \varnothing$, we have that $\varphi \in \langle\Phi_\alpha: \alpha \in S^n \rangle $, which means that there is an $\alpha \in S^n$ such that $\varphi = \Phi_\alpha$. This immediately implies $\dd \gamma = \alpha \in S^n$, proving that $\gamma$ is Hamiltonian.
\end{proof}
Now, for arbitrary $a$, let $\gamma \in \Omega^a(M)$ and $\beta \in \Omega^{a+1}(M)$ semi-basic such that $h^\ast (\dd \gamma) = \beta$, for every solution of the Hamilton--De Donder--Weyl equations. Let $\varepsilon \in \Omega^{n - 1 - a}(M)$ be closed and basic. Then, 
\[h^\ast(\dd(\gamma  \wedge \varepsilon)) = \beta \wedge \varepsilon\] so that, by Lemma \ref{lemma:Lemma_3}, $\gamma  \wedge \varepsilon \in \Omega^{n-1}_H(M)$, for every closed and basic $(n-1-a)$-form $\varepsilon$. Hence, $\dd \gamma \wedge \varepsilon \in S^{n}$, for every basic (not necessarily closed) $(n-1-a)$-form $\varepsilon$. The proof will be finished once we show that $\dd \alpha \in S^{a+1}$, which will be a consequence of the following Lemma:

\begin{lemma} Let $V$ be a finite dimensional vector space, together with a surjective linear mapping $\tau: V \longrightarrow X$, with $\dim X = n$. Let $S^n \in \bigwedge^n V^\ast$ be a subspace consisting of $(n-1)$-horizontal forms and such that $\tau^\ast \left(\bigwedge^n X^\ast\right) \subseteq S^n$. If $\gamma  \in \bigwedge^a V^\ast$ is such that $\gamma  \wedge \varepsilon \in S^n$, for every basic $\varepsilon$, then $\gamma  \in \left\langle \iota_{\bigwedge^p V} S^n \right\rangle$.
\end{lemma}
\begin{proof}
Let $e_\mu, e_i$ denote a basis of $V$ adapted to $\tau$, so that $\tau(e_\mu) = e_\mu$ and $\tau(e_i) = 0$. Suppose 
\[S^n = \langle \tau^\ast\omega, \alpha_1, \dots, \alpha_k\rangle\,\]
where $\omega$ is the volume form on $X$ satisfying $\omega(e_1, \dots, e_n) = 1$, and $\alpha_k = A_{j i}^ \mu \omega_\mu \wedge e^i$, where $\omega_\mu = \iota_{e_\mu} \omega$. Without loss of generality, we can assume that $\gamma$ takes the expression
\[\gamma = \gamma^{\mu_1, \dots, \mu_{n+1-a}}  \omega_{\mu_1, \dots, \mu_{n+1-a}} \wedge e^i\,,\]
where $\omega_{\mu_1, \dots, \mu_{n+1-a}} = \iota_{e_{\mu_1} \wedge \cdots \wedge e_{\mu_{n+1-a}}} \omega$. The condition that $\varepsilon \wedge \gamma \in S^n$, for each basic $(n-a)$ $\varepsilon$ reads using the basis as $\gamma^{\mu_1, \dots, \mu_{n+1-a}} \omega_{\mu_{n+1-a}} \wedge e^i \in S^n$ which is equivalent to the existence of certain constants $B^{j, \mu_1, \dots, \mu_{n-a}}$ such that
\[\gamma^{\mu_1, \dots, \mu_{n+1-a}} \omega_{\mu_{n+1-a}} \wedge e^i = B^{j, \mu_1, \dots, \mu_{n-a}} \alpha = B^{j, \mu_1, \dots, \mu_{n-a}} A^{\mu_{n+1-a}}_{ji} \omega_{\mu_{n+1-a}} \wedge e^i\,.\]
Let $u^j: = B^{j, \mu_1, \dots, \mu_{n-a}} e_{\mu_1} \wedge \dots \wedge e_{\mu_{n-a}}$. Then, the previous conditions reads as $\gamma = \iota_{u^j} \alpha$, which finishes the proof.
\end{proof}
\end{proof}

In fact:
\begin{theorem} 
\label{thm:determined_evolution}
If $\alpha \in \Omega^a_H(M)$ is a Hamiltonian form such that \[\alpha \wedge \varepsilon \in \Omega^{n-1}_H(M),\] for every closed basic $\varepsilon \in \Omega^{n - 1 - a}(M),$ then $
\{\alpha, \mathcal{H}\}_{\widetilde \sharp_n}
$
does not depend on the choice of $\widetilde \sharp_n,$ for any Hamiltonian $\mathcal{H}$, if $\widetilde \sharp_n$ takes vertical values.
\end{theorem}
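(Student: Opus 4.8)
The plan is to fix the Hamiltonian $\mathcal{H}$ and compare two arbitrary vertical-valued extensions $\widetilde\sharp_n$ and $\widetilde\sharp_n'$, showing they yield the same value $\{\alpha,\mathcal{H}\}_{\widetilde\sharp_n}$. Since $\deg_H\mathcal{H}=(n-1)-n=-1$, the bracket reads $\{\alpha,\mathcal{H}\}_{\widetilde\sharp_n}=-\iota_{\widetilde\sharp_n(\dd\mathcal{H})}\dd\alpha$, so it suffices to prove $\iota_D\dd\alpha=0$ for $D:=\widetilde\sharp_n(\dd\mathcal{H})-\widetilde\sharp_n'(\dd\mathcal{H})\in\bigwedge^1 M\otimes\T M/K_1$. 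Two properties of $D$ are immediate: it is vertical-valued (both extensions are), and by the very definition of the bundle $S^{n+1}[n]$ every admissible extension agrees with $\widetilde\sharp_1$ on $S^n$, so $\iota_D\beta=0$ for all $\beta\in S^n$. These contractions are well defined modulo $K_1$: from $K_1=(S^1)^{\circ,1}$ one checks $\iota_{K_1}S^n=0$, whence $\iota_{K_1}S^{a+1}=0$ as well. The case $a=n-1$ is trivial, for then $\dd\alpha\in S^n$ and $\iota_D\dd\alpha=0$ at once; the work is all in the range $a<n-1$.

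For $a<n-1$ the key is to feed the wedge hypothesis on $\alpha$ into the annihilation property of $D$. Fix a closed basic form $\varepsilon\in\Omega^{n-1-a}(M)$. Since $\varepsilon$ is closed, $\dd(\alpha\wedge\varepsilon)=\dd\alpha\wedge\varepsilon$, and by hypothesis $\alpha\wedge\varepsilon\in\Omega^{n-1}_H(M)$, so $\dd\alpha\wedge\varepsilon\in S^n$; applying $\iota_D$ gives $\iota_D(\dd\alpha\wedge\varepsilon)=0$. On the other hand, writing $D=\theta^k\otimes v_k$ with $v_k$ vertical and using that $\iota_{v_k}$ is a derivation together with $\iota_{v_k}\varepsilon=0$ (as $\varepsilon$ is semi-basic and $v_k$ vertical), one obtains $\iota_D(\dd\alpha\wedge\varepsilon)=(\iota_D\dd\alpha)\wedge\varepsilon$. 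Combining these, $(\iota_D\dd\alpha)\wedge\varepsilon=0$ for every closed basic $(n-1-a)$-form $\varepsilon$.

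It remains to deduce $\Xi:=\iota_D\dd\alpha=0$ from this wedge-vanishing. First I would note that $\Xi$ is \emph{at most $1$-vertical}: since $\dd\alpha\in S^{a+1}$ is $a$-horizontal, each $\iota_{v_k}\dd\alpha$ is semi-basic, and wedging with the $1$-form $\theta^k$ raises the vertical degree by at most one. Then, working in adapted fibered coordinates $(x^\mu,y^i)$ and probing with the constant-coefficient (hence closed) basic forms $\varepsilon=\dd x^{\widetilde\rho}$, which span all basic directions, the equation $\Xi\wedge\dd x^{\widetilde\rho}=0$ separates along the $0$-vertical and $1$-vertical components of $\Xi$; a short computation with the alternating Kronecker deltas, exactly of the type appearing in the last lemma of Theorem \ref{thm:defined_evolution_implies_Hamiltonian}, forces every coefficient of $\Xi$ to vanish. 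This yields $\iota_{\widetilde\sharp_n(\dd\mathcal{H})}\dd\alpha=\iota_{\widetilde\sharp_n'(\dd\mathcal{H})}\dd\alpha$, and hence the independence of the bracket from the chosen vertical-valued extension.

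The main obstacle is precisely this last implication, from $\Xi\wedge\varepsilon=0$ for all closed basic $\varepsilon$ to $\Xi=0$. It is false for general forms and genuinely relies on two structural facts: that $\Xi$ is at most $1$-vertical, which is where the Hamiltonian (i.e.\ fibered $a$-horizontality of $\dd\alpha$) and vertical-valuedness of the extensions enter, and that the closed basic forms already exhaust all horizontal directions locally. The restriction to \emph{closed} $\varepsilon$ in the hypothesis is harmless for the same reason, since the $\dd x^{\widetilde\rho}$ are simultaneously basic and closed and suffice to pin down $\Xi$.
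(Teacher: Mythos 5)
Your proof is correct and follows essentially the same route as the paper's: the paper computes $\{\alpha\wedge\varepsilon,\mathcal{H}\}_{\widetilde\sharp_n}=\{\alpha,\mathcal{H}\}_{\widetilde\sharp_n}\wedge\varepsilon$ (using exactly your observation that a vertical-valued contraction distributes past the basic factor $\varepsilon$), notes the left-hand side is extension-independent because $\alpha\wedge\varepsilon$ is an ordinary Hamiltonian $(n-1)$-form, and concludes by the same ``an at most $1$-vertical form is determined by its wedge products with basic forms'' step. Your only cosmetic difference is working with the difference $D$ of two extensions and its annihilation of $S^n$ rather than comparing each extension to the canonical $\widetilde\sharp_1$ bracket; the substance, including the verticality and horizontality inputs you flag as essential, is identical.
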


\begin{proof} Observe that 
\[\{\alpha \wedge \varepsilon, \mathcal{H}\} = \{\alpha \wedge \varepsilon, \mathcal{H}\}_{\widetilde \sharp_n}  = \{\alpha, \mathcal{H}\}_{\widetilde \sharp_n} \wedge \varepsilon,\]
for any closed and basic $\varepsilon \in \Omega^{n- 1 - a}(M).$ Since an $(a+1)$-horizontal form is determined by its wedge product with all possible $(n-1 - a)-$basic forms, we conclude that \[\{\alpha, \mathcal{H}\}_{\widetilde \sharp_n}\]
is independent of the choice of $\widetilde \sharp_n,$ as it only depends on $\widetilde \sharp_1.$
\end{proof}

The previous two results motivate the following definition:
\begin{Def}[Special Hamiltonian form] A Hamiltonian form $\alpha \in \Omega^a_H(M)$ is called \textbf{special Hamiltonian} if $\alpha \wedge \varepsilon \in \Omega^{n-1}_H(M)$, for every basic closed $(n - 1 - a)$-form $\varepsilon$. We denote by $\widetilde \Omega^a_H(M)$ the space of special Hamiltonian forms.
\end{Def}

\begin{remark} This gives a method of finding arbitrary $(a-1)-$horizontal $a$-forms that have a defined evolution. In particular, it allows us to look for forms that are closed on arbitrary solutions of the classical field equations, since we have clear candidates. We will apply this method in the following section.
\end{remark}

To end this section, we give a sufficient condition for the space of special Hamiltonian forms to define a subalgebra. This condition will be manually checked in the examples treated at the end of the paper.

\begin{theorem}
\label{thm_sufficient_condition_subalgebra}
    Let $(\tau: M \longrightarrow X, S^a, \sharp_a)$ be a fibered graded Dirac manifold. Suppose that for each special Hamiltonian form $\alpha \in \widetilde \Omega^{a}_H(M)$ there is a multivector field $U_\alpha \in \mathfrak{X}^{n-a}(M)$ such that $\sharp_{a+b+1}(\dd \alpha \wedge \varepsilon) = C(\alpha, \varepsilon)\iota_{\varepsilon} U_\alpha + K_{n - (a+b)}$, for each closed and basic $b$-form $\varepsilon$ and for certain nonzero constant $C(\alpha, \varepsilon)$. Further suppose that $\sharp_{a+1}(\dd \alpha) = U_\alpha + K_{n-a}$. Then, special Hamiltonian forms define a subalgebra.
\end{theorem}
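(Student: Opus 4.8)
The plan is to show that for special Hamiltonian forms $\alpha\in\widetilde\Omega^a_H(M)$ and $\beta\in\widetilde\Omega^b_H(M)$ the bracket $\gamma:=\{\alpha,\beta\}$ is again special. Since the bracket of Hamiltonian forms is Hamiltonian, we already know $\gamma\in\Omega^c_H(M)$ with $c=a+b-(n-1)$ (and $\gamma=0$, hence trivially special, when $c<0$), so the whole content is to verify $\gamma\wedge\varepsilon\in\Omega^{n-1}_H(M)$, i.e. $\dd\gamma\wedge\varepsilon\in S^n$, for every closed basic $(n-1-c)$-form $\varepsilon$. First I would record a reduction: because $S^n$ is a subbundle, the condition ``$\dd\gamma\wedge\varepsilon\in S^n$'' is $C^\infty(M)$-linear in $\varepsilon$, and the constant-coefficient forms $\tau^\ast\dd x^{I}$ generate all basic (indeed all semi-basic) forms over $C^\infty(M)$; hence it suffices to treat $\varepsilon=\varepsilon_1\wedge\varepsilon_2$ with $\varepsilon_1=\tau^\ast\dd x^{I_1}$, $\varepsilon_2=\tau^\ast\dd x^{I_2}$ closed basic of degrees $n-1-a$ and $n-1-b$ (note $\deg\varepsilon_1+\deg\varepsilon_2=2n-2-a-b=n-1-c$, so these splittings exhaust the needed test forms). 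The point of this splitting is that $\alpha\wedge\varepsilon_1$ and $\beta\wedge\varepsilon_2$ are honest Hamiltonian $(n-1)$-forms, precisely because $\alpha$ and $\beta$ are special.

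The heart of the argument is then the identity, up to a closed form and a nonzero constant $C$,
\[
\{\alpha,\beta\}\wedge\varepsilon_1\wedge\varepsilon_2 \;=\; C\,\{\alpha\wedge\varepsilon_1,\ \beta\wedge\varepsilon_2\}\,,
\]
equivalently its differentiated form $\dd\gamma\wedge\varepsilon = C\,\dd\{\alpha\wedge\varepsilon_1,\beta\wedge\varepsilon_2\}$. This finishes the proof at once: the right-hand side is the bracket of two Hamiltonian $(n-1)$-forms, so it lies in $\Omega^{n-1}_H(M)$ and its differential lies in $S^n$; thus $\dd(\gamma\wedge\varepsilon)=\dd\gamma\wedge\varepsilon\in S^n$, and $\gamma\wedge\varepsilon$ is Hamiltonian.

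To establish the identity I would expand both sides through Eq.~\eqref{Poisson_bracket_definition}, using the two hypotheses to route everything through the representative multivectors. Writing $\sharp_{b+1}(\dd\beta)=U_\beta+K_{n-b}$ (legitimate by the assumption $\sharp_{b+1}(\dd\beta)=U_\beta+K_{n-b}$) gives $\{\alpha,\beta\}=(-1)^{\deg_H\beta}\iota_{U_\beta}\dd\alpha$, while on the right, since $\alpha\wedge\varepsilon_1,\beta\wedge\varepsilon_2$ are $(n-1)$-forms,
\[
\{\alpha\wedge\varepsilon_1,\beta\wedge\varepsilon_2\}=\iota_{\sharp_n(\dd\beta\wedge\varepsilon_2)}(\dd\alpha\wedge\varepsilon_1)\,,
\]
and the second hypothesis evaluates $\sharp_n(\dd\beta\wedge\varepsilon_2)=C(\beta,\varepsilon_2)\,\iota_{\varepsilon_2}U_\beta+K_1$. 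Both expressions are therefore assembled from the single multivector $U_\beta$ contracted against $\dd\alpha$ and wedged with $\varepsilon_1,\varepsilon_2$; here one uses that $\dd\alpha\wedge\varepsilon_1\in S^n$ (so the $K_1$-ambiguity is annihilated) and that $\varepsilon_1,\varepsilon_2$ are basic and closed, so that the terms in which factors of $U_\beta$ are contracted into $\varepsilon_1\wedge\varepsilon_2$ reorganize via the naturality of the Schouten--Nijenhuis bracket under contraction by closed forms, leaving exactly the stated multiple.

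The main obstacle is precisely this last step. Since the interior product and Lie derivative along a multivector of order $n-b>1$ are \emph{not} graded derivations, relating $(\iota_{U_\beta}\dd\alpha)\wedge\varepsilon_1\wedge\varepsilon_2$ to $\iota_{\iota_{\varepsilon_2}U_\beta}(\dd\alpha\wedge\varepsilon_1)$ requires the full coproduct (Leibniz) expansion of $\iota_{U_\beta}$, and the cross terms must be cancelled using the fibered structure (basic-ness of the $\varepsilon_i$, annihilation of horizontal directions) together with $\dd\varepsilon_i=0$; I would carry this out first for decomposable $U_\beta$ and extend by bilinearity, tracking signs and reading off the constant $C$, which the hypothesis guarantees to be nonzero. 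Once the identity holds for the generators $\varepsilon=\tau^\ast\dd x^{I_1}\wedge\tau^\ast\dd x^{I_2}$, summing over splittings together with the $C^\infty(M)$-linearity of the membership condition and the locality of the Hamiltonian property upgrades the statement to all closed basic $\varepsilon$, yielding that $\{\alpha,\beta\}$ is special and hence that $\widetilde\Omega^\bullet_H(M)$ is a subalgebra.
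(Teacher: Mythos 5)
Your overall strategy (reduce to decomposable closed basic test forms, then route $\{\alpha,\beta\}\wedge\varepsilon$ through brackets of the Hamiltonian $(n-1)$-forms $\alpha\wedge\varepsilon_1$ and $\beta\wedge\varepsilon_2$) is the right one, and your reduction step and your invocation of the two hypotheses are fine. The gap is the key identity $\{\alpha,\beta\}\wedge\varepsilon_1\wedge\varepsilon_2 = C\,\{\alpha\wedge\varepsilon_1,\beta\wedge\varepsilon_2\}$ up to a closed form: it is not true in general, and the mechanism you propose for it (cancellation of the coproduct cross terms using basicness of the $\varepsilon_i$, $\dd\varepsilon_i=0$ and the fibered structure) does not occur. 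When you expand the coproduct to compare $(\iota_{U_\beta}\dd\alpha)\wedge\varepsilon_1\wedge\varepsilon_2$ with $\iota_{\iota_{\varepsilon_2}U_\beta}(\dd\alpha\wedge\varepsilon_1)$, the surviving terms are those in which the factors of $U_\beta$ and the basic $1$-form factors of $\varepsilon_1\wedge\varepsilon_2$ are distributed in \emph{all} possible ways between the two arguments; via the hypothesis $\sharp_{b+b'+1}(\dd\beta\wedge\varepsilon')=C(\beta,\varepsilon')\iota_{\varepsilon'}U_\beta+K$, these cross terms are (nonzero multiples of) the other brackets $\{\alpha\wedge\varepsilon^{(S)},\beta\wedge\varepsilon^{(T)}\}$, which are neither zero nor closed in general. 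Already in the extreme case $a=n-1$ one has
\[
(\iota_{U_\beta}\dd\alpha)\wedge\varepsilon \;=\; \iota_{U_\beta}(\dd\alpha\wedge\varepsilon)\;\mp\;\iota_{\iota_{\varepsilon}U_\beta}\dd\alpha\,,
\]
and the term $\iota_{U_\beta}(\dd\alpha\wedge\varepsilon)$ does not vanish (the form $\dd\alpha\in S^n$ is only $(n-1)$-horizontal, so wedging with a basic $\varepsilon$ and contracting with a $U_\beta$ having horizontal factors produces a genuinely new, generally non-closed, contribution); so the two quantities you want to identify differ by something that is not closed.

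The fix is that no cancellation is needed: each cross term is itself a bracket of two \emph{ordinary} Hamiltonian forms (the intermediate wedges $\alpha\wedge\varepsilon^{(S)}$, $\beta\wedge\varepsilon^{(T)}$ of degree $\le n-1$ are Hamiltonian precisely because $\alpha$ and $\beta$ are special), hence is itself Hamiltonian by closure of the bracket --- and Hamiltonicity of $\{\alpha,\beta\}\wedge\varepsilon$ is all the conclusion requires. This is exactly how the paper argues, and it does so one basic $1$-form at a time so that the coproduct expansion has only two terms: for a closed basic $1$-form $\varepsilon$ one gets $\{\alpha\wedge\varepsilon,\beta\}=\{\alpha,\beta\}\wedge\varepsilon\mp C(\beta,\varepsilon)\{\alpha,\beta\wedge\varepsilon\}$, so $\{\alpha,\beta\}\wedge\varepsilon$ is a difference of brackets of Hamiltonian forms and therefore Hamiltonian; iterating over the $1$-form factors of $\varepsilon$ finishes the proof. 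If you replace your step ``the cross terms must be cancelled'' by ``the cross terms are brackets of Hamiltonian forms, hence Hamiltonian and harmless'', your argument becomes correct and essentially coincides with the paper's.
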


\begin{proof} Let $\alpha \in \widetilde \Omega^a_H(M)$ and $\beta \in \widetilde \Omega^b_H(M)$ be special Hamiltonian forms. Let us show that $\{\alpha, \beta\} \wedge \varepsilon$ takes values in $\Omega^{n-1}_H(M)$, for every closed and basic $\varepsilon$ of suitable degree. First notice that for a closed and basic $1$-form we have the following:
\begin{align*}
    \{\alpha \wedge \varepsilon, \beta\} &= (-1)^{\deg_H \beta} \iota_{\sharp_{b+1}(\dd \beta)} (\dd \alpha \wedge \varepsilon)\\
    &= (-1)^{\deg_H \beta} \left( \iota_{\sharp_{b+1}(\dd \beta)}  \dd \alpha\right) \wedge \varepsilon + (-1)^{\deg_H \beta + \deg \alpha} \iota_{\iota_{\varepsilon} \sharp_{b+1} (\dd \beta)} \dd \alpha
    \\&= \{\alpha, \beta\} \wedge \varepsilon  + (-1)^{\deg \alpha} C(\beta, \varepsilon)  \iota_{\sharp_{b+2}(\dd \beta \wedge \varepsilon)} \dd \alpha \\
    &= \{\alpha, \beta\} \wedge \varepsilon - (-1)^{\deg_H \beta + \deg \alpha}C(\beta, \varepsilon) \{\alpha, \beta \wedge \varepsilon\}\,.
\end{align*}
Notice that, since both $\alpha \wedge \varepsilon$ and $\beta \wedge \varepsilon$ are Hamiltonian, and $C(\beta, \varepsilon)$ is nonzero, so is $\{\alpha, \beta\}\wedge \varepsilon$. Iterating this argument and using that both $\alpha \wedge \varepsilon_1 \wedge \cdots \wedge \varepsilon_j$ and $\alpha \wedge \varepsilon_1 \wedge \cdots \wedge \varepsilon_j$ are Hamiltonian, we conclude that $\{\alpha,\beta\} \wedge \varepsilon$ is Hamiltonian, for arbitrary closed and basic $\varepsilon$, which shows that special Hamiltonian forms define a subalgebra.
\end{proof}

\begin{remark} This sufficient condition is easy to apply in practice, as we will show in Section \ref{section:Applications}. Roughly speaking, we need to find a multivector field associated to each form that recovers the multivector fields through interior product (up to kernel) of all possible exterior products with closed and basic forms.
\end{remark}

\section{Applications to classical field theories}
\label{section:Applications}

Here we discuss to several applications of the theory developed in the previous sections: the study of conserved quantities for regular and almost regular Lagrangians.

\subsection{Hamiltonian formalism for regular field theories}

Let us go through the construction of bracket extensions in the case of the reduced covariant formalism of classical field theories explained in Example \ref{example:multisymplectic_field_theory}. Recall that the graded Dirac structure is given by the map
\[
 \sharp_n(\dd^n x) = 0\,, \quad \sharp_n(\dd y^i \wedge \dd^{n-1}x_\mu) = -\pdv{p^\mu_i}\,, \quad
    \sharp_n(\dd p^\mu_i \wedge \dd^{n-1}x_\mu) =  \pdv{y^i}\,.
\]
The $\sharp_1$ (modulo $K_n$) map is given by 
\[
\sharp_1 (\dd x^\mu) = K_n\,, \quad \sharp_1 (\dd y^i) = \frac{(-1)^{n}}{n} \pdv{p^\mu_i} \wedge \pdv{^{n-1}}{^{n-1}x_\mu} + K_n \,, \quad \sharp_1(\dd p^\mu_i) = (-1)^{n-1} \pdv{y^i} \wedge \pdv{^{n-1}}{^{n-1}x_\mu} + K_n\,,
\]
and $\widetilde \sharp_1$ is obtained applying the anti-derivation rule, as mentioned in Remark \ref{remark:anti_derivation}. Let us determine the space $S^{n+1}[n]$, which will be the space on which the exterior derivative of Hamiltonians may take values. 
\begin{theorem} $S^{n+1}[n]$ is generated by the following set of forms
\[
\dd y^i \wedge \dd^{n} x \,, \quad \dd p^\mu_i \wedge \dd^n x\,, \quad \dd p^\mu_i \wedge \dd y^j \wedge \dd ^{n-1}x_\mu\,.
\]
\end{theorem}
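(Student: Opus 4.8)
The plan is to reduce the membership condition defining $S^{n+1}[n]$ to a pointwise problem in linear algebra and then solve it by a direct computation in the adapted coordinates. First I would record the structural simplifications available here: since $\sharp_1$ is defined on each of $\dd x^\mu,\dd y^i,\dd p^\mu_i$, one has $S^1=\T^\ast M$, so $(S^1)^{\wedge(n+1)}=\bigwedge^{n+1}M$ and $K_1=(S^1)^{\circ,1}=0$; consequently the target $\bigvee_1 M/K_1$ of $\widetilde\sharp_n$ is just $\T M$, so that $\widetilde\sharp_n(\Theta)$ is an ordinary $(1,1)$-tensor field. By the definition of $S^{n+1}[n]$, an $(n+1)$-form $\Theta$ belongs to it precisely when there is a $(1,1)$-tensor $T$ with $\iota_T\alpha=\iota_{\widetilde\sharp_1(\Theta)}\alpha$ for every $\alpha\in S^n$, where both sides are $n$-forms (the left via $\iota_{\theta\otimes v}\alpha=\theta\wedge\iota_v\alpha$, the right via $\iota_{\eta\otimes U}\alpha=(\iota_U\alpha)\,\eta$). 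Thus the statement becomes the question of determining those $\Theta$ for which the linear map $S^n\to\bigwedge^n M$, $\alpha\mapsto\iota_{\widetilde\sharp_1(\Theta)}\alpha$, is realised by a single-leg-contraction derivation $\iota_T$.

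To make this explicit I would compute $\widetilde\sharp_1(\Theta)$ through the anti-derivation formula of Remark \ref{remark:anti_derivation}, using $\sharp_1(\dd x^\mu)=0$ together with the given values of $\sharp_1$ on $\dd y^i$ and $\dd p^\mu_i$. Since $\sharp_1$ annihilates $\dd x^\mu$, only the vertical legs ($\dd y^i$, $\dd p^\mu_i$) of a monomial in $\Theta$ contribute, and every surviving term of $\widetilde\sharp_1(\Theta)$ carries an $n$-vector of the form $(\text{vertical }\partial)\wedge\frac{\partial^{n-1}}{\partial^{n-1}x_\mu}$, i.e. an $(n-1)$-horizontal block. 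The inclusion $\supseteq$ is then obtained by exhibiting $T$ for each listed form; this is essentially the content of Example \ref{ex:multisymplectic_Hamiltonian}, where the combination $\frac{\partial H}{\partial y^i}\,\dd y^i\wedge\dd^n x+\frac{\partial H}{\partial p^\mu_i}\,\dd p^\mu_i\wedge\dd^n x-\dd p^\mu_i\wedge\dd y^i\wedge\dd^{n-1}x_\mu$ is shown to admit the compression $-\widetilde\sharp_1(\dd\mathcal H)+\mathds{1}_n$, so that reading off the coefficients of $\dd y^i\wedge\dd^n x$, of $\dd p^\mu_i\wedge\dd^n x$, and of $\dd p^\mu_i\wedge\dd y^j\wedge\dd^{n-1}x_\mu$ exhibits the required tensor for each family separately.

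For the reverse inclusion $\subseteq$ I would expand an arbitrary $(n+1)$-form in the adapted monomial basis, grouped by the number and type of its vertical legs (note $X$ is $n$-dimensional, so at most $n$ of the $\dd x^\mu$ occur and every monomial has at least one vertical leg), compute the map $\alpha\mapsto\iota_{\widetilde\sharp_1(\Theta)}\alpha$ on the three generator types $\dd^n x$, $\dd y^i\wedge\dd^{n-1}x_\mu$, $\dd p^\mu_i\wedge\dd^{n-1}x_\mu$ of $S^n$, and impose that the resulting $n$-forms coincide with $\iota_T\alpha$ for one fixed $(1,1)$-tensor. Equivalently, by Proposition \ref{prop:relation_among_extensions}, one may test whether $\widetilde\sharp_1(\Theta)$ lies in the image of $T\mapsto T\wedge\mathds{1}_{n-1}$, which by Lemma \ref{lemma1} and Lemma \ref{lemma:wedge_of_identities} is exactly the compatibility relation $\widetilde\sharp_1=\widetilde\sharp_n\wedge\mathds{1}_{n-1}$. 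Either route produces a homogeneous linear system on the components of $\Theta$ whose solution space I claim is spanned by the three families.

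The main obstacle is precisely this matching computation: the target $\iota_{\widetilde\sharp_1(\Theta)}\alpha$ ranges over all of $\bigwedge^n M$, whereas the values of $\iota_T$ on $S^n$ are severely constrained because $\iota_T$ contracts one leg at a time. One must show that monomials with three or more vertical legs, together with the \emph{unbalanced} two-vertical monomials — $\dd y^i\wedge\dd y^j\wedge\dd^{n-1}x_\mu$, $\dd p^\mu_i\wedge\dd p^\nu_j\wedge\dd^{n-1}x_\lambda$, and the part of $\dd p^\nu_j\wedge\dd y^i\wedge\dd^{n-1}x_\mu$ in which the momentum index $\nu$ is not summed against the removed horizontal direction $\mu$ — generate contraction patterns against $\sharp_n(\dd y^i\wedge\dd^{n-1}x_\mu)=-\tfrac{\partial}{\partial p^\mu_i}$ and $\sharp_n(\dd p^\mu_i\wedge\dd^{n-1}x_\mu)=\tfrac{\partial}{\partial y^i}$ that no single $(1,1)$-tensor can reproduce, forcing their coefficients to vanish. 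The balanced combination $\sum_\mu\dd p^\mu_i\wedge\dd y^j\wedge\dd^{n-1}x_\mu$ survives because the contraction of the $n$-vector $\tfrac{\partial}{\partial y^j}\wedge\frac{\partial^{n-1}}{\partial^{n-1}x_\mu}$ collapses to a single vertical direction matching $\iota_T$; careful bookkeeping of the alternating signs and of this index summation is where the real work lies, after which comparison of coefficients yields exactly the stated generating set.
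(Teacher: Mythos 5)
Your proposal is correct and follows essentially the same route as the paper: both reduce membership in $S^{n+1}[n]$ to the pointwise question of whether the contraction map $\alpha \mapsto \iota_{\widetilde\sharp_1(\Theta)}\alpha$ on $S^n$ can be realised by a single $(1,1)$-tensor, exhibit explicit tensors for the three listed families, and rule out the remaining monomials (those with too many vertical legs, the two unbalanced two-vertical types, and the traceless part of $\dd p^\nu_j\wedge \dd y^i\wedge \dd^{n-1}x_\mu$) because their contraction patterns against $\sharp_n(S^n)$ cannot be matched. The only cosmetic difference is that you phrase the test through $\widetilde\sharp_1(\Theta)$ and the factorisation $\widetilde\sharp_1=\widetilde\sharp_n\wedge\mathds{1}_{n-1}$, whereas the paper tabulates the dual quantity $\iota_{\sharp_n(\alpha)}\Theta$ against the generators of $S^n$; these are equivalent by the defining property of $\widetilde\sharp_1$, though note that the cited Example only computes $\widetilde\sharp_1(\dd\mathcal{H})$, so the explicit $(1,1)$-tensors for the three generators must still be written down as in the paper.
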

\begin{proof} Recall that $S^{n+1}[n]$ is defined as the space of forms $\Theta \in \bigwedge^{n+1} M$ for which there exists $\widetilde \sharp_n(\Theta) \in \T^\ast M \otimes \T M$ with $\iota_{\widetilde \sharp_n(\Theta)} \alpha = \iota_{\widetilde \sharp_1(\Theta)} \alpha =  \iota_{\widetilde \sharp_n(\alpha)} \Theta$, for every $\alpha \in S^n$. Notice that $\alpha$ is $(n-1)$-horizontal, so that $\iota_{\widetilde \sharp_1(\Theta)} \alpha$ is $(n-2)$-horizontal. Therefore, $\iota_{\widetilde \sharp_n(\alpha)} \Theta$ is $(n-2)$-horizontal and, since $\sharp_n(\alpha)$ is vertical, $\Theta$ must be $(n-2)$-horizontal. In Table \ref{table:contractions}, the reader may find represented all possible values of $\iota_{\sharp_n(\alpha)} \Theta$ for all possible candidates $\Theta \in \bigwedge^{n+1}M$:

\setlength{\tabcolsep}{2pt} 

\begin{table}[ht]
\caption{Values of $\iota_{\sharp_n(\alpha)} \Theta$, for $(n+1)$-forms $\Theta$ (on the left) and forms $\alpha \in S^n$ (on top)}
\centering 
\renewcommand{\arraystretch}{2}
  \begin{tabular}{c|ccc}
    & $\dd y^i \wedge \dd^{n-1}x_\mu$& $\dd p^\mu_i \wedge \dd^{n-1}x_\mu$  & $\dd^n x$\\
    \hline
    $ \dd y^j \wedge \dd^n x$ & $0$ & $\delta^j_i \dd^n x$ & $ 0$\\
     \hline
    $\dd p^\alpha_j \wedge \dd^n x$ & $\delta^\alpha_\mu \delta^i_j \dd^n x$ & $0$ & $0$ \\
     \hline
     $\dd y^j \wedge \dd p^\alpha_k \wedge \dd^{n-1}x_\beta$ & $\delta^\alpha_\mu \delta^i_k \dd y^j \wedge \dd^{n-1}x _\beta$& $\delta^j_i \dd p^\alpha_k \wedge \dd^{n-1}x_\beta$ & $0$ \\
     \hline
     $\dd y^j\wedge \dd y^k \wedge \dd^{n-1}x_\alpha $ & $0$ & $\delta^j_i \dd y^k \wedge \dd^{n-1}x_\alpha - \delta^k_i \dd y^j \wedge \dd^{n-1}x_\alpha$ & $0$\\
     \hline
     $\dd p^\alpha_j \wedge \dd p^\beta_k \wedge \dd^{n-1}x_\gamma$ &
     $\delta^\alpha_\mu \delta^i_j \dd p^\beta _k \wedge \dd^{n-1}x_\gamma - \delta^\beta_\mu \delta^i_k \dd p^k_j \wedge \dd^{n-1}x_\gamma$ & $0$ & $0$
     
  \end{tabular}
  \label{table:contractions}
\end{table}
\noindent An elementary computation shows that any linear combination of the forms on the left, $\Theta$, for which there is a $(1,1)$-tensor, $\widetilde \sharp_n(\Theta)$ satisfying $\iota_{\widetilde \sharp_n(\Theta)} \alpha = \iota_{\sharp_n(\alpha)} \Theta$, must have null coefficients on the last two elements ($\dd y^j\wedge \dd y^k \wedge \dd^{n-1}x_\alpha $ and $\dd p^\alpha_j \wedge \dd p^\beta_k \wedge \dd^{n-1}x_\gamma$). Furthermore, the first two elements are clearly in $S^{n+1}[n]$, as we may choose 
\[
\widetilde \sharp_n(\dd y^j \wedge \dd^n x) := \frac{1}{n} \dd x^\mu \otimes \pdv{p^\mu_i}\,, \quad \widetilde \sharp_n(\dd p^\alpha_j \wedge \dd^n x):= \dd x^\alpha \otimes \pdv{y^j}\,.
\]
Hence, it only remains to study which possible linear combinations $\Theta = A^{k \beta}_{j \alpha} \dd y^j \wedge \dd p^\alpha_k \wedge \dd^{n-1}x_\beta$ belong to $S ^{n+1}[n]$. We look for a $(1,1)$-tensor field
\begin{align*}
    \widetilde \sharp_n(\Theta) = \,&  \dd x^\mu \otimes \left(C^\nu_\mu \pdv{x^\nu} + D^j_\mu\pdv{y^j} + E^\nu_{\mu j} \pdv{p^\nu_j}\right)+\\
    & \dd y^i \otimes \left( C^\nu_i \pdv{x^\nu} + D^j_i\pdv{y^j} + E^\nu_{i j} \pdv{p^\nu_j}\right)+\\
    &  \dd p^\mu_i \otimes \left( C^{i\nu}_\mu \pdv{x^\nu} + D^{ij}_\mu\pdv{y^j} + E^{i\nu}_{\mu j} \pdv{p^\nu_j}\right)
\end{align*}
satisfying $\iota_{\widetilde \sharp_1(\Theta)} \alpha = \iota_{\sharp_n(\alpha)} \Theta$, for every $\alpha \in S^n$. We have:
\begin{enumerate}[\rm (i)]
    \item {For $\alpha = \dd^n x$}, we have $\iota_{\sharp_n(\alpha)} \Theta = 0$, which implies $C^\mu_\mu = C^\nu_i = C^{i \nu}_{\mu} = 0$.
    \item {For $\alpha = \dd y^i \wedge \dd^{n-1}x_\mu$}, we have $\iota_{\sharp_n(\alpha)} \Theta = - A^{i \beta}_{j \mu} \dd y^j \wedge \dd^{n-1}x_\mu$, which implies that $C^\mu_\nu = 0$ and 
    \[
    - A^{i \beta}_{j\mu} \dd y^j \wedge \dd ^{n-1}x_\beta = D^i_\mu \dd^n x + D^i_j \dd y^j \wedge \dd ^{n-1}x_\mu + D^{ji}_\nu \dd p^\nu_j \wedge \dd^{n-1}x_\mu\,,
    \]
    from which we obtain $D^i_\mu = 0$, $D^{ji}_\nu = 0$ and $- A^{i \beta}_{j \mu} = \delta^\beta_\mu D^{i}_j$ and hence $\Theta = - D^k_j \dd y^j \wedge \dd p^\alpha_k \wedge \dd^{n-1}x_\alpha$.
\end{enumerate}
 It only remains to show that we can define $\widetilde \sharp_n(\Theta)$ for $\Theta = \dd p^\mu_i \wedge \dd y^j \wedge \dd^{n-1}x_\mu$. Indeed, notice that we may choose 
 \[\widetilde \sharp_n(\Theta) := \dd y^j \otimes \pdv{y^k} + \dd p^\mu_k \otimes \pdv{p^\mu_k}\,,\]
 finishing the proof.
\end{proof}

Now let us look the possible expression of a Hamiltonian (Definition \ref{def:Hamiltonian}).

\begin{theorem} Let $\mathcal{H} \in \Omega^{n}_H(M)[n]$ be a Hamiltonian. Then, up to a closed term it has the following local expression
\[\mathcal{H} = H \dd^n x  - p^\mu_i  \dd y^i \wedge \dd^{n-1}x_\mu\,,\]
for certain function $H(x^\mu, y^i, p^\mu_i)$.
\end{theorem}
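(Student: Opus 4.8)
The plan is to extract the stated form of $\mathcal{H}$ from the structure of $\dd\mathcal{H}$ and then reconstruct $\mathcal{H}$ itself by fibrewise integration, which accounts for the ``up to a closed term'' in the statement. Since $\mathcal{H}\in\Omega^n_H(M)[n]$ we have $\dd\mathcal{H}\in S^{n+1}[n]$, so by the theorem just proved there are functions $a_i,b^\mu_i,c^j_i$ on $M$ with
\[
\dd\mathcal{H} = a_i\,\dd y^i\wedge\dd^n x + b^\mu_i\,\dd p^\mu_i\wedge\dd^n x + c^j_i\,\dd p^\mu_i\wedge\dd y^j\wedge\dd^{n-1}x_\mu ,
\]
the index $\mu$ in the last summand being summed together with $\dd^{n-1}x_\mu$. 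I would first compute $\widetilde\sharp_1(\dd\mathcal{H})$ using the anti-derivation rule of Remark \ref{remark:anti_derivation} together with the explicit values of $\sharp_1$ recalled at the start of this subsection. Exactly as in Example \ref{ex:multisymplectic_Hamiltonian}, the first two summands contribute terms of the shape $\dd^n x\otimes(\cdots)$, which are semi-basic, whereas the third summand contributes terms whose form-parts are $\dd y^j\wedge\dd^{n-1}x_\mu$ and $\dd p^\mu_i\wedge\dd^{n-1}x_\mu$, i.e. genuinely $1$-vertical.

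The defining property of a Hamiltonian (Definition \ref{def:Hamiltonian}) requires $-\widetilde\sharp_1(\dd\mathcal{H})+\mathds{1}_n$ to be semi-basic. Since the only non-semi-basic contribution to $\widetilde\sharp_1(\dd\mathcal{H})$ comes from the third summand and depends linearly on the matrix $c^j_i$, while the non-semi-basic part of $\mathds{1}_n$ is fixed (its $1$-vertical, $(n-1)$-horizontal component modulo $K_n$), semi-basicness becomes a linear equation on $c^j_i$ with a unique solution. Matching coefficients as in the computation carried out in Example \ref{ex:multisymplectic_Hamiltonian} forces $c^j_i=-\delta^j_i$, so that the third summand equals
\[
-\,\dd p^\mu_i\wedge\dd y^i\wedge\dd^{n-1}x_\mu = \dd\!\left(-p^\mu_i\,\dd y^i\wedge\dd^{n-1}x_\mu\right).
\]

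It then remains to integrate the first two summands. Setting $\eta:=a_i\,\dd y^i+b^\mu_i\,\dd p^\mu_i$, the previous step gives $\dd\mathcal{H}=\eta\wedge\dd^n x+\dd\!\left(-p^\mu_i\,\dd y^i\wedge\dd^{n-1}x_\mu\right)$. Because $\dd\mathcal{H}$ is automatically closed and the second term is exact, we obtain $\dd\eta\wedge\dd^n x=0$; only the $\dd y$- and $\dd p$-legs of $\dd\eta$ survive the wedge with $\dd^n x$, so this says precisely that the fibrewise exterior derivative of $\eta$ vanishes, i.e. $\pdv{a_i}{y^j}$ and $\pdv{b^\mu_i}{p^\nu_j}$ are symmetric and $\pdv{a_i}{p^\nu_j}=\pdv{b^\nu_j}{y^i}$. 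By the Poincaré lemma along the contractible fibres there is, locally, a function $H(x^\mu,y^i,p^\mu_i)$ with $a_i=\pdv{H}{y^i}$ and $b^\mu_i=\pdv{H}{p^\mu_i}$, whence $\eta\wedge\dd^n x=\dd H\wedge\dd^n x=\dd(H\,\dd^n x)$. Consequently
\[
\dd\!\left(\mathcal{H}-H\,\dd^n x+p^\mu_i\,\dd y^i\wedge\dd^{n-1}x_\mu\right)=0 ,
\]
which is exactly the claimed expression modulo a closed form.

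The main obstacle is the middle step: deducing $c^j_i=-\delta^j_i$ from the semi-basicness condition. This requires handling $\widetilde\sharp_1$ and $\mathds{1}_n$ modulo $K_n$ carefully and keeping track of the combinatorial factor $\tfrac1n$ appearing in $\sharp_1(\dd y^i)$, so that the $1$-vertical parts of $-\widetilde\sharp_1(\dd\mathcal{H})$ cancel exactly against those of $\mathds{1}_n$; this is in essence the content of the explicit computation already performed in Example \ref{ex:multisymplectic_Hamiltonian}, which I would reuse. By contrast, the final integration step is routine once the coefficient has been pinned down.
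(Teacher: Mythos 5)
Your proposal is correct and follows essentially the same route as the paper's proof: it combines the characterization of $S^{n+1}[n]$ from the preceding theorem with the semi-basicness requirement on $-\widetilde\sharp_1(\dd\mathcal{H})+\mathds{1}_n$ to pin down $\dd\mathcal{H}$ (in particular the coefficient $-\delta^j_i$ on the mixed term), and then uses closedness of $\dd\mathcal{H}$ and a fibrewise Poincar\'e lemma to produce $H$. You merely spell out the steps the paper compresses into ``a quick computation,'' so there is nothing to change.
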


\begin{proof} Recall that $\widetilde \sharp_1(\dd \mathcal{H}) + \mathds{1}_n$ has to be a semi basic form. A quick computation shows that this implies
\[
\dd \mathcal{H} = A_i \dd y^i \wedge \dd^nx + B^i_\mu \dd p^\mu_i \wedge \dd ^n x - \dd p^\mu_i \wedge \dd y^i \wedge \dd^{n-1}x_\mu\,,
\]
for certain functions $A_i$ and $ B^i_\mu$. Closedness of the previous form implies that there is a function $H$ such that $A_i = \pdv{H}{y^i}$ and $B^i_\mu = \pdv{H}{p^\mu_i}$, which finishes the proof.
\end{proof}

Finally, let us study the algebra of special Hamiltonian forms:

\begin{theorem}
\label{thm_ideal}
With the canonical graded Poisson structure on $M = \bigwedge^n_2 Y \big / \bigwedge^n_1, Y$ special Hamiltonian forms close an subalgebra and:
\begin{enumerate}[\rm(i)]
    \item On degree $a = 0, \dots, n-2$, if $\alpha \in \widetilde \Omega^a_H( M)$, then
    \[
    \dd \alpha \in \langle \dd^a x_{\widetilde \mu}\,,  \dd y^i \wedge \dd^{a-1} x_{\widetilde \nu}\, \colon  |\widetilde \mu| = n - a\,, |\widetilde \nu| = n - a - 1\rangle\,.
    \]
    \item On degree $a = n-1$, $\widetilde \Omega^{n-1}_H(M) = \Omega^{n-1}_H(M)$.
\end{enumerate}
\end{theorem}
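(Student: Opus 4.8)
The plan is to reduce every assertion to the explicit coordinate description of the structure established in the preceding theorems, treating the three claims in increasing order of difficulty. Assertion (ii) is essentially formal: a basic closed $0$-form is locally constant (if $\varepsilon=\tau^\ast g$ is basic and closed then $g$ is locally constant), so $\alpha\wedge\varepsilon=\varepsilon\,\alpha$ is Hamiltonian exactly when $\alpha$ is. Hence the defining condition of a special form is vacuous in top Hamiltonian degree and $\widetilde\Omega^{n-1}_H(M)=\Omega^{n-1}_H(M)$.

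The heart of the matter is assertion (i). First I would make explicit the subbundle $S^{a+1}$ in which $\dd\alpha$ lives. Iterating the contraction $S^{k-1}=\iota_{\T M}S^{k}$ starting from $S^n=\langle \dd^n x,\ \dd y^i\wedge\dd^{n-1}x_\mu,\ \dd p^\mu_i\wedge\dd^{n-1}x_\mu\rangle$ shows that $S^{a+1}$ is spanned by three families: the semi-basic forms $\dd^{a+1}x_{\widetilde\mu}$, the $y$-type forms $\dd y^i\wedge\dd^{a}x_{\widetilde\nu}$, and the $p$-type forms, which occur only as the \emph{trace} combinations $\sum_\mu \dd p^\mu_i\wedge\dd^{a}x_{\mu\widetilde\sigma}$. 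Accordingly I write $\dd\alpha=A+B+C$ for the horizontal, $y$-type and $p$-trace components. Because $\varepsilon$ is closed, $\dd(\alpha\wedge\varepsilon)=\dd\alpha\wedge\varepsilon$, so the special condition reads $(A+B+C)\wedge\varepsilon\in S^n$ for every closed basic $(n-1-a)$-form $\varepsilon$, and it suffices to test against the constant-coefficient generators $\varepsilon=\dd x^{\widetilde\rho}$.

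Then I would examine each piece. The form $A\wedge\varepsilon$ is horizontal of degree $n$, hence a multiple of $\dd^n x\in S^n$, and $B\wedge\varepsilon$ is a combination of $\dd y^i\wedge\dd^{n-1}x_\sigma\in S^n$; both survive automatically. The decisive point is $C\wedge\varepsilon$: wedging a trace $\sum_\mu \dd p^\mu_i\wedge\dd^a x_{\mu\widetilde\sigma}$ with $\dd x^{\widetilde\rho}$ produces terms $\dd p^\mu_i\wedge\dd^{n-1}x_\lambda$ in which the upper index $\mu$ and the surviving contracted index $\lambda$ no longer coincide, i.e. off-diagonal terms. Since the only $\dd p$-forms contained in $S^n$ are the diagonal traces $\sum_\nu \dd p^\nu_i\wedge\dd^{n-1}x_\nu$, letting $\widetilde\rho$ range over all index sets forces every off-diagonal contribution, and ultimately every coefficient of $C$, to vanish. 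Thus $C=0$ and $\dd\alpha\in\langle \dd^{a+1}x_{\widetilde\mu},\ \dd y^i\wedge\dd^{a}x_{\widetilde\nu}\rangle$, i.e. $\dd\alpha$ has no $\dd p$-component, which is the stated conclusion. I expect this index bookkeeping — tracking signs and exactly which off-diagonal terms appear as $\widetilde\rho$ varies — to be the main obstacle, since it is where the trace-only structure of $S^n$ genuinely obstructs survival.

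Finally, for the subalgebra statement I would verify the hypotheses of Theorem \ref{thm_sufficient_condition_subalgebra}. By part (i), for a special $\alpha$ the differential $\dd\alpha$ has no $\dd p$-leg, so using $\sharp_n(\dd^n x)=0$ together with the defining relation $\sharp_a(\iota_U\beta)=\sharp_n(\beta)\wedge U+K$ from Remark \ref{remark:structure_at_n_determines_structure}, the horizontal part of $\dd\alpha$ contributes $0$ while each $y$-type generator $\dd y^i\wedge\dd^{a}x_{\widetilde\nu}$ contributes $-\pdv{p^\mu_i}$ wedged with the coordinate multivector indexed by $\widetilde\nu\setminus\mu$. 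This yields a well-defined multivector field $U_\alpha\in\mathfrak{X}^{n-a}(M)$ with $\sharp_{a+1}(\dd\alpha)=U_\alpha+K_{n-a}$. Since wedging $\dd\alpha$ with a closed basic $b$-form $\varepsilon$ only permutes horizontal $\dd x$-factors, leaving the $\dd y$- and $\dd p$-legs untouched, the identical computation gives $\sharp_{a+b+1}(\dd\alpha\wedge\varepsilon)=C(\alpha,\varepsilon)\,\iota_\varepsilon U_\alpha \bmod K$, the nonzero constant $C(\alpha,\varepsilon)$ being the binomial factor from Proposition \ref{prop:Extension_to_Hamiltonian}. These are precisely the hypotheses of Theorem \ref{thm_sufficient_condition_subalgebra}, so special Hamiltonian forms close a subalgebra.
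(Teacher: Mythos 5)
Your proposal is correct and follows essentially the same route as the paper: decompose $\dd\alpha$ into horizontal, $\dd y$-type and $\dd p$-trace components according to the generators of $S^{a+1}$, wedge against constant-coefficient basic forms to show the off-diagonal $\dd p^\mu_i\wedge\dd^{n-1}x_\nu$ terms force the trace coefficients to vanish, and then verify the hypotheses of Theorem \ref{thm_sufficient_condition_subalgebra} by exhibiting the explicit multivector $U_\alpha$ built from $\partial/\partial p^{\mu_j}_i$ wedged with the complementary coordinate multivector. The only difference is presentational (you make the $a=n-1$ case and the description of $S^{a+1}$ explicit where the paper treats them as clear).
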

\begin{proof} The second claim is clear. It only remains to show that if $a = 1, \dots, n-1$ and $\alpha \in \widetilde \Omega^{a-1}_H(M)$, then $\dd \alpha$ is necessarily in $ \langle \dd^a x_{\widetilde \mu}\,,  \dd y^i \wedge \dd^{a-1} x_{\widetilde \nu}\, \colon  |\widetilde \mu| = n - a\,, |\widetilde \nu| = n - a - 1\rangle$. Suppose that 
\[
\dd \alpha = A^{\widetilde \mu} \dd^a x_{\widetilde \mu} + A_i^{\widetilde \mu} \dd y^i \wedge \dd^{a-1} x_{\widetilde \mu} + A^{i\widetilde \mu} \dd p^\mu_i \wedge \dd ^{a-1} x_{\widetilde \mu \mu}\,.
\]
Imposing the condition of being special Hamiltonian with $\varepsilon = \dd^{n-a}x^{\widetilde\nu}$, we get that
\[
\varepsilon \wedge \dd \alpha = A^{\widetilde \nu} \dd^n x + (-1)^{n-a} A^{\widetilde\mu}_i\delta^{ \widetilde \nu \mu}_{\widetilde \mu} \dd y^i \wedge \dd^{n-1}x_\mu + (-1)^{n-1} A^{i \widetilde \mu} \delta^{\widetilde \nu \nu}_{\widetilde \mu \mu}\dd p^\mu_i \wedge \dd^{n-1}x_\nu\in S^n\,.\]

The first two terms clearly satisfy this condition. The last one can be expanded as
\[
(-1)^{n-a}A^{i \widetilde \mu} \delta^{\widetilde \nu}_{\widetilde \mu} \dd p^\mu_i \wedge \dd^{n-1}x_\mu + (-1)^{n-a}\sum_{\mu \neq \nu}A^{i \widetilde \mu} \delta^{\widetilde \nu \nu}_{\widetilde \mu \mu}\dd p^\mu_i \wedge \dd^{n-1}x_\nu\,,
\]
and since $a < n$, we can choose $\widetilde \nu$ so that $\delta^{\widetilde\nu}_{\widetilde\mu} = 0$ so we obtain $A^{i \widetilde \mu} = 0$, proving the result. In order to show that they close a subalgebra, we use Theorem \ref{thm_sufficient_condition_subalgebra} and assign to the form $\dd y^i \wedge \dd^{a-1}x_{\mu_1, \dots, \mu_{n+1-a}}$ the multivector
\[
U = \frac{1}{n+1-a} \sum_{j = 1}^{n+1-a} (-1)^{j-1}\pdv{p^{\mu_j}_i} \wedge \pdv{^{n-a}}{^{n-a}x^{\mu_1, \dots, \widehat{\mu_j}, \dots, \mu_{n+1-a}}}\,,
\]
which satisfies the required property. Indeed, contracting (without loss of generality) by $\varepsilon = \dd x^{\mu_1}$ on $U$ gives 
\[
\iota_{\varepsilon} U = \frac{1}{n+1-a} \sum_{j= 2}^{n+1-a} (-1)^j \pdv{p^{\mu_j}_i} \wedge \pdv{^{n-a}}{^{n-a}x^{\mu_2, \dots, \widehat{\mu_j}, \dots, \mu_{n+1-a}}}\,,
\]
which is a Hamiltonian multivector field for $\varepsilon \wedge \dd y^i \wedge \dd^{a-1}x_{\mu_1, \dots, \mu_{n+1-a}}$ up to a constant.
\end{proof}

\subsection{Bracket formalism for almost regular Lagrangians: Yang-Mills theories}

Let $\pi \colon Y \longrightarrow X$ denote a fibered manifold over an $n$-dimensional base with coordinates $(x^\mu, y^i)$, and let $\mathcal{L} : J^1 Y \longrightarrow \bigwedge^n X$ be a first order Lagrangian density (for higher order theories we refer to \cite{lr1985}), locally given by $\mathcal{L} = L(x^\mu, y^i, z^i_\mu) \dd^n x$.
The multisymplectic formalism is obtained by pullbacking the natural multisymplectic form on $\bigwedge^n_2 Y$ by the Legendre transformation, 
$
\Leg_\mathcal{L}: J^1 \pi \longrightarrow \bigwedge^n_2 Y\,
$
, which is locally given by 
\[
\Leg_\mathcal{L}^\ast p = L - \pdv{L}{z^i_\mu}z^i_\mu\,, \quad \Leg_\mathcal{L} ^\ast p^\mu_i = \pdv{L}{z^\mu_i}\,.
\]
The reduced Legendre transformation is obtained by composing with the projection $\bigwedge^n_2 Y \longrightarrow \bigwedge^n_2 Y \big /\bigwedge^n_1 Y$.
\[\begin{tikzcd}
	{J^1 \pi} && {\bigwedge^n_2 Y} \\
	&& {\bigwedge^n_2 Y \big / \bigwedge^n_1 Y}
	\arrow["{\Leg_\mathcal{L}}", from=1-1, to=1-3]
	\arrow["{\leg_\mathcal{L}}"', curve={height=12pt}, from=1-1, to=2-3]
	\arrow[from=1-3, to=2-3]
\end{tikzcd}.\]
When the reduced Legendre transformation defines a local diffeomorphism, we say that $\mathcal{L}$ is a \textit{regular} Lagrangian. In this case, locally, we are dealing with the case treated in the previous subsection. However, a great deal of field theories (defined by the corresponding Lagrangian density) are not regular. Nevertheless, when the image $\leg_\mathcal{L} \left(J^1 \pi \right)$ is a submanifold for which $\leg_\mathcal{L} \colon J^1 \pi \longrightarrow \Ima \leg_\mathcal{L}$ defines a submersion, then we may apply the constructions of the present paper:

\begin{enumerate}[\rm (i)]
    \item Use the construction of Theorem \ref{thm:uniqueness_of_pullback_and_pushforward} to induce a graded Dirac structure on $\Ima \leg_\mathcal{L}$ from the graded Dirac structure on $\bigwedge^n_2 Y \big / \bigwedge^n_1 Y$. This first step defines a bracket theory on the constraint submanifold.
    \item Let $h \colon \Ima \leg_\mathcal{L} \longrightarrow \Ima \Leg_\mathcal{L}$ be the diffeomorphism given by the inverse of the natural projection $\Ima \Leg_\mathcal{L} \longrightarrow \Ima \leg_\mathcal{L}$.
    \item Set $\mathcal{H} := h^\ast \Theta$ as the Hamiltonian, and study the dynamics employing the techniques developed above: existence of of special Hamiltonian forms and conserved quantities of arbitrary order.
\end{enumerate}
In order to illustrate this, let us study the particular case of Yang-Mills theories, where the Lagrangian is almost regular.

Let $X$ be an $n$-dimensional pseudo-Riemannian manifold, and let $P \longrightarrow X$ be a principal bundle with structure group $G$ acting freely and transitively on the fibers on the right. Suppose there is an $\operatorname{Ad}$-equivariant non-degenerate bilinear form $\langle \cdot , \cdot \rangle$ on $\mathfrak{g}$. Recall that a principal connection on $P$ is an Ehresmann connection that is $G$ invariant, namely a smooth choice of horizontal subspace $p \mapsto H_p$ satisfying $(R_g)_\ast H_p = H_{p \cdot g}$. Such connections can be identified with connection $1$-forms $\theta \in \Omega^{1}(P, \mathfrak{g})$, namely connections satisfying 
\begin{enumerate}[\rm(i)]
    \item For every $\xi \in \mathfrak{g}$, $\theta(\widetilde \xi) = \xi$, where $\widetilde \xi$ denotes the fundamental vector field associated to the right action of $G$ on $P$.
    \item For every $g \in G$, $(R_g)^\ast \theta = \operatorname{Ad}_{g^{-1}} \cdot\theta$.
\end{enumerate}

The curvature of $\theta$ is given by the $\mathfrak{g}$-valued $2$-form $\omega_\theta = D_\theta \,\theta$, where $(D_\theta \theta) (X, Y) = (\dd \theta) (X^h, Y ^h)$, $X^h$ and $Y^h$ denoting the horizontal components of $X$ and $Y$, respectively defined by the connection. Since $\omega_\theta$ is a semi-basic $2$-form on $P$, and it satisfies $(R_g)^\ast \omega_\theta = \operatorname{Ad}_{g^{-1}} \cdot \omega_\theta$, it induces a $2$-form $\widetilde \omega_\theta \in \Omega^2(M, \operatorname{Ad} P)$, where $\operatorname{Ad}_P = P \times_G \mathfrak{g}$ is the associated bundle to the adjoint representation of $G$. Since the chosen metric on $\mathfrak{g}$ is $\operatorname{Ad}$-equivariant, it induces a non-degenerate bilinear form on each fiber of $\operatorname{Ad}_P$, which we denote by $\langle\cdot , \cdot \rangle$ as well. This metric, together with the pseudo-Riemannian structure on $M$, allow us to introduce a scalar quantity \[
- \frac{1}{4}\langle \omega_\theta, \omega_\theta \rangle.
\]
Then one tries to solve the variational problem on the space of principal connection given by the following action
\[
- \frac{1}{4}\int_X \langle \omega_\theta, \omega_\theta \rangle \mu\,,
\]
where $\mu$ denotes the canonical volume form on $X$ induced by the pseudo-Riemannian metric.

In our setting, principal connections are given by the sections of the bundle of principal connections: $ Y = J^1(P \longrightarrow M) \big / G \longrightarrow X$ and the Lagrangian is given by 
\[\mathcal{L}: J^1 Y \longrightarrow \bigwedge^n X\, \qquad (j^1 \theta) \mapsto - \frac{1}{4}\langle \omega_\theta, \omega_\theta\rangle \mu\,.\]

If we have a local trivialization over an open subset $U \subseteq X$ given by certain section $s: U \longrightarrow P$, we have a natural identification
\[
Y|_U  = J^1(P \longrightarrow X) / G |_U \cong \T\,^\ast U \otimes \mathfrak{g}
\]
given by $\theta \mapsto s^\ast \theta$. Then, if $e_i$ denotes a basis on $\mathfrak{g}$ and $(x^\mu)$ denotes coordinates on $U$, we have induced coordinates on $Y |_U$ given by $(x^\mu, A^i_\mu)$ representing the $1$-form \[
\theta = A^i_\mu \dd x^\mu \otimes e_a\,.
\]
With this identification, $J^1 Y$carries natural coordinates $(x^\mu, A^i_\mu, A^i_{\mu, \nu})$. Then, the curvature reads
\[\omega_\theta = F^i_{\mu \nu}\dd x^\mu \wedge \dd x^\nu \otimes e_i =  \left( \pdv{A^i_\nu}{x^\mu} - \pdv{A^i_\mu}{x^\nu} + f^i_{jk} A^j_\mu A^k_\nu \right) \dd x^\mu \wedge \dd x^\nu \otimes e_i\,,\]
where $f^i_{jk}$ are the structure constants of the Lie algebra $\mathfrak{g}$,namely $[e_j, e_k] = f^{i}_{jk} e_i$. Then, the Lagrangian is locally given by 
\[\mathcal{L} = - \frac{1}{4} F^{\mu \nu}_i F^{i}_{\mu \nu} \sqrt{|g|} \dd^n x\,.\]
An elementary computation shows that the Legendre transformation is given by 
\[\Leg_\mathcal{L}^\ast p^{\mu \nu}_i  = \pdv{L}{A^i_{\mu, \nu}} =  F^{\mu \nu}_i \sqrt{|g|}\,, \quad \Leg_{\mathcal{L}}^\ast p = \pdv{L}{A^i_{\mu,\nu}} A^i_{\mu, \nu} - L =\left( - \frac{1}{4} F^{\mu \nu}_i F^i_{\mu \nu} + \frac{1}{2} f^i_{jk} F^{\mu \nu}_i A_\nu^j A_\mu^k\right) \sqrt{|g|} \] so that it defines the constraint submanifold $N = \{p^{\mu \nu}_i + p^{\nu \mu}_i = 0\}$ on $\bigwedge^n_2 Y$. We refer to \cite{sardanashvily_gauge_1993,Sardanashvily1995GeneralizedHamiltonian} for more information regarding the multisymplectic treatment of gauge theories.

Let us compute the induced graded Dirac structure on the image of the Legendre transformation, as well as the induced Hamiltonian and the algebra of special Hamiltonian forms:

\subsubsection{The induced graded Dirac structure on \texorpdfstring{$N = \Ima \Leg_{\mathcal{L}}$}{}}

As we know from Example \ref{example:multisymplectic_field_theory}, the graded Dirac structure on $\bigwedge^n_2 Y$ is given by $S^n = \langle\dd^n x\,, \dd A^i_\mu \wedge \dd^{n-1}x_\nu\,, \dd p^{\mu \nu}_i \wedge \dd^{n-1}x_\nu\rangle$ with
\[
\sharp_n(\dd^n x) = 0\,, \quad \sharp_n(\dd A^i_\mu \wedge \dd^{n-1}x_\nu) = - \pdv{p_a^{\mu \nu}}\, , \quad \sharp_n(\dd p^{\mu \nu}_i \wedge \dd^{n-1}x_\nu) =  \pdv{A^i_\mu}\,. 
\]
Recall that the pullback of a graded Dirac structure is given by the following steps:
\begin{enumerate}[\rm (i)]
    \item Identify those forms $\alpha \in S^n$ with $\sharp_n(\alpha)$ tangent to $N$,
    \item Define $\widetilde S^n$ to be generated by the pullback of said forms and let $\widetilde \sharp_n(\alpha) := \sharp_n(\alpha)$.
\end{enumerate}
In our case the family of forms $\alpha \in S^n$ such that $\sharp_n(\alpha)$ is tangent to the submanifold defined by the equations $p^{\mu\nu}_i + p^{\nu \mu}_i = 0$ is generated by the following forms:
\[
\dd^n x\,, \quad \dd A^i_\mu \wedge \dd^{n-1}x_\nu - \dd A^i_\nu \wedge \dd^{n-1}x_\mu\,,\quad \dd p^{\mu \nu}_i \wedge \dd^{n-1}x_{\nu}\,.
\]
Let us introduce coordinates $(x^\mu, A^i_\mu,\widetilde p^{\mu \nu}_i)$ on $N$, with $\widetilde p^{\mu \nu}_i = p^{\mu \nu}_i$, if $\mu < \nu$ and $\widetilde p^{\mu \nu}_i = - p^{\mu \nu}_i$, if $\mu > \nu$, so that \[\pdv{\widetilde p^{\mu \nu}_i} = \pdv{p^{\mu \nu}_i} - \pdv{p^{\nu \mu}_i}\,,\]
for $\mu < \nu$. Then, the graded Dirac structure on $N$ is locally given by $\widetilde S^n = \langle  \dd ^n x\,, \dd A^i_\mu \wedge \dd^{n-1}x_\nu - \dd A^i_\nu \wedge \dd^{n-1}x_\mu\,, \dd \widetilde p^{\mu \nu}_i \dd^{n-1}x_\nu - \dd \widetilde p^{\nu \mu}_i \wedge \dd^{n-1}x_\mu \rangle\,,$
where sum in the last term is assumed to happen for pairs of indices such that $\mu < \nu$ (and $\nu < \mu$). Let us make an abuse notation by writing
\[
 \dd \widetilde p^{\mu \nu}_i\wedge \dd^{n-1}x_\nu - \dd \widetilde p^{\nu \mu}_a \wedge \dd^{n-1}x_\mu = \dd \widetilde p^{\mu \nu}_i \wedge \dd^{n-1}x_\mu\,,
\]
and assuming that $\widetilde p^{\nu \mu}_i = - \widetilde p^{\mu \nu}_i$. Then, the $\widetilde \sharp_n$ maps are given by
\[
\widetilde \sharp_n(\dd^n x) = 0\,, \quad \widetilde \sharp_n(\dd A^i_\mu \wedge \dd^{n-1}x_\nu - \dd A^i_\nu \wedge \dd^{n-1}x_\mu) = -\pdv{\widetilde p^{\mu \nu}_i}\,, \quad \widetilde \sharp_n(\dd \widetilde p^{\mu \nu}_i \wedge \dd^{n-1}x_{\nu}) =  \pdv{A^i_\mu}\,.
\]

\subsubsection{Special Hamiltonian forms and the Hamiltonian}

From the computations of the previous subsection we conclude that the Hamiltonian is 
\[
\mathcal{H} = \left( - \frac{1}{4} \widetilde p^{\mu \nu}_i \widetilde p^{i}_{\mu \nu}+ \frac{1}{2} f^i_{jk}\widetilde p^{\mu \nu}_i A_\nu^j A_\mu^k\right) \dd^n x - \widetilde p^{\mu \nu}_i \dd A^{i}_\mu \wedge  \dd ^{n-1}x_\nu\,,
\]
and thus the equations of motion in coordinates read as
\begin{enumerate}[\rm (i)]
    \item Taking $\alpha = A^i_\mu \dd^{n-1}x_\nu - A^i_\nu \dd^{n-1}x_\mu$ we obtain
    \[ \left(\pdv{A^i_\mu}{x^\nu} - \pdv{A^i_\nu}{x^\mu} \right) \dd^n x = \psi^\ast (\dd \alpha) = \dd \alpha + \{\alpha, \mathcal{H}\} = \left( - \widetilde p^{i}_{\mu \nu} + f^i_{jk}A^j_\mu A^k_\nu\right) \dd^n x\,;\]
    \item Taking $\alpha = \widetilde p^{\mu \nu}_i \dd^{n-1}x_{nu}$ we obtain
    \[
    \pdv{\widetilde p^{\mu \nu}_i}{x^\mu} \dd^n x = \dd \alpha + \{\alpha, \mathcal{H}\} = - f^j_{ik} \widetilde p^{\mu \nu}_j A^j_k \dd^n x\,, 
    \]
\end{enumerate}
which are the well-known Yang--Mills equations. 

Let us now focus on obtaining (what we will show to be) the subalgebra of special Hamiltonian forms, by determining what values can their exterior derivatives take:
\begin{theorem} The exterior differentials of special Hamiltonian $a$-forms are generated by:
\begin{enumerate}[\rm (i)]
    \item For $0 \leq b \leq n - 3$:
    \[\sum_{j = 1}^a (-1)^{j-1} \dd A^i_{\nu_j} \wedge \dd^{a-1}x_{\nu_1, \dots, \widehat{\nu_j}, \dots, \nu_a}\,, \quad \dd^{b+1}x_{\mu_1 \cdots \mu_{n-b}}\,.\]
    \item For $b = n - 2$:
    \[\dd A^i_\alpha \wedge \dd^{n-1}x_{\beta \gamma} + \dd A^i_\gamma \wedge \dd^{n-1}x_{\alpha \beta}+ \dd A^i_\beta \wedge \dd^{n-1}x_{\gamma \alpha}\,, \quad \dd^{n-1}x_{\mu}\,, \quad \dd \widetilde p^{\mu \nu}_i \wedge \dd^{n-1}x_{\mu \nu} \,.\]
    \item For $b = n-1$: by the whole $\widetilde S^n$.
\end{enumerate}
\end{theorem}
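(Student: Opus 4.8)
The plan is to follow the same strategy as in the proof of Theorem~\ref{thm_ideal}, adapting it to the antisymmetric momenta $\widetilde p^{\,\mu\nu}_i = -\widetilde p^{\,\nu\mu}_i$ and to the generating set of $\widetilde S^n$ computed above. First I would reduce being special Hamiltonian to a single wedge condition. Since, by the analogue of Theorem~\ref{thm_ideal}(ii) in degree $n-1$, every $(n-1)$-Hamiltonian form on $N$ is special, a Hamiltonian $b$-form $\alpha$ is special precisely when, for every basic closed $(n-1-b)$-form $\varepsilon$, one has $\dd(\alpha\wedge\varepsilon)=\dd\alpha\wedge\varepsilon\in\widetilde S^n$ (using $\dd\varepsilon=0$). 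Working in the adapted coordinates $(x^\mu,A^i_\mu,\widetilde p^{\,\mu\nu}_i)$, and recalling from Definition~\ref{def:fibered_graded_Dirac} that $S^{b+1}$ consists of $b$-horizontal forms, so that $\dd\alpha$ is $b$-horizontal, I would write a general candidate
\[
\dd\alpha=\dd\alpha^{(0)}+\dd\alpha^{(A)}+\dd\alpha^{(p)},
\]
where $\dd\alpha^{(0)}$ is basic, $\dd\alpha^{(A)}$ gathers the terms with a single factor $\dd A^i_\mu$, and $\dd\alpha^{(p)}$ the terms with a single factor $\dd\widetilde p^{\,\mu\nu}_i$, all remaining factors being basic. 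Since it suffices to test against $\varepsilon=\dd x^{\widetilde\nu}$, the whole problem becomes fibrewise linear algebra: decide which $\dd\alpha$ satisfy $\dd\alpha\wedge\dd x^{\widetilde\nu}\in\widetilde S^n$ for all multi-indices $\widetilde\nu$ of length $n-1-b$.

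Next I would dispatch the three parts separately. The basic part always survives, since $\dd\alpha^{(0)}\wedge\dd x^{\widetilde\nu}$ is proportional to $\dd^n x\in\widetilde S^n$; this accounts for the purely basic generators $\dd^{b+1}x_{\mu_1\cdots\mu_{n-b}}$ and $\dd^{n-1}x_\mu$ in cases (i) and (ii). For the $\dd A$ part, each term of $\dd\alpha^{(A)}\wedge\dd x^{\widetilde\nu}$ is a combination of $\dd A^i_\mu\wedge\dd^{n-1}x_\sigma$, and the only such combinations lying in $\widetilde S^n$ are those in the span of the skew generators $\dd A^i_\mu\wedge\dd^{n-1}x_\nu-\dd A^i_\nu\wedge\dd^{n-1}x_\mu$; imposing this for all $\widetilde\nu$ forces the coefficients of $\dd\alpha^{(A)}$ to be totally skew in the base indices, which yields exactly the antisymmetric generators $\sum_j(-1)^{j-1}\dd A^i_{\nu_j}\wedge\dd^{a-1}x_{\nu_1\cdots\widehat{\nu_j}\cdots\nu_a}$ of case (i) and the cyclic generator $\dd A^i_\alpha\wedge\dd^{n-1}x_{\beta\gamma}+\dd A^i_\gamma\wedge\dd^{n-1}x_{\alpha\beta}+\dd A^i_\beta\wedge\dd^{n-1}x_{\gamma\alpha}$ of case (ii).

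The delicate part, and the main obstacle, is the momentum contribution $\dd\alpha^{(p)}$. Here I would use crucially that the only momentum generator of $\widetilde S^n$ is $\dd\widetilde p^{\,\mu\nu}_i\wedge\dd^{n-1}x_\nu$, in which the derivative index is contracted against the second index of the skew momentum. Wedging $\dd\alpha^{(p)}$ against $\dd x^{\widetilde\nu}$ collapses its basic factor to a single $\dd^{n-1}x_\lambda$, and membership in $\widetilde S^n$ becomes a system matching $\lambda$ against the momentum indices, subject to the antisymmetry of $\widetilde p$. When $b\le n-3$ the codegree $n-1-b\ge 2$ provides enough independent test directions $\widetilde\nu$ that this system has only the trivial solution, so $\dd\alpha^{(p)}=0$ and no momentum terms appear in case (i); when $b=n-2$ exactly one skew combination is compatible, namely $\dd\widetilde p^{\,\mu\nu}_i\wedge\dd^{n-1}x_{\mu\nu}$, giving the last generator of case (ii); and when $b=n-1$ no wedging is required, so $\dd\alpha$ ranges over all of $\widetilde S^n$, which is case (iii). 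Having matched the generators, the converse (each listed form is the differential of a special Hamiltonian form) follows by reading the same computations backwards, and the subalgebra property is then obtained exactly as in Theorem~\ref{thm_ideal}, by invoking Theorem~\ref{thm_sufficient_condition_subalgebra} with multivector fields built from the vertical momentum directions and horizontal multivectors that reproduce, up to $K_\bullet$, the $\sharp$-images of the relevant wedge products.
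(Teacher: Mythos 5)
Your proposal is correct and follows essentially the same route as the paper: expand $\dd\alpha$ in the adapted coordinates, reduce speciality to the condition $\dd\alpha\wedge\dd x^{\widetilde\nu}\in\widetilde S^n$ for all basic multi-indices, and solve the resulting fibrewise linear system (the paper implements your ``matching system'' by contracting against explicit elements of $(\widetilde S^n)^{\circ,n}$, forcing total skew-symmetry of the $\dd A$ coefficients and vanishing of the momentum coefficients whenever the codegree exceeds one). The only difference is presentational: the paper writes out the specific annihilator multivectors that produce each constraint, whereas you leave that step as asserted linear algebra.
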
 

\begin{proof} The last item is clear. Suppose $\alpha \in \widetilde \Omega^{a-1}_H(M)$, for $1 \leq a \leq n-1$ and write
\[
\dd \alpha = C^{\mu \widetilde\mu}  \dd A^i_\mu \wedge \dd ^{a-1}x_{\widetilde \mu} + D^{i \widetilde \mu}_{\mu\nu} \dd \widetilde p^{\mu \nu}_i \wedge \dd^{a-1} x_{ \widetilde \mu}\,.
\]
Let $\varepsilon = \dd ^{n-a} x^{\widetilde \nu}$. Then, since $\alpha$ is special Hamiltonian we must have 
\begin{align*}
    \varepsilon \wedge \dd \alpha =& (-1)^{n-a} C^{\mu  \widetilde \mu} \delta^{\widetilde \nu \alpha}_{ \widetilde \mu} \dd A^i_\mu \wedge \dd^{n-1}x_ \alpha + (-1)^{n-a} D^{i \widetilde \mu}_{\mu\nu} \delta^{\widetilde \nu \alpha}_{\widetilde \mu} \dd \widetilde p^{\mu \nu}_i \wedge \dd^{n-1}x_\alpha \in S^n\,.
\end{align*}
If $\varepsilon \wedge\dd \alpha\in S^n$, we necessarily have that $\iota_U \left(\varepsilon \wedge\dd \alpha \right) = 0$ for every $U \in (S^n)^{\circ, n}$. Hence, taking
\[
U = \pdv{A^i_\alpha} \wedge \pdv{^{n-1}}{^{n-1}x_\beta} + \pdv{A^i_\beta} \wedge \pdv{^{n-1}}{^{n-1}x_\alpha}
\]
and imposing $\iota_U(\varepsilon \wedge \dd \alpha) = 0$ we get that $C^{\alpha \widetilde \mu}_i \delta^{\widetilde \nu \beta}_{\widetilde \mu} + C^{\beta \widetilde \mu}_i \delta^{\widetilde \nu \alpha}_{\widetilde \mu} = 0$. Taking $\alpha = \mu_0$, $\widetilde \nu = \mu_1, \dots \mu_{a}$, and $\beta = \mu_{a+1}$, the previous equation reads as $C^{\mu_0, \dots, \mu_{a+1}}_i = (-1)^{a+1} C^{\mu_{a+1} \mu_0, \dots, \mu_a}_i$, which means (using that the multi-index $\widetilde \mu$ is skew-symmetric) that $C^{\mu \widetilde \mu}$ is skew-symmetric in all its indices and thus we may write
\[
C^{\mu \widetilde \mu}_i \dd A^i_\mu \wedge \dd^{a-1}x_{\widetilde \mu} = C^{\widetilde \nu}_i \delta^{\mu \widetilde \mu}_{\widetilde \nu} \dd A^i_\mu \wedge \dd^{a-1}x_{\widetilde \mu} = C^{\nu_1, \dots,\nu_a}_i \sum_{j = 1}^a (-1)^{j-1} \dd A^i_{\nu_j} \wedge \dd^{a-1}x_{\nu_1, \dots, \hat{\nu_j}, \dots, \nu_a}\,.
\] 
Now, contracting with 
\[ U = \pdv{\widetilde p_i^{\alpha \beta}} \wedge \pdv{^n}{^{n-1}x_\gamma}\,,\]
where $\gamma \neq \alpha$ and $\gamma \neq \beta$ we get the condition $D^{i\widetilde \mu}_{\alpha \beta} \delta^{\widetilde \nu \gamma}_{\widetilde \mu} = 0$. This means that $D^{i \widetilde \mu}_{\alpha, \beta} = 0$ whenever $\widetilde \mu$ contains an index different from $\alpha$ and $\beta$. This last observation clearly implies that all coefficients are zero when $|\widetilde \mu| \geq 3$ or, equivalently, when $a < n-1$. This implies the first item. Now, for the second, we only need to worry about the case $n = n-1$, and a quick computation shows that the only possibility is to have $C^{i \mu \nu}_{\alpha \beta} = \delta^{\mu \nu}_{\alpha \beta}C$, for certain constant $C$, so that $C^{i \mu \nu}_{\alpha \beta}  \dd\widetilde p^{\alpha \beta}_i \wedge \dd^{n-2}x_{\mu \nu}$ is necessarily a multiple of $\dd \widetilde p^{\alpha \beta}_i \wedge \dd^{n-2}x_{\alpha \beta}$, finishing the proof.
\end{proof}

\begin{theorem} In the Yang-Mills theory, special Hamiltonian forms define a subalgebra in the algebra of Hamiltonian forms.
\end{theorem}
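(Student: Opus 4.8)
The plan is to verify the hypotheses of Theorem \ref{thm_sufficient_condition_subalgebra}, proceeding exactly as in the regular case treated in Theorem \ref{thm_ideal}. Since the preceding theorem already describes explicitly the exterior derivatives of special Hamiltonian forms (split into the cases $0 \leq b \leq n-3$, $b = n-2$ and $b = n-1$), it suffices to assign to each generating type of $\dd \alpha$ a multivector field $U_\alpha \in \mathfrak{X}^{n-a}(M)$ satisfying $\widetilde \sharp_{a+1}(\dd \alpha) = U_\alpha + K_{n-a}$ together with the compatibility relation $\widetilde \sharp_{a+b+1}(\dd \alpha \wedge \varepsilon) = C(\alpha, \varepsilon)\, \iota_\varepsilon U_\alpha + K_{n-(a+b)}$ for a nonzero constant $C(\alpha, \varepsilon)$ and every closed basic $b$-form $\varepsilon$; by the argument of Theorem \ref{thm_sufficient_condition_subalgebra} it is enough to check this against basic one-forms $\varepsilon = \dd x^\sigma$ and iterate.

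First I would dispose of the basic generators $\dd^{b+1}x_{\widetilde \mu}$: these lie in $\tau^\ast\bigl(\Omega^{\bullet}(X)\bigr)$, so their images under $\widetilde \sharp$ lie in $K$ and one may take $U_\alpha = 0$; both sides of the compatibility relation then vanish and any nonzero $C(\alpha,\varepsilon)$ works. For the connection-type generator I would mirror the assignment of Theorem \ref{thm_ideal}: to the $a$-form
\[
\sum_{j=1}^a (-1)^{j-1}\, \dd A^i_{\nu_j} \wedge \dd^{a-1} x_{\nu_1, \dots, \widehat{\nu_j}, \dots, \nu_a}
\]
attach the totally antisymmetric multivector field built from the vector fields $\partial/\partial \widetilde p^{\mu\nu}_i$ wedged with the $(n-a)$-vectors $\partial^{n-a}/\partial^{n-a}x$, normalized so that $\widetilde \sharp_{a+1}(\dd \alpha) = U_\alpha$ modulo $K$. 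Here I would use $\widetilde \sharp_n(\dd A^i_\mu \wedge \dd^{n-1}x_\nu - \dd A^i_\nu \wedge \dd^{n-1}x_\mu) = -\,\partial/\partial \widetilde p^{\mu\nu}_i$ and the recursive description $\widetilde \sharp_a(\iota_U\alpha) = \widetilde \sharp_n(\alpha)\wedge U$ of Remark \ref{remark:structure_at_n_determines_structure}; the compatibility relation then reduces to computing $\iota_{\dd x^\sigma} U_\alpha$ and matching it with $\widetilde \sharp(\dd \alpha \wedge \dd x^\sigma)$ up to a combinatorial factor, just as in the regular computation.

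The delicate case is $b = n-2$, where two further generators appear. For $\dd \widetilde p^{\mu\nu}_i \wedge \dd^{n-1}x_{\mu\nu}$ I would read off $U_\alpha$ from $\widetilde \sharp_n(\dd \widetilde p^{\mu\nu}_i \wedge \dd^{n-1}x_\nu) = \partial/\partial A^i_\mu$, and for the Bianchi-type combination $\dd A^i_\alpha \wedge \dd^{n-1}x_{\beta\gamma} + \dd A^i_\gamma \wedge \dd^{n-1}x_{\alpha\beta} + \dd A^i_\beta \wedge \dd^{n-1}x_{\gamma\alpha}$ I would again antisymmetrize the momentum multivector fields, checking that the cyclic structure is compatible with the antisymmetry of $\widetilde p^{\mu\nu}_i$.

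The main obstacle I anticipate is purely combinatorial bookkeeping: because the Yang--Mills momenta $\widetilde p^{\mu\nu}_i$ are antisymmetric in $\mu,\nu$, the multivector fields $U_\alpha$ must be antisymmetrized accordingly, and one must confirm that the proportionality constant $C(\alpha,\varepsilon)$ never vanishes — this is precisely the point where the index antisymmetry could in principle produce a spurious cancellation. Once these constants are verified to be nonzero (which holds, as in the regular case, because contracting a basic direction removes exactly one term of the symmetric sum), Theorem \ref{thm_sufficient_condition_subalgebra} applies verbatim and shows that $\{\alpha,\beta\}$ is again special Hamiltonian, closing the subalgebra property.
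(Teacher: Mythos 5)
Your proposal is correct and follows essentially the same route as the paper: both verify the sufficient condition of Theorem \ref{thm_sufficient_condition_subalgebra} by assigning to each generator of the exterior derivatives of special Hamiltonian forms an explicit multivector field $U_\alpha$ (an antisymmetrized wedge of $\partial/\partial \widetilde p^{\mu\nu}_i$ with coordinate multivectors for the connection-type generators, and $\partial/\partial A^i_\alpha \wedge \partial/\partial x^\alpha$ for the momentum generator), then checking compatibility under contraction with closed basic forms. The only difference is that the paper writes out the normalizing constants explicitly while you leave them as bookkeeping to be verified, but the nonvanishing argument you give is the correct one.
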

\begin{proof} We just check the sufficient condition given by Theorem \ref{thm_sufficient_condition_subalgebra}. Indeed, as a quick computation shows, it is enough to choose:
\begin{enumerate}[\rm (i)]
    \item For the form $\sum_{j=1}^a (-1)^{j+1} \dd A^i_{\nu_j} \wedge \dd^{a-1} x_{\nu_1, \dots, \widehat{\nu_j}, \dots, \nu_a}$ we need to take
    \[
    U = \frac{1}{n+2-a} \sum_{i < j} (-1)^{i+j} \pdv{\widetilde p^{\nu_i \nu_j}_i} \wedge \pdv{^{n-a}}{^{n-a} x^{\nu_1, \dots, \widehat{\nu_i}, \dots, \widehat{\nu_j}, \dots, \nu_{n+2-a}}}\,,
    \]
    as an easy computation shows.
    \item For the form $\dd \widetilde p^{\alpha \beta}_i \wedge \dd^{n-2}x_{\alpha \beta}$, we need to take
    \[
    U = \pdv{A^i_\alpha} \wedge \pdv{x^\alpha}\,, 
    \]
   which another quick computation shows that satisfies the conditions of Theorem \ref{thm_sufficient_condition_subalgebra}.
\end{enumerate}
\end{proof}

Finally, to end our discussion, we compute the graded Poisson brackets of some local generators of the algebra of special Hamiltonian forms. The non zero brackets are given by 
\begin{align*}
    &\left\{\sum_{j=1}^{n+2-a} (-1)^{j+1} A^i_{\nu_j} \dd^{a-1} x_{\nu_1, \dots, \widehat{\nu_j}, \dots, \nu_{n+2-a}} , p^{\alpha \beta}_j \dd^{n-2}x_{\alpha \beta}\right\} &=& - \sum_{j= 1}^{n+2-a} (-1)^{j+1} \dd^{a-2} x_{\nu_1, \dots, \widehat{\nu_j}, \dots, \nu_{n+2-a}, \nu_j}\\&&
    =& (-1)^a (n+2-a) \delta^i_j\dd^{a-2}x_{\nu_1, \dots, \nu_{n+2-a}}\,,\\
    &\left\{\sum_{j=1}^{n+2-a} (-1)^{j+1} A^i_{\nu_j} \dd^{a-1} x_{\nu_1, \dots, \widehat{\nu_j}, \dots, \nu_{n+2-a}},  \widetilde p^{\mu \nu}_i \dd^{n-1}x_\nu\right\}&=& \sum_{j= 1}^{n+2-a} (-1)^{j+1} \delta^\nu _{\nu_j}\dd^{a-1} x_{\nu_1, \dots, \widehat{\nu_j}, \dots, \nu_{n+2-a}}\,.
\end{align*}

\section{Conclusions and further work}
\label{section:conclusions}

In this paper we have analyzed the extensions of graded Poisson brackets in classical field theories in order to give a description of the evolution of forms. This study leads to the definition of special Hamiltonian forms, a particular subfamily of Hamiltonian forms that is proved to be a subalgebra under certain hypotheses (which are showed to hold in regular field theories and Yang--Mills theories). This subalgebra is the subalgebra of forms with defined evolution. To showcase the theory we compute the algebra of special Hamiltonian forms (computing the possible exterior derivatives) and the brackets of some generators of this algebra in the case of regular Lagrangians and in the case of Yang--Mills theories, although the theory works with any almost regular Lagrangian.

There are several aspects that remain unsolved, and that we propose as interesting future questions which we were not able to answer at this moment:
\begin{enumerate}[\rm (i)]
    \item As we showed, the bracket of special Hamiltonian forms and Hamiltonians is independent of the extension chosen, so they seem a natural algebra (when in the hypotheses of Theorem \ref{thm_sufficient_condition_subalgebra}) to restrict the Poisson brackets to. We propose to investigate the properties of this algebra in general because, as we showed, this algebra can be associated to any almost regular classical field theory.
    \item Determining the evolution of arbitrary Hamiltonian forms suggests to study the relationship between these brackets and the constraint algorithm (see \cite{de_leon_pre-multisymplectic_2005} for the general geometric theory and \cite{Gomis_2023} for several examples). We believe that the theory behind the constraint algorithm would benefit from an analysis employing the extension of the brackets presented in our study.
    \item We would also find interesting to investigate the relationship of these constructions with the Poisson brackets of the canonical formalism (see \cite{ks1976commun.math.phys., Margalef_Bentabol_2022}).
    \item As we mentioned in Section \ref{section:fibered_graded_Dirac}, the tools developed may be of interest to the theory of reduction and momentum maps, which is still in full development in the case of classical fields (see \cite{Blacker_2021,blacker_reduction_2024} for recent advances in the case of reduction and \cite{callies_homotopy_2016, fregier_cohomological_2015} for advances in momentum maps).
\end{enumerate}

\section{Acknowledgements}
The authors acknowledge financial support from the Spanish Ministry of Science, Innovation and Universities under grants PID2022-137909NB-C21 and the Severo Ochoa Program for Centers of Excellence in R\&D (CEX2023-001347-S).

\phantomsection
{\Large \bf Data availability statement}

\addcontentsline{toc}{section}{Data availability statement}

This manuscript has no associated data.

\phantomsection
{\Large \bf Conflict of interest statement}

\addcontentsline{toc}{section}{Conflict of interest statement}

The authors have no conflict of interest to disclose.
\phantomsection
\addcontentsline{toc}{section}{References}
\printbibliography
\end{document}